\documentclass[journal,draftcls,onecolumn,12pt,twoside]{IEEEtran}

\usepackage[pdftex]{graphicx} 
\usepackage{bbm}

\usepackage{mathrsfs}  
\usepackage{amsmath, bm}

\usepackage{microtype}
\usepackage{graphicx}
\usepackage{authblk}
\usepackage[indent=10pt]{parskip}
\usepackage{enumerate}
\usepackage[normalem]{ulem}
\usepackage{hyperref}
\usepackage{upgreek}
\usepackage{amssymb}
\usepackage{mathtools}
\usepackage{amsthm}
\usepackage{xcolor}
\usepackage[thinc]{esdiff}
\usepackage{tikz}
\usetikzlibrary{arrows,calc,shapes,decorations.pathreplacing,patterns,chains,shapes.multipart,arrows.meta,positioning}
\usepackage{xcolor}
\usepackage{caption}
\usepackage{subcaption}
\usepackage{pgfplots}
\usepgfplotslibrary{fillbetween}
\usepackage{float}
\usepackage{algorithm}
\usepackage{algorithmic}

\newcommand{\E}{\mathbb{E}}
\newcommand{\Var}{\operatorname{Var}}
\newcommand{\dd}{\mathrm{d}}

\newcommand{\darrow}{\:\to_{\rm d}\:}
\newcommand{\step}[1]{\langle #1\rangle}
\newcommand{\bg}{\mathrm{bg}}

\newtheorem{theorem}{Theorem}
\newtheorem{corollary}[theorem]{Corollary}
\newtheorem{lemma}[theorem]{Lemma}
\newtheorem{proposition}[theorem]{Proposition}

\newtheorem{remark}{Remark}

\newtheorem{assumption}{Assumption}

\allowdisplaybreaks

\title{{Structure-Aware Online Optimization of the Age of Information with Unknown System Statistics}}
\author{Yoshiaki Inoue, Michel Mandjes, and Liron Ravner
\thanks{This work was supported in part by JSPS KAKENHI Grant Number
24K14839, Israel Science Foundation (ISF) Grant Number\ 1361/23, and by the NWO Gravitation project {\sc networks} under grant agreement no.\ 024.002.003.}
\thanks{
Y.\ Inoue is with 
Department of Information and Communications Technology, 
Graduate School of Engineering, Osaka University, Suita 565-0871, Japan 
(e-mail: yoshiaki@comm.eng.osaka-u.ac.jp).

M.\ Mandjes is with Mathematical Institute, Leiden University, P.O. Box 9512, 2300 RA Leiden, 
The Netherlands (e-mail: m.r.h.mandjes@math.leidenuniv.nl). He is also affiliated with Korteweg-de Vries Institute for Mathematics, University of Amsterdam, Amsterdam, The Netherlands; Eurandom, Eindhoven University of Technology, Eindhoven, The Netherlands; Amsterdam Business School, Faculty of Economics and Business, University of Amsterdam, Amsterdam, The Netherlands.

L.\ Ravner is with the Department of Statistics, University of Haifa, Israel (e-mail: lravner@haifa.stat.ac.il).
}}
\date{\today}

\begin{document}

\maketitle

\begin{abstract}
   Timely information dissemination is essential in modern communication systems, where low informational latency is critical for networked control, monitoring, and real-time decision-making. The Age of Information (AoI) has emerged as a fundamental metric for quantifying information freshness at the receiver. A key challenge in AoI optimization is selecting the update generation rate: infrequent updates lead to stale information, whereas overly frequent updates increase queueing delays. We develop a structure-aware, data-driven framework for adaptive AoI optimization modeled as single-server queues with unknown system statistics. Our key idea is to exploit analytical representations of the AoI gradient, which depend on a finite-dimensional collection of system quantities that can be estimated from observations collected during system operation. This reduces the online learning problem to estimating the quantities required by the gradient, rather than learning the AoI objective as a black box. We propose an online stochastic approximation algorithm that jointly estimates these quantities and optimizes the update generation rate. We prove almost sure convergence of the algorithm to the optimal update generation rate under mild regularity and moment conditions, despite the temporal dependence induced by the underlying queueing process. We demonstrate the framework in four representative incomplete-information settings, including FCFS queues with unknown service characteristics or hidden background traffic, a preemptive LCFS system with partial observations, and a generate-at-will system with threshold waiting.
\end{abstract}

\section{Introduction}

Modern communication systems are evolving from platforms for reliable data delivery to infrastructures for continuous information sharing. Rather than transferring a file or message once, many emerging applications require a remote monitor to maintain an accurate and up-to-date view of a dynamic physical process. Representative examples include industrial automation, intelligent transportation systems, environmental monitoring, and Internet-of-Things (IoT) applications. In these applications, decisions are made based on the most recently available information, making information freshness a primary system objective.

The \emph{Age of Information} (AoI), introduced as a measure of information freshness \cite{Kaul2012,Sun2017,Yates2019,YatesSun2021}, has emerged as a fundamental performance metric for quantifying timeliness. At any given time, the AoI represents the elapsed time since the generation of the most recently received update. Unlike classical metrics such as throughput or packet delay, AoI captures the combined effects of update generation, queueing, transmission, and service. Over the past decade, AoI has motivated an extensive literature on scheduling policies \cite{Bedewy2018LGFS,Kaul2012}, queueing systems \cite{Inoue2019,Inoue2025,Tripathi2019}, resource allocation \cite{NinoMora2023,Sun2019Sampling}, and energy-constrained communication networks \cite{Qin2024EnergyAoI}.

A fundamental design problem in AoI-aware communication systems is to determine how frequently status updates should be generated. Generating updates too infrequently leads to stale information, whereas generating them too aggressively increases congestion and queueing delays. Consequently, the average AoI is typically minimized at an intermediate operating point that balances these opposing effects \cite{Kaul2012,Sun2016UpdateOrWait,Yates2019}. Analytical work on AoI characterizes this optimum for a variety of queueing systems under the assumption that the underlying system parameters, such as service-time distributions, channel characteristics, and traffic statistics, are known \emph{a priori} \cite{Inoue2019}. In practice, however, these quantities are often unknown or slowly varying over time, making purely model-based optimization difficult.

This limitation has motivated a growing literature on AoI optimization and sampling under unknown system statistics. Tang et al.~\cite{TangChenWangYangTassiulas2023} consider age-optimal sampling when the delay distribution is unknown and develop an online Robbins--Monro procedure that learns the optimal sampling policy directly from observed transmission delays. Related work considers remote estimation and sampling when channel-delay statistics are unknown, again exploiting stochastic approximation to adapt the sampling policy from observations \cite{ChenWangTangTassiulas2023}. More generally, learning-based approaches have been developed for AoI minimization when channel or system dynamics are unknown, including reinforcement-learning formulations that learn age-optimal policies from operational data \cite{Sun2017,YatesSun2021}. These studies demonstrate that online learning can remove the need for a priori knowledge of system statistics. Our approach differs in that it explicitly exploits an analytical representation of the AoI objective: rather than learning the objective or optimal policy as a black box, we estimate the finite-dimensional quantities appearing in its gradient and optimize using this structural representation.

We build on this perspective by exploiting the analytical structure of the AoI objective. Our key observation is that, for a broad class of queueing systems, queueing analysis provides an explicit representation of the gradient of the average AoI with respect to the update generation rate. Crucially, the gradient depends only on a finite-dimensional collection of system quantities, each of which can be estimated from observations collected during system operation. Thus, rather than learning the entire AoI objective or treating the system as a black box, we reduce the statistical learning problem to estimating the specific quantities required by the analytical gradient and directly optimize using these estimates.

We consider AoI optimization over communication channels modeled as single-server queues with unknown system parameters. We propose an online stochastic approximation framework that jointly estimates the unknown quantities appearing in the AoI gradient and optimizes the update generation rate. By combining the analytical gradient representation with online estimation, the algorithm operates directly on observations available at the receiver and requires no prior knowledge of the unknown system parameters. This structure-aware approach avoids treating the queueing system as a black box while remaining applicable when its statistical characteristics are initially unknown or slowly varying over time \cite{Borkar2008,Bottou2018,RobbinsMonro1951,KY2003}.

The proposed framework presents a number of technical challenges. Unlike classical stochastic gradient methods, the observations used for parameter estimation are generated by an evolving queue and are therefore temporally dependent through the underlying queueing dynamics. Consequently, standard stochastic approximation arguments based on independent or exogenous observations do not directly apply. Our analysis explicitly accounts for this endogenous temporal dependence and establishes rigorous convergence guarantees for the resulting online optimization algorithm.

The main contributions of this paper are summarized as follows:

\begin{itemize}

\item We formulate a general online AoI optimization problem in which the update generation rate affects both packet generation and queueing dynamics, while the relevant system statistics are unknown.

\item We propose a structure-aware stochastic approximation algorithm that exploits analytical representations of the AoI gradient. Unlike black-box learning approaches, the proposed method estimates only a finite-dimensional collection of unknown quantities appearing in the gradient formula, rather than learning the entire system dynamics.

\item We establish almost sure convergence of the proposed algorithm to the optimal update generation rate under mild regularity conditions. Unlike standard stochastic approximation settings with exogenous noise, our analysis must account for the intricate endogenous temporal dependence induced by the evolving queueing process, whose state is itself coupled to the update generation rate being optimized.

\item We demonstrate the versatility of the proposed framework through four representative incomplete-information settings, including systems with unknown service-time distributions, hidden background traffic, partial observations, and threshold-based update policies. These examples illustrate that substantially different queueing and observation structures can be treated within a unified optimization framework.

\end{itemize}

The rest of this paper is organized as follows. Section~\ref{sec:model}
introduces the system model, formulates the optimization problem, and
presents representative examples. Section~\ref{sec:alg} develops the
proposed algorithm and establishes its convergence to the optimal update
generation rate. Section~\ref{sec:application} shows that the examples are
covered by the general framework. Section~\ref{sec:num} presents numerical
experiments, and Section~\ref{sec:conclusion} discusses the results and
outlines directions for future work.

\section{Model, problem statement, leading examples}\label{sec:model}

In this section, we first introduce the general information update system and fix the notation used throughout the paper. We then formulate the structure-aware, data-driven AoI optimization problem considered in this paper. The section concludes with four representative examples illustrating how the framework applies to different queueing and observation settings.


\subsection{Information Update System}

We consider an information update system that is composed of a sender (sensor), a
communication channel or processor (server), and a receiver (monitor). The
sensor intermittently generates information packets and transmits them to the
server, which processes each packet and forwards it to the monitor upon
completion of service.

In this general formulation, the key timing variables are the
packet-generation intervals $G_n$ ($n = 1,2,\ldots$) and the system delays
$D_n$ ($n = 0,1,\ldots$). The interval $G_n$ denotes the elapsed time between
the $(n-1)$-st and $n$-th packet generations, whereas $D_n$ represents the
end-to-end delay experienced by the $n$-th packet from its generation to its
reception at the monitor.

Let $(\alpha_n)_{n=0,1,\ldots}$ denote the packet-generation times and
$(\beta_n)_{n=0,1,\ldots}$ the corresponding reception times. These sequences
are defined recursively as
\begin{align*}
\alpha_n &:= \alpha_{n-1} + G_n, \quad n = 1,2,\ldots,\\
\beta_n &:= \alpha_n + D_n, \quad n = 0,1,\ldots,
\end{align*}
with a given initial generation time $\alpha_0$.

In general, not all generated packets contribute new information to the
monitor. In particular, if a packet is overtaken by a more recently generated
packet, it does not trigger a monitor update. We therefore define $\mathcal{I}$
as the set of {\it informative} packets:
\[
\mathcal{I}
:= \{n \in \{0,1,\ldots\};\, \beta_n < \min\{\beta_{n+1}, \beta_{n+2}, \ldots\}\}.
\]
We introduce the subsequences
$(\alpha_{\ell}^{\dag})_{\ell=0,1,\ldots}$ and
$(\beta_{\ell}^{\dag})_{\ell=0,1,\ldots}$ of
$(\alpha_n)_{n=0,1,\ldots}$ and $(\beta_n)_{n=0,1,\ldots}$,
respectively, corresponding to the informative packets:
\begin{align*}
\alpha_{\ell}^{\dag} &:= \alpha_{n(\ell)}, 
\quad
\beta_{\ell}^{\dag} := \beta_{n(\ell)}, \qquad \ell = 0,1,\ldots,
\end{align*}
where $n(\ell)$ denotes the $\ell$-th smallest element of $\mathcal{I}$.
We define $G_{\ell}^{\dag}$ and $D_{\ell}^{\dag}$ as the
inter-generation time and system delay of informative packets:
\begin{align*}
G_{\ell}^{\dag} &:= \alpha_{\ell}^{\dag} - \alpha_{\ell-1}^{\dag},
\quad
\ell = 1,2,\ldots,
\\
D_{\ell}^{\dag} &:= 
\beta_{\ell}^{\dag} - \alpha_{\ell}^{\dag} 
,
\quad
\ell = 0,1,\ldots.
\end{align*}

We assume $\beta_0^{\dag} \leq 0$ without loss of generality.
Let $M_t$ denote the index of the latest informative packet received before time
$t$:
\begin{equation}
M_t := \max\left\{\ell \in \{0,1,\ldots\};\, \beta_{\ell}^{\dag} \leq t \right\},
\quad
t \geq 0.
\label{eq:M_t-def}
\end{equation}
The AoI at time $t$ is then given by
\begin{align*}
A_t &:= t - \alpha_{M_t}^{\dag}
= t - \max\left\{\alpha_{\ell}^{\dag},\, \ell \in \{0,1,\ldots\};\, \beta_{\ell}^{\dag} \leq t\right\},
\quad
t \geq 0.
\end{align*}
Notice that the assumption $\beta_0^{\dag} \leq 0$ above is necessary for the
AoI to be well-defined for each $t \geq 0$.

\begin{remark}
The above construction presents the AoI process in a general form through the
sequences of packet-generation intervals and system delays,
$(G_n,D_n)_{n=0,1,\ldots}$. In specific queueing models, however, a system is
typically specified through its stochastic primitives, such as interarrival
times, service times, and the service discipline, which induce the
sequences of generation and reception times
$(\alpha_n,\beta_n)_{n=0,1,\ldots}$. Several concrete examples of such queueing
models are presented in Section~\ref{ssec:examples}.
\end{remark}

For the time-average of the AoI, it is known \cite{Kaul2012} that, almost surely (a.s.),
\[
\frac{1}{\tau}\int_0^{\tau} A_t \,\dd t
=
\frac{M_{\tau}}{\tau}
\cdot
\frac{1}{M_{\tau}}
\sum_{\ell=1}^{M_{\tau}}
\left(
G_{\ell}^{\dag} D_{\ell}^{\dag} + \frac{(G_{\ell}^{\dag})^2}{2}
\right)
+
O(1/\tau),
\]
as $\tau \to \infty$.
In particular, under suitable regularity conditions, we have
\begin{equation}
\E[A] = 
\frac{\E[G_{\ell}^{\dag} D_{\ell}^{\dag}]}{\E[G^{\dag}]} + \frac{\E[(G^{\dag})^2]}{2\E[G^{\dag}]},
\label{eq:EA-by-GD}
\end{equation}
where $A$ and $G^{\dag}$ denote generic random variables for $A_t$ and
$G_{\ell}^{\dag}$.
Note that we have included the subscript $\ell$ in $\E[G_{\ell}^{\dag} D_{\ell}^{\dag}]$ 
to indicate the dependence between the random objects $G_{\ell}^{\dag}$ and $D_{\ell} ^{\dag}$.

\subsection{Problem Statement}
\label{ssec:problem}

In this paper, we consider a setting in which the generation intervals
$(G_n)_{n=1,2,\ldots}$ are
parametrically controlled, and in which the delay dynamics depend on that control
parameter. Specifically, we make the following assumptions:

\begin{enumerate}
\item \textit{Parametric generation interval.}
The inter-generation times $(G_n)_{n=1,2,\ldots}$ are governed by a parametric family indexed by a scalar parameter $\lambda$. The monitor can adjust $\lambda$ on a time scale slower than the packet-generation and service dynamics, so that $\lambda$ acts as a slowly varying control parameter relative to the underlying queueing process. The parameter $\lambda$ therefore serves as a controllable knob for regulating the generation rate. To make the dependence on $\lambda$ explicit, we rewrite the time-average AoI \eqref{eq:EA-by-GD} as
\begin{equation}\label{eq:EA-by-GD-lambda}
    \E_{\lambda}[A]
    = \frac{\E_{\lambda}\bigl[G_{\ell}^{\dag} D_{\ell}^{\dag}\bigr]}{\E_{\lambda}\bigl[G^{\dag}\bigr]}
    + \frac{\E_{\lambda}\bigl[(G^{\dag})^2\bigr]}{2\,\E_{\lambda}\bigl[G^{\dag}\bigr]}.
\end{equation}

\item \textit{Unknown dependence of delay dynamics on the generation parameter.}
The system delays $(D_n)_{n=0,1,\ldots}$ are influenced by the generation parameter $\lambda$ in a nontrivial manner, i.e., their probability law depends on $\lambda$. However, the exact form of this dependence is not fully known to the controller. Our objective is to choose $\lambda$ so as to minimize the time-average AoI \eqref{eq:EA-by-GD-lambda}, despite having only limited information about how $\lambda$ affects the delay dynamics.

\item \textit{Structural knowledge of the AoI gradient and estimability from informative observations.}
Although the full dependence of $D_n$ on $\lambda$ is unknown, we assume
that the gradient of the mean AoI with respect to
$\lambda$ admits the representation
\begin{equation}
\frac{\partial}{\partial \lambda}\, \E_{\lambda}[A]
= g(\lambda;\bm{\theta}_{\lambda}),
\label{eq:g(lambda,theta)-def}
\end{equation}
where $g: \mathbb{R}\times \mathbb{R}^d \to \mathbb{R}$ (for some $d\in{\mathbb N}$) denotes a {\it known} function and 
$\bm{\theta}_{\lambda} \in \mathbb{R}^d$ denotes a parameter dependent on $\lambda$.
Thus, the unknown system characteristics enter the optimization problem only through the finite-dimensional vector $\bm{\theta}_\lambda$.
Moreover, we assume that $\bm{\theta}_{\lambda}$ can be consistently estimated
from the observed sequences of informative packets, namely from the
samples of $(G_{\ell}^{\dag}, D_{\ell}^{\dag})_{\ell = 1,2,\ldots}$. This enables
data-driven adjustment of $\lambda$ based solely on monitor-side
observations.
\end{enumerate}
The assumptions introduced above are satisfied by a broad class of models; several representative examples are discussed in Section \ref{ssec:examples}.

In our setup, the receiver has access to the exact timestamps $\alpha_{\ell}^{\dag}$ and $\beta_{\ell}^{\dag}$ of the $\ell$-th informative packet upon its reception. Equivalently, after the $L$-th informative packet has been received, the receiver can reconstruct the subsequences $(G_{\ell}^{\dag})_{\ell=1,2,\ldots,L}$ and $(D_{\ell}^{\dag})_{\ell=1,2,\ldots,L}$ of inter-generation times and system delays.
Let $\ell^{\step{k}}$ {(for $k=1,2,\ldots$)} denote the $k$-th decision epoch. Upon reception of the $\ell^{\step{k}}$-th packet, the receiver selects the value $\lambda^{\step{k}}$ of the generation parameter, which is then used by the sender until the next decision epoch $\ell^{\step{k+1}}$. The sequence $(\ell^{\step{k}})_{k=0,1,\ldots}$ is strictly increasing, i.e.,
\[
0 \le \ell^{\step{k}} < \ell^{\step{k+1}}, \qquad k=0,1,\ldots,
\]
with $\ell^{\step{0}} = 0$ for convenience. The initial value $\lambda^{\step{0}}$ is chosen at time $t=0$.

It is then verified that the $k$-th decision epoch $\ell^{\step{k}}$
corresponds to time $t=t^{\step{k}}$ defined as
\[
t^{\step{k}} := \beta_{\ell^{\step{k}}}^{\dag},
\;\;
k = 0,1,\ldots.
\]
Let $\overline{A}_{\tau}$ ($\tau > 0$) denote the cumulative AoI over a time interval $[0,\tau)$:
\[
\overline{A}_{\tau} = \int_0^{\tau} A_t\, \dd t.
\]
By definition, we have
\begin{align*}
\overline{A}_{t^{\step{m}}}
&=
\int_0^{t^{\step{1}}} A_t\, \dd t
+
\sum_{k=1}^{m-1}
\int_{t^{\step{k}}}^{t^{\step{k+1}}} A_t\, \dd t
=
\sum_{k=0}^{m-1} \overline{A}^{\step{k}},
\end{align*}
where
\[
\overline{A}^{\step{0}} := \int_0^{t^{\step{1}}} A_t\, \dd t,
\quad
\overline{A}^{\step{k}}
:=
\int_{t^{\step{k}}}^{t^{\step{k+1}}} A_t\, \dd t,
\;\;
k = 1,2,\ldots.
\]
Therefore, assuming $t^{\step{k}} \to \infty$ as $k \to \infty$, the time-averaged AoI in this setting is given by
\[
\limsup_{\tau \to \infty}\frac{1}{\tau}\int_0^{\tau} A_t\, \dd t
=
\limsup_{\tau \to \infty}
\frac{1}{\tau}\sum_{k=0}^{M_{\tau}-1} 
\overline{A}^{\step{k}},
\]
where $M_t$ is as defined in \eqref{eq:M_t-def}.

From this viewpoint, we consider a dynamic optimization problem in which the
objective combines the long-run average AoI with a cost associated with the
generation rate. Specifically, we define the long-run average cost as
\begin{align}
Q:=
\limsup_{\tau \to \infty}
\frac{1}{\tau}\sum_{k=0}^{M_{\tau}-1}
\left\{
\overline{A}^{\step{k}}
+
(t^{\step{k+1}} - t^{\step{k}})\,\kappa(\lambda^{\step{k}})
\right\},
\label{eq:objective}
\end{align}
where $\kappa(\lambda)$ represents the cost per unit time associated with
using the generation parameter $\lambda$.
As mentioned above, the subsequences
$(G_{\ell}^{\dag})_{\ell=1,2,\ldots,\ell^{\step{k}}}$
and $(D_{\ell}^{\dag})_{\ell=1,2,\ldots,\ell^{\step{k}}}$ of inter-generation
times and system delays are the only information available when determining
$\lambda^{\step{k}}$.
Throughout this paper, we assume that the generation cost $\kappa(\lambda)$ is
a non-decreasing convex function of $\lambda$. A particular example is a
linear function $\kappa(\lambda) = C\lambda$, where $C>0$ controls the
tradeoff between the generation rate and the AoI.

When $\kappa$ is differentiable, the structural representation
\eqref{eq:g(lambda,theta)-def} also yields the gradient of the corresponding
stationary long-run objective. In particular,
\begin{equation}
\frac{\partial}{\partial\lambda}
\left\{
\E_{\lambda}[A] + \kappa(\lambda)
\right\}
=
g(\lambda;\bm{\theta}_{\lambda}) + \kappa'(\lambda).
\notag
\end{equation}
Thus, the unknown component of the gradient remains confined to the
finite-dimensional vector $\bm{\theta}_{\lambda}$, while the contribution
from the generation cost is known.

\subsection{Examples}
\label{ssec:examples}

In this subsection, we present several examples of the general problem setting introduced in Section~\ref{ssec:problem}. These examples illustrate that the chosen framework is sufficiently general to encompass a range of relevant scenarios.

\begin{itemize}
\item \textit{Example 1: FCFS dedicated channel with unknown service-time distribution}\\
Consider a first-come first-served (FCFS) single-server queue where the sensor generates packets according to a Poisson process with rate $\lambda$ and service times are independent and identically distributed (i.i.d.), i.e., the FCFS M/GI/1 queue.
Suppose that the monitor is to optimize the packet generation rate $\lambda$ of the sensor, without knowledge on
the service time distribution.

Although the exact mean AoI $\E_{\lambda}[A]$ as a function of $\lambda$ is not available due to the absence of the information about the service-time distribution, we can still utilize structural knowledge from analytical results.
More specifically, let $H$ denote a generic random variable for service times and let $f_H^*(s) := \E[e^{-sH}]$ ($s \geq 0$) denote the corresponding Laplace-Stieltjes transform (LST).
Under the stability condition $\rho := \lambda \E[H] < 1$, the mean AoI in this model is given by \cite{Inoue2019}
\begin{equation}
\E_{\lambda}[A] 
= 
\frac{\lambda\, \E[H^2]}{2(1-\lambda\,\E[H])}
+
\E[H]+ \frac{1-\lambda\,\E[H]}{\lambda\,f_H^{\star}(\lambda)}.
\label{eq:EA}
\end{equation}
We can then verify that the gradient of $\E_{\lambda}[A]$
is expressed in the form \eqref{eq:g(lambda,theta)-def} with
\begin{equation}
g(\lambda;{\boldsymbol \theta}_\lambda)=
\frac{\E[H^2]}{2(1-\lambda\,\E[H])^2}
-\frac{f_H'(\lambda)}{\lambda\{f_H^*(\lambda)\}^2}
-\frac{1}{\lambda^2\,f_H^*(\lambda)}
+\frac{\E[H]f_H'(\lambda)}{\{f_H^*(\lambda)\}^2},
\label{eq:psi_grad-0}
\end{equation}
where
\begin{align}
f_H'(s) &:= \frac{\dd}{\dd s}f_H^*(s),
\label{eq:f_H'-def}
\\
{\boldsymbol \theta}_\lambda&\equiv\left(\E[H],\E[H^2],f_H^*(\lambda),  f_H'(\lambda)\right).
\nonumber
\end{align}
Observe that the first two parameters in ${\boldsymbol\theta}_{\lambda}$ do not
depend on the generation rate $\lambda$, while the latter two vary if the rate
changes. Thus, although the service-time distribution is unknown, the AoI
gradient depends only on the finite-dimensional vector
${\boldsymbol\theta}_{\lambda}$, whose components can be estimated from
informative-packet observations.
\hfill $\diamond$

\item \textit{Example 2: FCFS with background sources of unknown generation rate}\\
Consider an FCFS M+M/GI/1 queue, i.e., an M/GI/1 queue with a background M/GI stream. The tagged stream (resp.\ background stream) generates packets according to a Poisson process with rate $\lambda$ (resp.\ $\lambda_{\bg}>0$). Service times of packets are i.i.d.\ regardless of the source of the packets.
Suppose that the monitor knows the service time distribution but does not have access to information about arrival or departure events of packets from the background stream.
As a result, the generation rate $\lambda$ of the tagged stream should be optimized without any knowledge of $\lambda_{\bg}$.

Let $H$ denote a generic random variable for service times and $f_H^*(s)$ denote its LST as before. 
Under the stability condition
\begin{equation}\label{eq:stability}
\rho:=(\lambda+\lambda_{\bg})\E[H] < 1,
\end{equation}
the mean AoI $\E_{\lambda}[A]$ of the tagged stream is given by \cite{Inoue2024,Inoue2025}:
\begin{equation}\label{eq:EA_bg}
\E_{\lambda}[A] 
= 
\frac{(\lambda+\lambda_{\bg}) \E[H^2]}
{2\{1-(\lambda + \lambda_{\bg}) \E[H]\}}
+
\E[H]
+
\frac{\lambda_{\bg}\E[H]}{\lambda}
+
\frac{1-(\lambda+\lambda_{\bg})\E[H]}
{\lambda \zeta(\lambda,\lambda_\bg)},
\end{equation}
where $\zeta(\lambda,\lambda_\bg)$ denotes the unique solution of
\begin{equation}
\label{eq:Ex2FP}
x  = f_H^*(\lambda + \lambda_{\bg} - \lambda_{\bg} x),
\quad
0 \leq x \leq 1.
\end{equation}
The gradient is then given by
\begin{align*}
g(\lambda;{\boldsymbol \theta}_\lambda)
&=
\frac{\E[H^{2}]}{2\left(1-(\lambda+\lambda_{\bg})\E[H]\right)^{2}}
-\frac{\lambda_{\bg}\E[H]}{\lambda^{2}}
-\frac{\E[H]}{\lambda\,\zeta(\lambda,\lambda_\bg)}
\\
&\qquad\qquad{}-\frac{1-(\lambda+\lambda_{\bg})\E[H]}{\lambda^2 \zeta(\lambda,\lambda_\bg)^{2}}\,
\left(\zeta(\lambda,\lambda_\bg)+\lambda \zeta'(\lambda,\lambda_\bg)\right),
\\
\zeta'(\lambda,\lambda_\bg) 
&:= 
\frac{\partial}{\partial \lambda}
\zeta(\lambda, \lambda_\bg),
\end{align*}
where the parameter vector is  
\begin{align*}
{\boldsymbol \theta}_\lambda\equiv \left(\E[H],\E[H^2],\lambda_\bg,\zeta(\lambda,\lambda_\bg),  \zeta'(\lambda,\lambda_\bg)\right).
\end{align*}

Note that $\zeta(\lambda,\lambda_{\bg})$ can also be characterized in terms of the length $B_{\bg}$ of a busy period in an M/GI/1 queue with only background traffic load. Recall that, for each $s \geq 0$, the LST $f_{B_{\bg}}^*(s)$ of $B_{\bg}$ is given by the unique solution $x \in [0,1]$ of the fixed-point equation
\[
x = f_H^*(s + \lambda_{\bg} - \lambda_{\bg} x), \qquad 0 \leq x \leq 1.
\]
Therefore, $\zeta(\lambda,\lambda_\bg)$ and $\zeta'(\lambda,\lambda_\bg)$ are expressed as
\begin{align}
\zeta(\lambda,\lambda_\bg) &= f_{B_{\bg}}^*(\lambda),\qquad \zeta'(\lambda,\lambda_\bg)
= \frac{\dd}{\dd \lambda}f_{B_{\bg}}^*(\lambda)
\label{eq:zeta-by-f_B_bg}
\end{align}
under the condition \eqref{eq:stability}. Hence, even though the background
arrival rate $\lambda_{\bg}$ is unobserved, the AoI gradient admits the
finite-dimensional representation \eqref{eq:g(lambda,theta)-def}, with the
unknown quantities contained in ${\boldsymbol\theta}_{\lambda}$ and
estimable from observations of the tagged stream.
\hfill $\diamond$ 

\item \textit{Example 3: Preemptive LCFS System with unknown service-time distribution}\\
Consider a preemptive last-come first-served (LCFS) {M/GI/1} queue, where packets are
generated according to a Poisson process with rate $\lambda$ and service times
are i.i.d. When a new packet is generated while another packet is in service,
the older one is discarded and the newer one immediately starts service. In this
case, the monitor can observe only the service times of successfully received
packets. The distribution of these observed service times is generally biased.
Therefore, if the full service-time distribution is not known a priori, the
generation rate $\lambda$ should be optimized based on such partial
observations. 

In this model, the mean AoI is given by \cite{Inoue2019} 
\[
\E_{\lambda}[A] = \frac{1}{\lambda f_H^*(\lambda)},
\]
where $f_H^*(s)$ denotes the LST of the full service-time distribution.
The gradient is given by
\begin{align*}
g(\lambda; {\boldsymbol \theta}_\lambda)
&=
-\frac{f_H^*(\lambda) + \lambda f_H'(\lambda)}{\lambda^2 \{f_H^*(\lambda)\}^2},
\\
{\boldsymbol \theta}_\lambda 
&\equiv 
\left(f_H^*(\lambda), f_H'(\lambda)\right),
\end{align*}
with $f_H'(s)$ defined as \eqref{eq:f_H'-def}. Thus, despite having access only
to the service times of successfully received packets, the relevant
quantities in the AoI gradient can be estimated from these informative
observations, providing an instance of the proposed structure-aware framework
with partial observations.
\hfill $\diamond$

\item \textit{Example 4: Generate-at-will system with threshold waiting \cite{Sun2017}}\\
Consider a generate-at-will system in which 
the source has access to the idle/busy state of the server through ACK signals and it generates a new packet
only after the previous one has been delivered. Let $H_n$ denote the
service time of the $n$-th packet. After observing $H_n$, 
the source waits for an additional duration $Z_n := \xi_{\lambda}(H_n)
$, where 
\begin{equation}\label{eq:xi}
\xi_{\lambda}(x)
:=
\left\{
\begin{array}{l@{\quad}l}
\displaystyle\frac{1}{\lambda} - x, & x < \displaystyle \frac{1}{\lambda},
\\[2ex]
0, & x \geq \displaystyle \frac{1}{\lambda}.
\\
\end{array}
\right.
\end{equation}
In this case, the server is always idle at packet generation epochs, so every delivered packet is informative. We therefore have
\[
G_n^{\dag} = H_{n-1} + \xi_{\lambda}(H_{n-1}),
\qquad
D_n^{\dag} = H_n.
\]
Suppose that $(H_n)_{n=0,1,\ldots}$ are i.i.d.\ with mean $0 < \E[H] < \infty$, and let $H$ denote a generic copy of $H_n$.
For this threshold policy, the mean AoI is given by \cite{Sun2017}
\begin{align*}
\E_{\lambda}[A]
=
\frac{\E_{\lambda}[(G^{\dag})^2]}{2\E_{\lambda}[G^{\dag}]}
+
\E[H],
\end{align*}
where $G^{\dag}$ denotes a generic random variable corresponding to $G_n^{\dag}$. Consequently,
\[E_{\lambda}[G^{\dag}]
=
\E\!\left[H+\xi_{\lambda}(H)\right], \qquad
\E_{\lambda}[(G^{\dag})^2]
=
\E\!\left[(H+\xi_{\lambda}(H))^2\right].
\]
If $H$ has an absolutely continuous distribution, the gradient of the mean AoI $\E_{\lambda}[A]$ takes the form \eqref{eq:g(lambda,theta)-def}, where
\begin{align}
g(\lambda;{\boldsymbol \theta}_\lambda)
&=
\frac{
\Pr(H<1/\lambda)
}{
2\lambda^2 \E_{\lambda}[G^{\dag}]^2
}
\left(
\E_{\lambda}[(G^{\dag})^2]
-\frac{2}{\lambda}\E_{\lambda}[G^{\dag}]
\right)\notag
\\ \label{eq:gEx4}
&=
\frac{
\Pr(D^{\dag}<1/\lambda)
}{
2\lambda^2 \E_{\lambda}[G^{\dag}]^2
}
\left(
\E_{\lambda}[(G^{\dag})^2]
-\frac{2}{\lambda}\E_{\lambda}[G^{\dag}]
\right);
\end{align}
here 
\begin{equation}
{\boldsymbol \theta}_\lambda
\equiv 
\Bigl(
\E_{\lambda}[G^{\dag}],\,
\E_{\lambda}[(G^{\dag})^2],\,
\Pr(D^{\dag}<1/\lambda)
\Bigr).
\label{eq:thetaEx4}
\end{equation}
The vector ${\boldsymbol\theta}_{\lambda}$ is directly estimable from the
observed inter-generation times and system delays of informative packets.
Thus, this generate-at-will system also fits the general framework: the gradient can be represented explicitly in terms of a finite-dimensional parameter, while its
required components can be learned online from monitor-side observations.
\hfill $\diamond$

\end{itemize}
%

\section{Proposed Method and its Convergence}
\label{sec:alg}

\begin{figure} 
    \centering 
    \begin{tikzpicture}[>=Stealth,node distance=1.0cm]
        \node[draw,rounded corners,fill=lightgray,minimum width=2.4cm,minimum height=0.9cm] (sys) {\small Generate packets with $\lambda^{\step{k}}$};
        \node[draw,rounded corners,fill=blue!10,below=0.9cm of sys,minimum width=2.6cm,minimum height=0.9cm] (data) {\small Data $(G_\ell^{\dag},D_\ell^{\dag})_{\ell=\ell^{\step{k}}+1}^{\ell^{\step{k+1}}}$};
        \node[draw,rounded corners,fill=blue!10,right=1.4cm of data,minimum width=2.4cm,minimum height=0.9cm] (est) {\small Estimate $\hat{\theta}^{\step{k}}$};
        \node[draw,rounded corners,fill=red!10,right=1.4cm of est,minimum width=2.4cm,minimum height=0.9cm] (grad) {\small $g(\lambda^{\step{k}};\hat\theta^{\step{k}})$};
        \node[draw,rounded corners,fill=red!10,above=0.9cm of grad,minimum width=2.4cm,minimum height=0.9cm] (upd) {\small SGD update $\lambda^{\step{k+1}}$};
        
        \draw[->,thick] (sys) -- (data);
        \draw[->,thick] (data) -- (est);
        \draw[->,thick] (est) -- (grad);
        \draw[->,thick] (grad) -- (upd);
        \draw[->,thick] (upd) -- (sys);
    \end{tikzpicture}
    
    \caption{Schematic of the proposed algorithm. Iterations continue until a predefined stopping condition is satisfied. }
    \label{fig:SGD_overview}
\end{figure}

{In this section, we develop a stochastic gradient descent (SGD) algorithm to minimize the objective function $Q$ given by \eqref{eq:objective}; see Fig.\ \ref{fig:SGD_overview} for a pictorial illustration of the proposed procedure. 
The key idea is to exploit the analytical gradient representation in
\eqref{eq:g(lambda,theta)-def}: rather than estimating the AoI objective
itself, the algorithm estimates the finite-dimensional parameter vector
$\bm{\theta}_{\lambda}$ that determines the gradient and uses the resulting
estimate to update the generation rate.
Recall that $\lambda^{\step{k}}$ ($k=0,1,\ldots$) denotes the generation rate
selected at the $k$-th decision epoch. Over the interval between the $k$-th and $(k+1)$-st decision epochs, the generation rate is held fixed at $\lambda^{\step{k}}$. Based on the informative packets received during this interval, we estimate
the finite-dimensional parameter vector $\bm{\theta}_{\lambda^{\step{k}}}$
that enters the analytical gradient representation \eqref{eq:g(lambda,theta)-def}:
\begin{equation}
\hat{\bm{\theta}}^{\step{k}} 
:= 
\bm{\vartheta}_{\lambda^{\step{k}}}\big(G_{\ell^{\step{k}}+1}^{\dag},
G_{\ell^{\step{k}}+2}^{\dag},\ldots,
G_{\ell^{\step{k+1}}}^{\dag},
D_{\ell^{\step{k}}+1}^{\dag},D_{\ell^{\step{k}}+2}^{\dag},\ldots,D_{\ell^{\step{k+1}}}^{\dag}\big),
\label{eq:hat-theta-def}
\end{equation}
where the function $\bm{\vartheta}_{\lambda^{\step{k}}}$ on the right-hand
side denotes an estimator of $\bm{\theta}_{\lambda^{\step{k}}}$ based on the
informative-packet observations collected during the $k$-th decision
interval.

We assume that the parameter $\lambda^{\step{k}}$ is selected from a closed interval $\mathcal{L} := [\lambda_{\min}, \lambda_{\max}]$ for some $0<\lambda_{\min}<\lambda_{\max}<\infty$.
The proposed algorithm is then defined through
\begin{equation}
\lambda^{\step{k}}:=\Pi_{\mathcal{L}}\left[\lambda^{\step{k-1}}-\gamma^{\step{k}}\{g(\lambda^{\step{k-1}};\hat{\bm{\theta}}^{\step{k-1}})+\kappa'(\lambda^{\step{k-1}})\}\right], 
\quad
k = 1,2,\ldots,
\label{eq:update-lambda}
\end{equation}
where $\gamma^{\step{k}} > 0$ denotes a step size and
$\Pi_{\mathcal{L}}[\cdot]$ denotes the projection of $\lambda^{\step{k}}$ onto the allowable interval $\mathcal{L}$:
\[
\Pi_{\mathcal{L}}[x] := \max\{\min\{x,\lambda_{\max}\},\lambda_{\min}\}.
\]
In our SGD algorithm, we set the step size as, for constants $c_1>0$ and $\alpha\in(0.5,1]$,
\begin{equation}
\gamma^{\step{k}} := c_1 k^{-\alpha},
\quad
k = 1,2,\ldots.
\label{eq:step-size}
\end{equation}
We further set the number of samples $n^{\step{k}} := \ell^{\step{k+1}}-\ell^{\step{k}}$ used in the $k$-th interval as, for constants $c_2 > 0$ and $\beta > 2(1-\alpha)$,
\begin{equation}
n^{\step{k}} := c_2 k^{\beta}, \quad
k = 1,2,\ldots.
\label{eq:sample-size}
\end{equation}

Intuitively, the proposed algorithm holds the generation rate fixed at
$\lambda^{\step{k}}$ over the $k$-th decision interval and uses the informative
packets collected during this interval to estimate the structural quantities
$\bm{\theta}_{\lambda^{\step{k}}}$ entering the analytical gradient. Although
the observations are generated by an evolving queueing system and may exhibit
intricate temporal dependence, the algorithm does not attempt to learn the
AoI objective as a black box. Instead, it estimates only the finite-dimensional
quantities required by the analytical gradient and uses
$g(\lambda^{\step{k}};\hat{\bm{\theta}}^{\step{k}})$ as a data-driven gradient
estimate. The update rule \eqref{eq:update-lambda} then adjusts the generation
rate in the direction opposite to the estimated gradient of the total
objective. The projection operator enforces feasibility of the iterates,
while the diminishing step size and increasing sample size control the
estimation error over successive iterations.}

{The main technical challenge is that the gradient estimate is constructed
from observations generated by the queueing system itself. Thus, unlike
standard stochastic approximation settings with exogenous noise, the
estimation error is coupled with the current generation rate and inherits
the intricate temporal dependence of the underlying queueing process.
We capture this dependence through a uniform mean squared error condition
on the estimator $\hat{\bm{\theta}}^{\step{k}}$.

In what follows, we show that, under suitable regularity assumptions, the
proposed algorithm generates a sequence of update rates
$(\lambda^{\step{k}})_{k=0,1,\ldots}$ that converges to the optimal update
rate $\lambda_*$.

\begin{assumption}\label{ass:regularity}
\begin{enumerate}
\item[(i)] The stationary mean AoI $\E_\lambda[A]$ is well defined, finite, and continuously differentiable on $\lambda \in \mathcal{L} = [\lambda_{\min}, \lambda_{\max}]$.

\item[(ii)] The gradient of $\E_\lambda[A]$ admits the representation \eqref{eq:g(lambda,theta)-def} for a known function $g$ and a parameter vector $\bm{\theta}_{\lambda} \in \Theta$, where $\Theta \subseteq \mathbb{R}^d$ denotes the parameter space.

\item[(iii)] There exists a unique $\lambda_* \in \mathcal{L}$ such that
\[
g(\lambda_*;\bm{\theta}_{\lambda_*}) + \kappa'(\lambda_*) = 0.
\]

\item[(iv)] There exists a constant $C_1 > 0$ such that, for all $\bm{\theta}, \bm{\theta}' \in \Theta$ and all $\lambda\in{\mathcal L}$,
\begin{equation} \label{eq:Lips}
|g(\lambda;\bm{\theta}) - g(\lambda;\bm{\theta}')| \le C_1 \|\bm{\theta} - \bm{\theta}'\|.
\end{equation}
\item[(v)] There exists a constant $C_2 > 0$ such that,  for all $\lambda\in{\mathcal L}$,
\begin{equation}\label{eq:theta_mixing}
\mathbb{E}_{\lambda^{\step{k}}}\!\left[ \|\bm{\theta}_{\lambda^{\step{k}}} - \hat{\bm{\theta}}^{\step{k}}\|^2\, \Big|\, \,
\mathcal{F}^{\step{k-1}}, \lambda^{\step{k}} \right]
\le \frac{C_2}{n^{\step{k}}}, \qquad k=1,2,\ldots,
\end{equation}
where $\mathcal{F}^{\step{k-1}}$ denotes the sigma algebra generated by the observations in all iterations before the $k$-th decision epoch.
\end{enumerate}
\end{assumption}

Parts (i) and (ii) of Assumption~\ref{ass:regularity} are mild and can be verified directly whenever a closed-form expression for the mean AoI is available. Since $\kappa(\lambda)$ is assumed to be convex, part (iii) is satisfied,
for example, if $\E_\lambda[A]$ is strictly convex on $\mathcal{L}$.  We show in Section~\ref{sec:application} that Examples 1 and 2 in Section~\ref{sec:model} exhibit a strongly convex structure in $g$, so that this condition holds in those cases.

Under Assumption~\ref{ass:regularity} (i)--(iii), define the optimal
update rate as
\begin{equation}
\lambda_* = \arg\min_{\lambda \in \mathcal{L}}
\bigl(\E_{\lambda}[A] + \kappa(\lambda)\bigr).
\label{eq:lambda-opt-def}
\end{equation}
This is the objective obtained when the generation rate is held fixed at
$\lambda$. Moreover, $\lambda_*$ is equivalently characterized as the
unique solution in $\mathcal{L}$ of
\begin{equation}
g(\lambda_*;\bm{\theta}_{\lambda_*}) + \kappa'(\lambda_*) = 0.
\label{eq:zero_gradient}
\end{equation}

Assumption~\ref{ass:regularity} (iv) requires the gradient representation to
be Lipschitz continuous in the estimated parameter vector. Assumption~\ref{ass:regularity}
(v) imposes a uniform $O(1/n^{\step{k}})$ mean squared error bound on the
estimator of the structural parameter vector. Establishing this bound is
where the temporal dependence induced by the underlying queueing process
enters the analysis. In Section~\ref{sec:application}, we verify this
condition for several important special cases under appropriate moment
conditions on the service-time distribution.

We are now in a position to state and prove the main result of this paper.

\begin{theorem}\label{thm:SGD_convergence_2}
Suppose that Assumption~\ref{ass:regularity} holds. Then the proposed SGD algorithm, with step size $\gamma^{\step{k}}$ and sample size $n^{\step{k}}$ given in \eqref{eq:step-size} and \eqref{eq:sample-size}, with $\beta > 2(1-\alpha)$,
generates a sequence of update rates $(\lambda^{\step{k}})_{k=0,1,\ldots}$ that converges almost surely to the minimizer $\lambda_*$ of \eqref{eq:objective}, i.e.,
\begin{align}
\lambda^{\step{k}} \xrightarrow{\text{\rm a.s.}} \lambda_*.
\label{eq:lambda_convergence}
\end{align}
\end{theorem}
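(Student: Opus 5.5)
We prove the theorem by a Robbins--Siegmund supermartingale argument applied to the squared distance $V_k := (\lambda^{\step{k}} - \lambda_*)^2$. Write $F(\lambda) := \E_\lambda[A] + \kappa(\lambda)$ and $h(\lambda) := g(\lambda;\bm{\theta}_\lambda) + \kappa'(\lambda) = F'(\lambda)$. This $h$ is continuous on $\mathcal{L}$ by Assumption~\ref{ass:regularity}(i),(ii), and $\kappa'$ is continuous; if $\kappa$ is merely convex we use a subgradient. The update \eqref{eq:update-lambda} reads $\lambda^{\step{k}} = \Pi_{\mathcal{L}}[\lambda^{\step{k-1}} - \gamma^{\step{k}}(h(\lambda^{\step{k-1}}) + e^{\step{k-1}})]$ with error $e^{\step{k-1}} := g(\lambda^{\step{k-1}};\hat{\bm{\theta}}^{\step{k-1}}) - g(\lambda^{\step{k-1}};\bm{\theta}_{\lambda^{\step{k-1}}})$. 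By Assumption~\ref{ass:regularity}(iv),(v),
\[
\E\bigl[(e^{\step{k-1}})^2 \,\big|\, \mathcal{F}^{\step{k-2}},\lambda^{\step{k-1}}\bigr] \le \frac{C_1^2 C_2}{n^{\step{k-1}}}.
\]

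\textbf{Step 1: sign condition on $h$.} Since $F$ is $C^1$ on $\mathcal{L}$ and $\lambda_*$ is its unique critical point (Assumption~\ref{ass:regularity}(iii)), $h$ has constant sign on each side of $\lambda_*$. Because $\lambda_*$ is the minimizer \eqref{eq:lambda-opt-def}, these signs must be $h<0$ on $[\lambda_{\min},\lambda_*)$ and $h>0$ on $(\lambda_*,\lambda_{\max}]$. Hence $(\lambda-\lambda_*)h(\lambda) > 0$ for $\lambda\neq\lambda_*$. By continuity and compactness, $\inf_{|\lambda-\lambda_*|\ge\varepsilon}(\lambda-\lambda_*)h(\lambda) > 0$ for every $\varepsilon>0$. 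Also $|h|\le H_{\max}$ on $\mathcal{L}$.

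\textbf{Step 2: one-step inequality.} The projection onto $\mathcal{L}$ is nonexpansive and $\lambda_*\in\mathcal{L}$, so
\[
V_k \le V_{k-1} - 2\gamma^{\step{k}}(\lambda^{\step{k-1}}-\lambda_*)h(\lambda^{\step{k-1}}) - 2\gamma^{\step{k}}(\lambda^{\step{k-1}}-\lambda_*)e^{\step{k-1}} + 2(\gamma^{\step{k}})^2\bigl(H_{\max}^2 + (e^{\step{k-1}})^2\bigr).
\]
The cross term with $e$ is not a martingale difference: $\hat{\bm\theta}$ may be biased, since (v) bounds the MSE, not the bias. We therefore bound it deterministically in conditional mean. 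Using $|\lambda^{\step{k-1}}-\lambda_*|\le \lambda_{\max}-\lambda_{\min}$ together with Cauchy--Schwarz/Jensen,
\[
\E\bigl[|e^{\step{k-1}}|\,\big|\,\cdot\bigr] \le C (n^{\step{k-1}})^{-1/2} = O(k^{-\beta/2}).
\]
Taking conditional expectation given $\mathcal{F}_{k-1}$ (observations plus $\lambda^{\step{k-1}}$) yields
\[
\E[V_k\mid\mathcal{F}_{k-1}] \le V_{k-1} - 2\gamma^{\step{k}}(\lambda^{\step{k-1}}-\lambda_*)h(\lambda^{\step{k-1}}) + a_k,
\qquad a_k = O\bigl(k^{-\alpha-\beta/2} + k^{-2\alpha}\bigr).
\]

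\textbf{Step 3: summability.} The term $\sum k^{-2\alpha}$ converges since $\alpha>1/2$. The term $\sum k^{-\alpha-\beta/2}$ converges exactly when $\alpha+\beta/2>1$, i.e.\ $\beta>2(1-\alpha)$. This is precisely the hypothesis, and it is the main point to check; it also explains the role of the bias. The Robbins--Siegmund theorem then gives that $V_k$ converges a.s. to a finite limit and that
\[
\sum_k \gamma^{\step{k}}(\lambda^{\step{k-1}}-\lambda_*)h(\lambda^{\step{k-1}}) < \infty \quad\text{a.s.}
\]
Since $\sum\gamma^{\step{k}}=\infty$ (because $\alpha\le 1$), we get $\liminf_k (\lambda^{\step{k-1}}-\lambda_*)h(\lambda^{\step{k-1}}) = 0$. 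By Step 1 this forces $\liminf_k V_k = 0$, and since $V_k$ converges, $V_k\to0$ a.s., which is \eqref{eq:lambda_convergence}.

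The main obstacle is the conditioning: the estimation error $e^{\step{k-1}}$ depends on the random, endogenously chosen $\lambda^{\step{k-1}}$ and on queue states carried over between intervals. This is handled entirely by the conditional form of Assumption~\ref{ass:regularity}(v), which is uniform in $\lambda$. We need only make sure the filtrations nest, so that $\lambda^{\step{k-1}}$ is $\mathcal{F}_{k-1}$-measurable and the bound (v) applies inside the tower property. A secondary subtlety is identifying $\lambda_*$ in \eqref{eq:objective} with the minimizer of the stationary objective $F$, which we take from \eqref{eq:lambda-opt-def}--\eqref{eq:zero_gradient} as given.
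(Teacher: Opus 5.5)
Your proof is correct, but it takes a different route from the paper. The paper rewrites the iteration as a projected Robbins--Monro scheme with bias $B^{\step{k}}$ and projection residual $Y^{\step{k}}$. It then invokes the Kushner--Yin convergence theorem for projected stochastic approximation as a black box, verifying only its hypotheses: the step-size conditions, a uniform conditional second-moment bound $C_1^2C_2$ on $B^{\step{k}}$, and $\sum_k\gamma^{\step{k}}|b^{\step{k}}|<\infty$. The last of these comes from (iv), (v) and Cauchy--Schwarz, exactly as in your Step 2.

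You instead give a self-contained Lyapunov proof. Nonexpansiveness of $\Pi_{\mathcal L}$ together with the sign condition $(\lambda-\lambda_*)h(\lambda)>0$ makes $V_k=(\lambda^{\step{k}}-\lambda_*)^2$ a Robbins--Siegmund almost-supermartingale. Because you bound the entire error in conditional mean, $\E[|e^{\step{k-1}}|\mid\cdot\,]=O((n^{\step{k-1}})^{-1/2})$, you never split it into a martingale-difference part and a bias part. Both routes consume the same summability conditions: $\sum\gamma^{\step{k}}=\infty$, $\sum(\gamma^{\step{k}})^2<\infty$, and $\sum\gamma^{\step{k}}(n^{\step{k}})^{-1/2}<\infty$.

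Your route buys transparency. Step 1 makes explicit something the paper's citation uses silently: the limit set of the projected ODE $\dot\lambda=-h(\lambda)+z$ reduces to $\{\lambda_*\}$ only because $h<0$ left of $\lambda_*$ and $h>0$ right of it. Otherwise an endpoint of $\mathcal L$ at which $-h$ points outward would also be a stationary point. This in turn rests on identifying the unique zero in Assumption~\ref{ass:regularity}(iii) with the minimizer in \eqref{eq:lambda-opt-def}, as you correctly flag.

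The paper's route buys brevity and generality. The ODE method needs no explicit Lyapunov function, and it would still apply if the estimator had merely bounded, non-vanishing conditional variance with weighted-summable bias. Your absolute-value bound on the cross term instead uses that the MSE itself shrinks, although centering that term would remove this restriction.
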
}

The proof combines a bias--variance analysis of a projected stochastic
approximation scheme with the mean squared estimation bound in
Assumption~\ref{ass:regularity}(v). The gradient estimate is not, in general,
unbiased because it is obtained by applying the nonlinear function $g$ to an
estimated parameter vector. Moreover, the estimation error is generated by
the queueing process and therefore inherits its temporal dependence. The
increasing sample size $n^{\step{k}}$ ensures that this bias becomes
summable when weighted by the diminishing step size $\gamma^{\step{k}}$,
while the same estimation bound yields a uniform variance bound. These
properties allow us to invoke standard stochastic approximation arguments
to establish almost sure convergence.

\begin{proof}
{Let $B^{\step{k}}$ denote the estimation bias of the gradient at iteration $k$, defined by
\begin{equation}\label{eq:bias_g}
B^{\step{k}}
:=
g(\lambda^{\step{k}}; \bm{\theta}_{\lambda^{\step{k}}})
-
g(\lambda^{\step{k}};\hat{\bm{\theta}}^{\step{k}}),
\qquad k = 1,2,\ldots.
\end{equation}
Since both $\lambda^{\step{k}}$ and $\hat{\bm{\theta}}^{\step{k}}$ are random, $B^{\step{k}}$ is a random variable.
The iteration \eqref{eq:update-lambda} can then be rewritten as
\begin{align*}
\lambda^{\step{k+1}}
&=
\lambda^{\step{k}}
-\gamma^{\step{k}}
\left\{
g\!\left(\lambda^{\step{k}};\bm{\theta}_{\lambda^{\step{k}}}\right)
+\kappa'(\lambda^{\step{k}})
\right\}
+\gamma^{\step{k}} B^{\step{k}}
+ Y^{\step{k}},
\end{align*}
where the `projection residual' $Y^{\step{k}}$ is defined via
\[
Y^{\step{k}}
:=
\Pi_{\mathcal{L}}\!\left[\lambda^{\step{k}} - \gamma^{\step{k}} \hat{G}^{\step{k}}\right]
-
\left(\lambda^{\step{k}} - \gamma^{\step{k}} \hat{G}^{\step{k}}\right),
\quad
\hat{G}^{\step{k}}
:=
g\!\left(\lambda^{\step{k}};\hat{\bm{\theta}}^{\step{k}}\right)
+\kappa'(\lambda^{\step{k}}).
\]
The term $Y^{\step{k}}$ is nonzero only when the unconstrained update leaves the feasible set $\mathcal{L}$, and in that case it represents the correction enforcing projection onto $\mathcal{L}$.

Let $\mathcal{F}^{\step{k-1}}$ denote the sigma algebra generated by the observations in all iterations before the $k$-th decision epoch, i.e., 
\[
\mathcal{F}^{\step{k-1}}:=\sigma\left((D_1^{\dag},G_1^{\dag}),(D_2^{\dag},G_2^{\dag})\ldots,(D^\dag_{\ell^{\step{k}}},G^\dag_{\ell^{{\step{k}}}}), \lambda^{\step{k-1}}\right).
\]
Recall that $\lambda^{\step{k}}$ is determined by $\lambda^{\step{k-1}}$ and the data 
$((G_{\ell}^{\dag}, D_{\ell}^{\dag}))_{\ell \in \{\ell^{\step{k-1}}+1,\ell^{\step{k-1}}+2,\ldots,\ell^{\step{k}}\}}$ observed in iteration $k-1$ (cf.\ \eqref{eq:hat-theta-def} and \eqref{eq:update-lambda}). 
Furthermore, we define $b^{\step{k}}$ as the absolute mean bias:
\[
b^{\step{k}} := \left| \mathbb{E}\!\left[ B^{\step{k}} \, \Big|\, \lambda^{\step{k}}, \mathcal{F}^{\step{k-1}} \right] \right|.
\] 
Assumption~\ref{ass:regularity} (iii) ensures that 
the optimal $\lambda_*$ is the unique solution of \eqref{eq:zero_gradient}.
Therefore, owing to \cite[Page 127]{KY2003},
the almost sure convergence \eqref{eq:lambda_convergence} holds if the following conditions are satisfied:
\begin{enumerate}
\item[(i)] The step size sequence satisfies
\begin{align*}
\sum_{k=1}^\infty \gamma^{\step{k}}=\infty, 
\quad 
\sum_{k=1}^\infty (\gamma^{\step{k}})^2 < \infty.
\end{align*} 
\item[(ii)] There exists a constant $M<\infty$ such that for any $\lambda^{\step{k}}\in[\lambda_{\min},\lambda_{\max}]$ and $n^{\step{k}}$,
\begin{equation}\label{eq:variance_bound}
\mathbb{E}\!\left[\left(B^{\step{k}}\right)^2 \, \Big|\,  \lambda^{\step{k}}, \mathcal{F}^{\step{k-1}} \right] \le M.
\end{equation}
\item[(iii)] The sample size sequence $(n^{\step{k}})_{k=1,2,\ldots}$ satisfies, almost surely,
\begin{equation}\label{eq:bias}
\sum_{k=1}^\infty \gamma^{\step{k}}|b^{\step{k}}|<\infty .
\end{equation}
\end{enumerate} 
It is readily verified that $\gamma^{\step{k}}$ given by \eqref{eq:step-size} satisfies (i).
Therefore, the proof requires verifying (ii) and (iii).

First of all, from Assumption~\ref{ass:regularity} (iv) and the definition
\eqref{eq:bias_g} of $B^{\step{k}}$, we have
\begin{align*}
\mathbb{E}\!\left[ \left(B^{\step{k}}\right)^2 \, \Big|\,
\mathcal{F}^{\step{k-1}}, \lambda^{\step{k}} \right]
&\le C_1^2 \,
\mathbb{E}\!\left[
\left\|
\bm{\theta}_{\lambda^{\step{k}}}
-
\hat{\bm{\theta}}^{\step{k}}
\right\|^2
\, \Big|\,
\mathcal{F}^{\step{k-1}}, \lambda^{\step{k}}
\right].
\end{align*}
Assumption~\ref{ass:regularity} (v) thus implies
\begin{align*}
\mathbb{E}\!\left[
\left(B^{\step{k}}\right)^2 \, \Big|\,
\mathcal{F}^{\step{k-1}},\lambda^{\step{k}}
\right]
\leq
\frac{C_1^2 C_2}{n^{\step{k}}}.
\end{align*}
Since $n^{\step{k}}$ is an increasing sequence of positive integers, we
conclude that \eqref{eq:variance_bound} holds with
$M=C_1^2C_2$.
Furthermore, by the Cauchy--Schwarz inequality,
\begin{align*}
|b^{\step{k}}|
&=
\left|
\mathbb{E}\!\left[
B^{\step{k}} \, \Big|\,
\mathcal{F}^{\step{k-1}}, \lambda^{\step{k}}
\right]
\right|\\
&\le
\sqrt{
\mathbb{E}\!\left[
\left(B^{\step{k}}\right)^2 \, \Big|\,
\mathcal{F}^{\step{k-1}}, \lambda^{\step{k}}
\right]
}
\leq
\frac{C_1\sqrt{C_2}}{\sqrt{n^{\step{k}}}}.
\end{align*}
Therefore, we obtain \eqref{eq:bias} from \eqref{eq:step-size} and
\eqref{eq:sample-size}, as follows:
\begin{align*}
\sum_{k=1}^\infty \gamma^{\step{k}}|b^{\step{k}}|
&\leq
C_1\sqrt{C_2}
\sum_{k=1}^\infty
\frac{c_1 k^{-\alpha}}{\sqrt{c_2 k^{\beta}}}
=
\frac{C_1\sqrt{C_2}\,c_1}{\sqrt{c_2}}
\sum_{k=1}^\infty
\frac{1}{k^{\alpha+\beta/2}}
<\infty,
\end{align*}
recalling the condition $\beta > 2(1-\alpha)$.
}
\end{proof}

\begin{remark}
For ease of presentation, we adopt the specific choices \eqref{eq:step-size} and \eqref{eq:sample-size} for the step size $\gamma^{\step{k}}$ and sample size $n^{\step{k}}$. These particular forms are not essential for the convergence result of Theorem~\ref{thm:SGD_convergence_2}. The proof only requires that the sequences satisfy
\[
\sum_{k=1}^{\infty} \gamma^{\step{k}} = \infty,
\qquad
\sum_{k=1}^{\infty} (\gamma^{\step{k}})^2 < \infty,
\qquad
\sum_{k=1}^{\infty} \frac{\gamma^{\step{k}}}{\sqrt{n^{\step{k}}}} < \infty.
\]
Under these conditions, Theorem~\ref{thm:SGD_convergence_2} remains valid for more general choices of $\gamma^{\step{k}}$ and $n^{\step{k}}$.
\end{remark}

\section{Application of the Framework to Examples 1--4}
\label{sec:application} 

In this section, we demonstrate the flexibility of our framework by detailing how the proposed procedure can be implemented for Examples~1--4 in Section~\ref{sec:model}. We discuss both practical and theoretical aspects of these implementations, with particular emphasis on verifying the assumptions required for convergence of the algorithm, as established in the preceding section, across the different settings.


\subsection{Example 1: FCFS dedicated channel with unknown service-time distribution}\label{sec:application_A}

Consider the FCFS M/GI/1 queue in Example 1 of 
Section~\ref{ssec:examples}.
Recall that the mean AoI $\E_{\lambda}[A]$ in this case is given by \eqref{eq:EA}.
We assume $\E[H^2] < \infty$ so that $\E[A]$ is finite for $\lambda \in (0,1/\E[H])$.
We first show that $\E_{\lambda}[A]$ is a convex function of the generation rate $\lambda$; we provide its proof in Appendix \ref{appendix:FD_convex}.
\begin{theorem}\label{thm:FD_convex}
The mean AoI $\E_{\lambda}[A]$ given by \eqref{eq:EA} is a strictly convex function of $\lambda \in \left(0,1/\E[H]\right)$.
\end{theorem}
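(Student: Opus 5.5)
We prove strict convexity on $(0,1/\E[H])$ by splitting \eqref{eq:EA} into three pieces and handling each separately:
\[
\E_{\lambda}[A] = \underbrace{\frac{\lambda\E[H^2]}{2(1-\lambda\E[H])}}_{=:T_1(\lambda)} + \E[H] + \underbrace{\frac{1}{\lambda f_H^*(\lambda)}}_{=:T_2(\lambda)} - \underbrace{\frac{\E[H]}{f_H^*(\lambda)}}_{=:T_3(\lambda)}.
\]
The constant $\E[H]$ plays no role. We will show that $T_1$ is convex, and that $T_2-T_3$ is strictly convex.

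\emph{The term $T_1$.} Write $T_1(\lambda)=\frac{\E[H^2]}{2\E[H]}\bigl(\frac{1}{1-\lambda\E[H]}-1\bigr)$. Its second derivative is $\E[H^2]\E[H]/(1-\lambda\E[H])^3$, which is nonnegative on the stability region, so $T_1$ is convex.

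\emph{The combination $T_2-T_3$.} Set $\phi(\lambda):=1/f_H^*(\lambda)=1/\E[e^{-\lambda H}]$. Then
\[
T_2-T_3=\phi(\lambda)\Bigl(\frac{1}{\lambda}-\E[H]\Bigr).
\]
The first factor $\phi$ is positive and increasing. It is also log-convex: $\log f_H^*$ is a cumulant generating function evaluated at $-\lambda$, hence convex, but $\log\phi=-\log f_H^*$ is then concave, so this route fails. We therefore compute $\phi''$ directly. Write $m_j(\lambda):=\E[H^je^{-\lambda H}]$, so that $f_H'=-m_1$ and $f_H''=m_2$. Then
\[
\phi'=\frac{m_1}{(f_H^*)^2},\qquad
\phi''=\frac{2m_1^2-m_2 f_H^*}{(f_H^*)^3}.
\]
By Cauchy--Schwarz, $m_1^2\le m_2f_H^*$, so $\phi''\ge -m_2/(f_H^*)^2$; in particular $\phi''$ can be negative. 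This shows that the crude approach of checking convexity of each term is insufficient, and the terms must be combined.

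\emph{Combining via a probabilistic representation.} The plan is to differentiate the whole expression twice using
\[
(fh)''=f''h+2f'h'+fh''
\]
and to group everything under a single expectation. The key identity is
\[
\frac{1}{\lambda f_H^*(\lambda)}=\E[\text{idle}+\ldots],
\]
or more directly the standard decomposition $\E_\lambda[A]=\E[\text{system time}]+\E[\text{interdeparture-type term}]$. We expect to express
\[
\frac{1-\rho}{\lambda f_H^*(\lambda)}=\frac{1}{\lambda}\cdot\frac{1-\rho}{f_H^*(\lambda)}
\]
as an expectation of $\int_0^\infty e^{\lambda x}\,\cdot$ against a positive measure, using $1/f_H^*(\lambda)=\E[e^{\lambda \tilde H}]$ when available. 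Since that representation is not always available, the fallback is a direct algebraic argument. Multiply the second derivative by $\lambda^3(f_H^*)^3(1-\rho)^3$. Expand in terms of $f=f_H^*$, $m_1$, $m_2$, $\E[H]$ and $\E[H^2]$. Then lower-bound the negative terms using three inequalities:
\[
\E[H^2]\ge m_2e^{0}\ge m_2,\qquad \E[H]\ge m_1,\qquad f\ge 1-\lambda\E[H]=1-\rho \ \text{(Jensen)}.
\]
The goal is to show that the resulting polynomial is positive.

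\emph{Main obstacle.} The hard step is this final sign analysis. The negative contribution $-m_2/f^2\cdot(1/\lambda-\E[H])$ must be dominated by $T_1''=\E[H]\E[H^2]/(1-\rho)^3$ together with the positive pieces $2\phi'/\lambda^2$ and $2\phi/\lambda^3$ coming from $(1/\lambda)''$ and the cross term. What we try first is to use $m_2\le\E[H^2]$ and $f\ge1-\rho$, which give
\[
\frac{m_2(1-\rho)}{\lambda f^2}\le\frac{\E[H^2]}{\lambda(1-\rho)}.
\]
We then compare this against $T_1''+2\phi/\lambda^3$ together with the cross term $-2\phi'/\lambda^2$. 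The cross term is negative and needs care: we bound it using
\[
\phi'\le\phi\,\E[H]/f\quad\text{and}\quad 2ab\le a^2+b^2,
\]
so that it is absorbed into $\phi/\lambda^3$ and $T_1''$. Strictness will then follow from the strict positivity of the term $2\phi/\lambda^3$ after absorption.
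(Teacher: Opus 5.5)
Your split is the paper's own: $T_1+\E[H]$ and $T_2-T_3=(1-\rho)/(\lambda f_H^*(\lambda))$ are exactly the two summands it treats. Your opening target, that $T_2-T_3$ is convex by itself, is also correct. The gap is that you never supply the inequality that makes this true, and the bounds you fall back on cannot close the argument.

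The fatal step is $m_2\le\E[H^2]$ combined with $f_H^*\ge 1-\rho$. The quantity $m_2=\E[H^2e^{-\lambda H}]\le 4/(e\lambda)^2$ is insensitive to the tail of $H$. By contrast, $\E[H^2]$ can be made arbitrarily large with $\E[H]$ fixed, and the only term that can absorb a multiple of $\E[H^2]$ is $T_1''=\E[H]\E[H^2]/(1-\rho)^3$. After your substitution the net coefficient of $\E[H^2]$ is $\frac{1}{\lambda(1-\rho)}\bigl(\frac{\rho}{(1-\rho)^2}-1\bigr)$. This is negative whenever $\rho<(3-\sqrt5)/2\approx 0.38$, while every remaining term stays bounded (e.g.\ $2/(\lambda^3 f_H^*)\le 2e^{\rho}/\lambda^3$ by Jensen). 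Concretely, take $\E[H]=1$, $\lambda=0.3$ and a two-point service law with mass $10^{-3}$ at $500$. Your lower bound is about $-390$ even with all other terms kept exact, while the true second derivative is about $+800$. So after the substitution the ``final sign analysis'' is not merely hard but false.

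The cross term suffers similarly: $m_1\le\E[H]$ and Jensen only give $2m_1/(\lambda^2 (f_H^*)^2)\le \rho e^{\rho}\cdot 2/(\lambda^3 f_H^*)$. The ``probabilistic representation'' paragraph contains no actual identity and does not contribute.

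The missing idea is to use $\E[\lambda H]=\rho<1$ together with the exponential weight, instead of bounding the moments separately. For $Y\ge 0$ with $\E[Y]\le 1$:
\begin{itemize}
\item $\E[Ye^{-Y}]\le \E[Y]\E[e^{-Y}]\le \E[e^{-Y}]$, since $Y$ and the decreasing function $e^{-Y}$ are negatively correlated;
\item $\E[Y^2e^{-Y}]\le\E[e^{-Y}]$, from the pointwise bound $(y^2-1)e^{-y}\le 2e^{-1}(y-1)$ for $y\ge 0$.
\end{itemize}
With $Y=\lambda H$ these read $\lambda m_1\le f_H^*$ and $\lambda^2 m_2\le f_H^*$. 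Insert them into your own product-rule expansion and write $x:=\lambda m_1/f_H^*(\lambda)\in[0,1]$. This gives
\[
\lambda^3 f_H^*(\lambda)\,(T_2-T_3)''(\lambda)\;\ge\; 1+\rho-2x+2(1-\rho)x^2=(1-\rho)(1-x)^2+2\rho(1-x)+(1-\rho)x^2>0,
\]
so $T_2-T_3$ is strictly convex with no help from $T_1$. Note also that $T_1''>0$, not merely $\ge 0$, which already gives strictness of the sum.

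The paper uses exactly these two inequalities (its auxiliary lemma). It writes the second derivative of $(1-\rho)/(\lambda f_H^*)$ in log-derivative form so that each bracketed term is visibly nonnegative.
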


We next construct the parametric gradient estimator. Recall that the gradient in \eqref{eq:psi_grad-0} can be written as
\begin{align*}
g(\lambda;{\boldsymbol \theta}_\lambda)=
\frac{\E[H^2]}{2(1-\lambda\,\E[H])^2}
-\frac{f_H'(\lambda)}{\lambda\{f_H^*(\lambda)\}^2}
-\frac{1}{\lambda^2\,f_H^*(\lambda)}
+\frac{\E[H]f_H'(\lambda)}{\{f_H^*(\lambda)\}^2}.
\end{align*}

In this setting, all packets are informative since there is no overtaking, so that $G_i^\dag=G_i$ and $D_i^\dag=D_i$ for all $i = 1,2,\ldots$. Moreover, observing both the arrival and departure processes implies that service times are directly observable. Indeed, this follows by rearranging the Lindley recursion,
\begin{align*}
H_i = D_i - \max\{D_{i-1} - G_i, 0\}, \qquad i = 1,2,\ldots.
\end{align*}
Consequently, the parameter vector ${\boldsymbol \theta}_\lambda \equiv \bigl(\E[H], \E[H^2], f_H^*(\lambda), f_H'(\lambda)\bigr)$ can be estimated directly via sample averages: if the available sample is $(H_1,H_2,\ldots,H_{n^{\step{k}}})$, then the estimator is
\begin{align*}
\left(
\frac{1}{{n^{\step{k}}}}\sum_{i=1}^{n^{\step{k}}} H_i,\;
\frac{1}{{n^{\step{k}}}}\sum_{i=1}^{n^{\step{k}}} H_i^2,\;
\frac{1}{{n^{\step{k}}}}\sum_{i=1}^{n^{\step{k}}} e^{-\lambda H_i},\;
-\frac{1}{{n^{\step{k}}}}\sum_{i=1}^{n^{\step{k}}} H_i e^{-\lambda H_i}
\right).
\end{align*}
In order to apply the theoretical convergence guarantees of Section~\ref{sec:alg}, we consider a truncated version of the estimation step. It is assumed that we have an upper bound $K$ on $\E[H^2]$. For some (small) $\epsilon>0$, we therefore let
\begin{align*}
\Theta_\epsilon
:=
\left[0,\frac{1}{\lambda_{\max}}-\epsilon\right]
\times [0,K-\epsilon]
\times [\epsilon,1]
\times [-\bar K,0],
\end{align*}
where $\bar{K}:=\sup_{x\ge 0} x e^{-\lambda_{\min} x} < \infty$; note that, for any $\lambda \in [\lambda_{\min},\lambda_{\max}]$,
\begin{align*}
\mathbb{E}[H e^{-\lambda H}]
\le
\mathbb{E}[H e^{-\lambda_{\min} H}]
\le
\sup_{x \ge 0} x e^{-\lambda_{\min} x}
=
\bar{K}
< \infty.
\end{align*}
    The parameter estimator is defined as 
\begin{align*}
\hat{\boldsymbol \theta}^{\step{k}}
:=
\bigl(\hat{h}_{1,n^{\step{k}}},\hat{h}_{2,n^{\step{k}}},\hat{h}_{3,n^{\step{k}}}(\lambda),\hat{h}_{4,n^{\step{k}}}(\lambda)\bigr),
\end{align*}
with components given by\begin{align*}
\hat{h}_{1,n^{\step{k}}}
&:= \max\!\left\{0,\min\!\left\{
\frac{1}{n^{\step{k}}}\sum_{i=1}^{n^{\step{k}}} H_i,
\frac{1}{\lambda_{\max}}-\epsilon
\right\}\right\},\\
\hat{h}_{2,n^{\step{k}}}
&:= \min\!\left\{
\frac{1}{n^{\step{k}}}\sum_{i=1}^{n^{\step{k}}} H_i^2,
K-\epsilon
\right\},\\
\hat{h}_{3,n^{\step{k}}}(\lambda)
&:= \max\!\left\{\epsilon,\min\!\left\{
\frac{1}{n^{\step{k}}}\sum_{i=1}^{n^{\step{k}}} e^{-\lambda H_i},
1
\right\}\right\},\\
\hat{h}_{4,n^{\step{k}}}(\lambda)
&:= -\frac{1}{n^{\step{k}}}\sum_{i=1}^{n^{\step{k}}} H_i e^{-\lambda H_i}.
\end{align*}
Note that the truncation level $\epsilon$ can be chosen arbitrarily small; in particular, one may take $\epsilon \ll \E[e^{-\lambda_{\max} H}]$, so that the truncation has no practical impact on the implementation of the algorithm.

{The following two lemmas verify the requirements of Assumption~\ref{ass:regularity} and will enable direct application of Theorem~\ref{thm:SGD_convergence_2} to this example. Lemmas \ref{lem:g_lip} and \ref{lem:theta_FD} focus on  Assumption~\ref{ass:regularity}(iv) (i.e., inequality \eqref{eq:Lips}) and Assumption~\ref{ass:regularity}(v) (i.e., the inequality \eqref{eq:theta_mixing}), respectively. The proofs are given in Appendix \ref{appendix:regularity}.
\begin{lemma}\label{lem:g_lip}
   With $\Theta=\Theta_\epsilon$, $g(\lambda;{\boldsymbol \theta}_\lambda)$ fulfills \eqref{eq:Lips}. 
\end{lemma}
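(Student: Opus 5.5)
The plan is to show that, for each fixed $\lambda\in\mathcal{L}$, the map $\bm{\theta}=(\theta_1,\theta_2,\theta_3,\theta_4)\mapsto g(\lambda;\bm{\theta})$ is continuously differentiable on an open neighbourhood of the compact box $\Theta_\epsilon$. We then bound its gradient uniformly in $(\lambda,\bm{\theta})\in\mathcal{L}\times\Theta_\epsilon$ and conclude by the mean value theorem. Since $\Theta_\epsilon$ is a product of intervals, it is convex, so the segment between any $\bm{\theta},\bm{\theta}'\in\Theta_\epsilon$ stays in $\Theta_\epsilon$. This gives
\[
|g(\lambda;\bm{\theta})-g(\lambda;\bm{\theta}')|\le \sup_{\lambda\in\mathcal{L},\,\bm{\vartheta}\in\Theta_\epsilon}\|\nabla_{\bm{\theta}}g(\lambda;\bm{\vartheta})\|\,\|\bm{\theta}-\bm{\theta}'\|,
\]
and $C_1$ can be taken to be this supremum.

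Writing $g$ in the generic variables, we have
\[
g(\lambda;\bm{\theta})=\frac{\theta_2}{2(1-\lambda\theta_1)^2}-\frac{\theta_4}{\lambda\theta_3^2}-\frac{1}{\lambda^2\theta_3}+\frac{\theta_1\theta_4}{\theta_3^2}.
\]
Its partial derivatives are
\begin{align*}
\partial_{\theta_1}g&=\frac{\lambda\theta_2}{(1-\lambda\theta_1)^3}+\frac{\theta_4}{\theta_3^2},\qquad
\partial_{\theta_2}g=\frac{1}{2(1-\lambda\theta_1)^2},\\
\partial_{\theta_3}g&=\frac{2\theta_4}{\lambda\theta_3^3}+\frac{1}{\lambda^2\theta_3^2}-\frac{2\theta_1\theta_4}{\theta_3^3},\qquad
\partial_{\theta_4}g=-\frac{1}{\lambda\theta_3^2}+\frac{\theta_1}{\theta_3^2}.
\end{align*}
The main point is to keep the denominators away from zero. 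On $\Theta_\epsilon$ we have $\theta_3\ge\epsilon$. We also have $\theta_1\le 1/\lambda_{\max}-\epsilon$, so $1-\lambda\theta_1\ge 1-\lambda_{\max}(1/\lambda_{\max}-\epsilon)=\lambda_{\max}\epsilon>0$ for every $\lambda\le\lambda_{\max}$. Finally, $\lambda\ge\lambda_{\min}>0$. The numerators are bounded because $|\theta_1|\le 1/\lambda_{\max}$, $|\theta_2|\le K$, and $|\theta_4|\le\bar K$. Each partial derivative is therefore bounded by an explicit constant depending only on $\epsilon,K,\bar K,\lambda_{\min},\lambda_{\max}$. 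For example,
\[
|\partial_{\theta_3}g|\le \frac{2\bar K}{\lambda_{\min}\epsilon^3}+\frac{1}{\lambda_{\min}^2\epsilon^2}+\frac{2\bar K}{\lambda_{\max}\epsilon^3}.
\]
The bounds for $\partial_{\theta_1}g$, $\partial_{\theta_2}g$ and $\partial_{\theta_4}g$ follow in the same way.

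The only step requiring care is this uniform lower bound on $1-\lambda\theta_1$ and on $\theta_3$. This is exactly what the truncation in $\Theta_\epsilon$ is designed to provide, and it is why a global Lipschitz property on an untruncated parameter space would fail. Since $g$ is smooth on the open set $\{\theta_3>0,\ \lambda\theta_1<1\}\supset\Theta_\epsilon$, the mean value argument applies directly. We should also note that the estimator $\hat{\bm{\theta}}^{\step{k}}$ lies in $\Theta_\epsilon$: for $\hat h_4$ this holds because $xe^{-\lambda x}\le\bar K$ for $\lambda\ge\lambda_{\min}$. We likewise need the true $\bm{\theta}_\lambda$ to lie in $\Theta_\epsilon$ for $\epsilon$ small, under the standing assumptions $\E[H]<1/\lambda_{\max}$ and $\E[H^2]<K$. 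Hence the Lipschitz bound \eqref{eq:Lips} applies to the pairs used in the proof of Theorem~\ref{thm:SGD_convergence_2}.
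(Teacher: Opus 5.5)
Your proof is correct and follows the paper's argument essentially line by line. It uses the same explicit partial derivatives, the same uniform bounds from $\theta_3\ge\epsilon$ and from $1-\lambda\theta_1$ being bounded away from zero on the compact box $\Theta_\epsilon$, and the mean value theorem. Your explicit appeal to convexity of $\Theta_\epsilon$ and your slightly sharper bound $1-\lambda\theta_1\ge\lambda_{\max}\epsilon$ (the paper uses $\lambda_{\min}\epsilon$) are harmless refinements.
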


\begin{lemma}\label{lem:theta_FD}
    Suppose that 
    $\E[H^4]<\infty$. Then \eqref{eq:theta_mixing} applies for any
    \begin{align}
    \label{eq:epsbound}
        \epsilon<\frac{1}{2}\min\left\lbrace\ \frac{1}{\lambda_{\max}}-\E[H]\,, {K-\E[H^2]\,,} \,f_H^*(\lambda_{\max})\right\rbrace.
    \end{align}
\end{lemma}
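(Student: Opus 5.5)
Within a decision interval the rate is fixed at $\lambda=\lambda^{\step{k}}$, and the recovered service times $H_i = D_i-\max\{D_{i-1}-G_i,0\}$ are i.i.d.\ copies of $H$. The key point is that service times are exogenous: they do not depend on $\lambda$ or on the past, so they are independent of $\mathcal{F}^{\step{k-1}}$ and of $\lambda^{\step{k}}$. The queueing dependence enters only through the sampling window, not through the law of the $H_i$. Conditionally on $(\mathcal{F}^{\step{k-1}},\lambda^{\step{k}})$ we therefore face a plain i.i.d.\ sample of size $n=n^{\step{k}}$. The plan is to bound the conditional mean squared error of each of the four components separately by $c_j/n$, with constants uniform in $\lambda\in\mathcal{L}$. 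We then sum, using $\|\cdot\|^2=\sum_j|\cdot_j|^2$.

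\textbf{Step 1: the truncations never increase the error.} Under \eqref{eq:epsbound}, each true parameter lies in the corresponding truncation interval. Indeed $\E[H]\le 1/\lambda_{\max}-2\epsilon$, $\E[H^2]\le K-2\epsilon$, $f_H^*(\lambda)\ge f_H^*(\lambda_{\max})>2\epsilon$ for $\lambda\le\lambda_{\max}$, and $f_H^*(\lambda)\le1$. Projection onto a closed interval containing the target is $1$-Lipschitz and fixes the target. Hence, for $j=1,2,3$, we have $|\hat h_j-\theta_j|\le|\bar X_j-\theta_j|$, where $\bar X_j$ is the untruncated sample mean. The fourth component is untruncated but already lies in $[-\bar K,0]$.

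\textbf{Step 2: variance bounds.} Each untruncated component is an unbiased sample mean of i.i.d.\ variables, so its MSE equals the variance divided by $n$. The summand variances are bounded uniformly in $\lambda$ as follows:
\[
\Var(H)\le\E[H^2],\qquad \Var(H^2)\le\E[H^4],\qquad \Var(e^{-\lambda H})\le 1,\qquad \Var(He^{-\lambda H})\le \bar K^2 .
\]
Here $\E[H^4]<\infty$ is used for the second bound, and $\bar K$ is defined via $\lambda_{\min}$, which makes the last bound uniform. Summing gives \eqref{eq:theta_mixing} with $C_2=\E[H^2]+\E[H^4]+1+\bar K^2$.

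\textbf{Main obstacle: justifying conditional independence.} The delicate step is the first one, namely showing that the $H_i$ used in interval $k$ are i.i.d.\ with law $H$ conditionally on $(\mathcal{F}^{\step{k-1}},\lambda^{\step{k}})$. One might worry that the interval endpoints $\ell^{\step{k}}$ are selected by the queue, or that the Lindley reconstruction for the first packet of the interval uses $D_{\ell^{\step{k}}}$ from the previous interval. I would argue as follows. The index set $\{\ell^{\step{k}}+1,\dots,\ell^{\step{k+1}}\}$ is deterministic by \eqref{eq:sample-size}. Each reconstructed $H_i$ is exactly the service time of packet $i$, which by the M/GI/1 assumption is independent of all earlier service and arrival variables, and these generate $\mathcal{F}^{\step{k-1}}$ and $\lambda^{\step{k}}$. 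So the reconstruction identity only relabels exogenous variables, and no mixing argument is needed.
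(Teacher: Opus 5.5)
Your proof is correct, but it handles the truncation differently from the paper. The paper splits on the event $A_\epsilon$ that some truncation is active. It bounds $\mathbb{P}(A_\epsilon)$ by Chebyshev's inequality, and this is where the factor $\tfrac12$ in \eqref{eq:epsbound} is used: it leaves a gap of $\epsilon$ between each target and its truncation boundary. On $A_\epsilon$ it controls the error using the a priori bounds on the truncated estimate and the target, and on $A_\epsilon^{\rm c}$ it uses the untruncated sample-mean error. The result is a constant of the form $\tilde C_2\bigl(4+6\epsilon^{-2}(\lambda_{\max}^{-2}+K^2+1)\bigr)$.

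You instead observe that each truncation is a projection onto a closed interval containing the target. Such a projection is nonexpansive and fixes the target, so $|\hat h_j-\theta_j|\le|\bar X_j-\theta_j|$ pathwise. This is shorter, and it gives a constant $\E[H^2]+\E[H^4]+1+\bar K^2$ that does not depend on $\epsilon$ at all. It also shows that the margin in \eqref{eq:epsbound} can be relaxed to requiring only that the true parameters lie in the truncation intervals. In addition, you make explicit why the bound holds conditionally on $(\mathcal{F}^{\step{k-1}},\lambda^{\step{k}})$: the service times are exogenous and the index window is deterministic. The paper leaves this implicit by bounding $\E_{\lambda^{\step{k}}}$ and simply invoking the i.i.d.\ property.

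The paper's event-splitting route has its own advantages. It does not require the truncation to be a projection onto a set containing the target; it would equally cover resetting the estimate to a bounded default value. It also yields the by-product $\mathbb{P}(A_\epsilon)=O(1/n)$, which the paper reuses in Appendix~\ref{app:delta} to argue that the truncation is asymptotically inactive for the delta-method intervals.
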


\begin{proposition}\label{prop:FD_convergence}
Suppose that $\lambda_{\min} < \lambda_* < \lambda_{\max}$,
$\E[H^4]<\infty$, 
$\lambda_{\max} \E[H] < 1$, and $\E[H^2] < K$ for some known $K < \infty$.
Then,
\[
\lambda^{\step{k}} \xrightarrow{\text{\rm a.s.}} \lambda_* \quad \text{as } k \to \infty.
\]
\end{proposition}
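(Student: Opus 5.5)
The plan is to verify the five parts of Assumption~\ref{ass:regularity} for Example~1 with $\Theta=\Theta_\epsilon$, with $\epsilon$ fixed to satisfy \eqref{eq:epsbound}. Once that is done, the conclusion follows directly from Theorem~\ref{thm:SGD_convergence_2}.

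\textbf{Parts (i) and (ii).} The hypothesis $\lambda_{\max}\E[H]<1$ gives $\mathcal{L}\subset(0,1/\E[H])$. On this set, \eqref{eq:EA} is finite because $\E[H^2]<K<\infty$, and $f_H^*(\lambda)\ge f_H^*(\lambda_{\max})>0$. Moreover, $f_H^*$ is $C^1$ on $(0,\infty)$, since $\E[He^{-\lambda H}]\le \bar K$ lets us differentiate under the expectation. Hence $\E_\lambda[A]$ is continuously differentiable on $\mathcal{L}$, and its derivative is exactly the formula \eqref{eq:psi_grad-0}. This gives (i) and (ii).

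We also need the true parameter to lie in $\Theta_\epsilon$. The choice \eqref{eq:epsbound} ensures $\E[H]\le 1/\lambda_{\max}-\epsilon$, $\E[H^2]\le K-\epsilon$ and $f_H^*(\lambda)\ge\epsilon$. The bound $f_H'(\lambda)\in[-\bar K,0]$ was noted above. Thus $\bm\theta_\lambda\in\Theta_\epsilon$ for every $\lambda\in\mathcal{L}$.

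\textbf{Part (iii).} By Theorem~\ref{thm:FD_convex}, $\E_\lambda[A]$ is strictly convex on $\mathcal{L}$, and $\kappa$ is convex. So $\E_\lambda[A]+\kappa(\lambda)$ is strictly convex and has a unique minimizer. By hypothesis this minimizer $\lambda_*$ lies in the interior of $\mathcal{L}$. First-order optimality then gives $g(\lambda_*;\bm\theta_{\lambda_*})+\kappa'(\lambda_*)=0$. Any other zero of the derivative would also be a minimizer, contradicting strict convexity. Hence the root is unique, which is (iii). Here I assume $\kappa$ is differentiable, as the algorithm already requires.

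\textbf{Part (iv)} is Lemma~\ref{lem:g_lip}.

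\textbf{Part (v)} is Lemma~\ref{lem:theta_FD}, which applies because $\E[H^4]<\infty$ and $\epsilon$ satisfies \eqref{eq:epsbound}.

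The one point needing care, and the likely main obstacle, is the conditional form of \eqref{eq:theta_mixing}. Given $\mathcal{F}^{\step{k-1}}$ and $\lambda^{\step{k}}$, the estimator $\hat{\bm\theta}^{\step{k}}$ is computed from the service times $H_i$ recovered in the $k$-th interval via the Lindley recursion. These service times are i.i.d. and independent of the past and of the current rate, since the service process is exogenous. So the conditional bound reduces to an i.i.d. sample-mean variance bound with constants uniform in $\lambda\in\mathcal{L}$, which is presumably what Lemma~\ref{lem:theta_FD} provides.

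If the Lindley-reconstructed $H_i$ at the start of an interval depend on the previous interval's delays, this does not matter. The recursion recovers the true service times exactly, so no dependence enters. Projection truncation only reduces the error, because $\bm\theta_\lambda\in\Theta_\epsilon$ and each truncation is a projection onto an interval containing the true value.

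With (i)--(v) verified, Theorem~\ref{thm:SGD_convergence_2} yields $\lambda^{\step{k}}\to\lambda_*$ almost surely.
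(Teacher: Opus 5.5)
Your proposal is correct and follows the paper's proof: it verifies Assumption~\ref{ass:regularity}(i)--(v) using the explicit formula \eqref{eq:EA}, Theorem~\ref{thm:FD_convex}, Lemma~\ref{lem:g_lip} and Lemma~\ref{lem:theta_FD}, and then invokes Theorem~\ref{thm:SGD_convergence_2}. Your extra checks fill in points the paper leaves implicit: that $\bm\theta_\lambda\in\Theta_\epsilon$ under \eqref{eq:epsbound}, that the conditional bound in (v) holds because the reconstructed service times are i.i.d.\ and independent of $\mathcal{F}^{\step{k-1}}$ and $\lambda^{\step{k}}$, and that clipping to an interval containing the true value cannot increase the error.
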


\begin{proof}
$\lambda_{\max}\E[H]<1$ ensures stability of the system for all $\lambda \in [\lambda_{\min},\lambda_{\max}]$. Consequently, the mean AoI is well defined and continuously differentiable on $\mathcal{L}$, which establishes Assumption~\ref{ass:regularity}(i).
Assumption~\ref{ass:regularity}(ii) follows from the explicit expression \eqref{eq:psi_grad-0} of the gradient.
Theorem~\ref{thm:FD_convex} implies that the objective is strictly
convex on $\mathcal{L}$, and hence has a unique minimizer.
Since $\lambda_{\min}<\lambda_*<\lambda_{\max}$, this minimizer is
interior and therefore satisfies
\[
g(\lambda_*;\bm{\theta}_{\lambda_*})+\kappa'(\lambda_*)=0.
\]
 Thus Assumption~\ref{ass:regularity}(iii) holds. Lemma~\ref{lem:g_lip} ensures that $g$ is uniformly Lipschitz continuous on $\Theta_\epsilon$, verifying Assumption~\ref{ass:regularity}(iv). Finally, noting that 
 $\lambda_{\max} \E[H] < 1$ and  $\E[H^2] < K < \infty$ ensures 
 the existence of $\epsilon >0$ satisfying \eqref{eq:epsbound}, 
 Lemma~\ref{lem:theta_FD} establishes, under $\E[H^4]<\infty$, the mean squared error bound required in Assumption~\ref{ass:regularity}(v). 
 Since all parts of Assumption~\ref{ass:regularity} are satisfied, the result follows directly from Theorem~\ref{thm:SGD_convergence_2}.
 \end{proof}
}


\subsection{Example 2: FCFS with background sources of unknown generation rate}\label{sec:application_B}

We now consider the FCFS M/GI/1 queue introduced in Example~2 of Section~\ref{ssec:examples}, where packets from a background stream arrive according to a Poisson process with unknown rate~$\lambda_{\bg}$. We assume throughout that the service-time distribution is known. Recall from \eqref{eq:EA_bg} that the mean AoI, denoted by $\E_{\lambda}[A]$, admits an explicit expression in this setting. We begin by establishing that $\E_{\lambda}[A]$ is a convex function of the generation rate~$\lambda$; the proof is deferred to Appendix~\ref{appendix:FB_convex}.
\begin{theorem}\label{thm:FB_convex}
The mean AoI $\E_{\lambda}[A]$ given by \eqref{eq:EA_bg} is a strictly convex function of $\lambda \in \left(0,1/\E[H]\right)$.
\end{theorem}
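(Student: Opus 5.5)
The plan is to reduce Theorem~\ref{thm:FB_convex} to the structure already treated in Example~1 (Theorem~\ref{thm:FD_convex}) by means of the busy-period representation \eqref{eq:zeta-by-f_B_bg}. I will work on the stability region $\lambda\in(0,\,1/\E[H]-\lambda_{\bg})$ implied by \eqref{eq:stability}, since \eqref{eq:EA_bg} is only defined there. I will then decompose \eqref{eq:EA_bg} term by term.

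First, the easy terms. The first term is $x\E[H^2]/(2(1-x\E[H]))$ evaluated at $x=\lambda+\lambda_{\bg}$. The map $x\mapsto x/(1-cx)$ is convex on $[0,1/c)$, and composing it with an affine map preserves convexity, so this term is convex. The term $\E[H]$ is constant. The term $\lambda_{\bg}\E[H]/\lambda$ is \emph{strictly} convex on $(0,\infty)$ because $\lambda_{\bg}>0$. Hence strict convexity of the sum follows once the remaining term
\[
T(\lambda):=\frac{1-(\lambda+\lambda_{\bg})\E[H]}{\lambda\,\zeta(\lambda,\lambda_{\bg})}
\]
is shown to be convex.

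Second, the reduction. Write $\varphi(\lambda):=f_{B_{\bg}}^*(\lambda)=\zeta(\lambda,\lambda_\bg)$ and $m:=\E[B_{\bg}]=\E[H]/(1-\rho_{\bg})$, where $\rho_{\bg}=\lambda_{\bg}\E[H]$. A direct computation gives $1-\lambda m=(1-\rho)/(1-\rho_{\bg})$, so
\[
T(\lambda)=(1-\rho_{\bg})\,\frac{1-\lambda\,\E[B_{\bg}]}{\lambda\, f_{B_{\bg}}^*(\lambda)}.
\]
This is, up to a positive constant, exactly the last term of \eqref{eq:EA} with $H$ replaced by $B_{\bg}$. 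The stability condition for the background-only queue ensures $B_{\bg}$ is a finite nonnegative random variable. I will therefore reuse the argument behind Theorem~\ref{thm:FD_convex}, applied to the generic nonnegative random variable $B_{\bg}$, to conclude that $T$ is convex.

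Main obstacle. The proof of Theorem~\ref{thm:FD_convex} may only establish convexity of the full sum in \eqref{eq:EA}, rather than of its last term alone. In that case I will prove directly that, for any LST $\varphi$ of a nonnegative variable $B$ with mean $m$ and $\lambda m<1$, the function $(1/\lambda-m)/\varphi(\lambda)$ is convex. Set $\psi:=1/\varphi$. The second derivative is
\[
\frac{2\psi}{\lambda^3}-\frac{2\psi'}{\lambda^2}+\Bigl(\frac1\lambda-m\Bigr)\psi''.
\]
The two tools I would use are:
\begin{itemize}
\item $\psi'=-\varphi'/\varphi^2\ge0$ is bounded in terms of the tilted mean $-\varphi'/\varphi=\E[Be^{-\lambda B}]/\E[e^{-\lambda B}]\le m$, which holds by negative correlation of $B$ and $e^{-\lambda B}$;
\item the sign of $\psi''=(2\varphi'^2-\varphi\varphi'')/\varphi^3$, which is the difficult part because log-convexity of $\varphi$ only gives $\varphi\varphi''\ge\varphi'^2$.
\end{itemize}
The term involving $\psi''$ is the delicate one. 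I would first try to control its negative part using $\lambda m<1$ and the tilted-variance identity $\varphi''/\varphi-(\varphi'/\varphi)^2=\Var_\lambda(B)$. If that fails, I would fall back on combining it with the strictly convex first term of \eqref{eq:EA_bg}, whose curvature involves $\E[H^2]$, which in turn controls $\E[B_{\bg}^2]$.
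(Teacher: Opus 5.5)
Your argument is correct, and its core is the same as the paper's. You use the same three-way split: the first term, $\lambda_{\bg}\E[H]/\lambda$, and the $\zeta$-term. In both proofs, convexity of the $\zeta$-term rests on Lemma~\ref{lem:ineq} applied to $Y=\lambda B_{\bg}$, which is legitimate because $\lambda\E[B_{\bg}]<1$ exactly on the stability region.

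The difference is in how this is organised. The paper recomputes $\phi''(\lambda,\lambda_{\bg})$ in terms of $\zeta,\zeta',\zeta''$ and then proves the auxiliary inequalities $\zeta+\lambda\zeta'\ge0$ and $\zeta-\lambda^2\zeta''\ge0$. You instead use $1-\rho=(1-\rho_{\bg})(1-\lambda\E[B_{\bg}])$ to identify the $\zeta$-term as $(1-\rho_{\bg})$ times the Example~1 function $\phi$ with $H$ replaced by $B_{\bg}$. The argument of Appendix~\ref{appendix:FD_convex} then applies with no new differentiation. Your route is shorter, and it explains why the paper's second-derivative formula has exactly the Example~1 form: note that $\E[B_{\bg}]/(1-\lambda\E[B_{\bg}])=\E[H]/(1-\rho)$, which is the coefficient appearing there. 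The paper's route stays at the level of $\zeta$ and avoids the busy-period mean formula. Taking strictness from $\lambda_{\bg}\E[H]/\lambda$ rather than from the first term is equally fine. Restricting to $\lambda<1/\E[H]-\lambda_{\bg}$ is the correct reading of the statement, and the paper's proof implicitly does the same.

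Your ``main obstacle'' does not occur. The proof of Theorem~\ref{thm:FD_convex} establishes convexity of $\phi(\lambda)=(1-\lambda\E[H])/(\lambda f_H^*(\lambda))$ by itself, using only $H\ge0$ and $\lambda\E[H]<1$, so it transfers verbatim to $B_{\bg}$.

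That matters, because your fallback as sketched lacks its key step. The tilted-mean bound and log-convexity give no \emph{upper} bound on the tilted second moment $\varphi''/\varphi$, yet your expression for the second derivative is decreasing in that quantity. The ingredient that closes the argument is $\varphi''(\lambda)/\varphi(\lambda)\le1/\lambda^2$, i.e.\ $\E[(\lambda B)^2e^{-\lambda B}]\le\E[e^{-\lambda B}]$ whenever $\lambda\E[B]\le1$. This is inequality~\eqref{eq:Y2expY-bound}.
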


The gradient in this case is given by
\begin{align*}
g(\lambda;{\boldsymbol \theta}_\lambda)
&=
\frac{\E[H^{2}]}{2\left(1-(\lambda+\lambda_{\bg})\E[H]\right)^{2}}
-\frac{\lambda_{\bg}\E[H]}{\lambda^{2}}
-\frac{\E[H]}{\lambda\,\zeta(\lambda,\lambda_\bg)}
\\
&\qquad\qquad{}
-\frac{1-(\lambda+\lambda_{\bg})\E[H]}{\lambda^2 \zeta(\lambda,\lambda_\bg)^{2}}\,
\left(\zeta(\lambda,\lambda_\bg)+\lambda \zeta'(\lambda,\lambda_\bg)\right),
\end{align*}
where $\zeta(\lambda,\lambda_\bg)$ is the unique solution to \eqref{eq:Ex2FP}. 
We obtain by differentiating \eqref{eq:Ex2FP} that 
\[\zeta'(\lambda,\lambda_\bg) = f_H'(\lambda+\lambda_\bg -\lambda_\bg\zeta(\lambda,\lambda_\bg))\cdot(1-{\lambda_\bg\zeta'(\lambda,\lambda_\bg)}),\]
so that
\begin{equation}\label{eq:zeta_lambda_prime}
\zeta'(\lambda,\lambda_\bg) =\frac{f_H'(\lambda + \lambda_{\bg} - \lambda_{\bg} \zeta(\lambda,\lambda_\bg))}{1+\lambda_\bg{f_H'(\lambda + \lambda_{\bg} - \lambda_{\bg} \zeta(\lambda,\lambda_\bg))}}.
\end{equation}
Because  the service-time distribution is assumed to be known, the parameter vector has become
\begin{align*}
{\boldsymbol \theta}_\lambda\equiv \left(\lambda_{\bg},\zeta(\lambda,\lambda_\bg),  \zeta'(\lambda,\lambda_\bg)\right).
\end{align*}
However, there is effectively only one parameter to be estimated, because $\zeta(\lambda,\lambda_\bg)$ and $\zeta'(\lambda,\lambda_\bg)$ are computed by solving \eqref{eq:Ex2FP} and \eqref{eq:zeta_lambda_prime} given $(\lambda,\lambda_\bg,f_H^*)$. 
This means that we now have that the vector ${\boldsymbol \theta}_\lambda$ consists of just $\lambda_\bg$, and we continue by working with 
\[g^\circ(\lambda;\lambda_\bg):= g\big(\lambda; \,\lambda_{\bg},\zeta(\lambda,\lambda_\bg),  \zeta'(\lambda,\lambda_\bg)\big).\]
The parameter space, for the parameter $\lambda_\bg$, is then
\begin{align*}
\Theta_\epsilon:=\left[0,\frac{1}{\E[H]}-\epsilon-\lambda_{\max}\right].
\end{align*}

Inferring the background rate is closely related to the inverse problem of
estimating cross-traffic parameters from probe data \cite{BKV2009}. Indeed, they suggest estimating the arrival rate using stationary identities. However, their methodology cannot be directly applied because it relies on exact waiting time data, while we only observe total sojourn time. Moreover, the analysis in
\cite{BKV2009} is either noiseless or establishes consistency under an
i.i.d.\ idealization of probe delays, whereas Lemma~\ref{lem:theta_BG} provides a non-asymptotic $O(1/n)$ mean squared error bound under the dependence induced by the queue, uniformly over the generation rate and the initial state, as required by Assumption~\ref{ass:regularity}(v).

As in Example~1, all packets are informative. Since the packet-size
distribution is the same for both streams, the stationary delay
distribution is that of an M/GI/1 queue with total arrival rate
$\lambda+\lambda_\bg$. In particular,
\begin{equation}
1 - (\lambda+\lambda_\bg)\E[H]
=
\E\!\left[e^{-(\lambda+\lambda_\bg)D}\right].
\end{equation}
By the {\sc pasta} property, a tagged packet generated by the Poisson
source observes the stationary workload distribution upon arrival.
Hence, its stationary delay has the same distribution as the
stationary delay in the M/GI/1 queue with total arrival rate
$\lambda+\lambda_\bg$, even though background delays are not observed
and the full Lindley recursion cannot be reconstructed from the data.
This identity therefore provides an estimating equation for
$\lambda_\bg$. An estimator $\hat{\lambda}_\bg$ can then be obtained
from the sample of delays $(D_1,D_2,\ldots,D_n)$ by solving (with
respect to $\lambda_\bg$)
\begin{align*}
1 - (\lambda+\lambda_\bg)\E[H]
= \frac{1}{n}\sum_{i=1}^{n} e^{-(\lambda+\lambda_\bg)D_i},
\qquad
0 \leq \lambda_\bg < \frac{1}{\E[H]} - \lambda,
\end{align*}
which can equivalently be written as
\begin{equation}
\Psi_n(\lambda_\bg):=\frac{1}{n}\sum_{i=1}^{n} \psi(D_i; \lambda_\bg)=0,
\qquad
0 \leq \lambda_\bg < \frac{1}{\E[H]} - \lambda,
\label{eq:lambda_bg_estimator}
\end{equation}
where
\begin{equation}
\psi(\delta; \lambda_\bg) := 1 - (\lambda + \lambda_\bg)\E[H] - e^{-(\lambda + \lambda_\bg)\delta}.
\label{eq:Psi-d-l}
\end{equation}
Note that, for a fixed $\delta \geq 0$, the function $\psi(\delta;\lambda_\bg)$ is well-defined for all $\lambda_\bg \in \mathbb{R}$. Moreover, since $\psi(\delta;\lambda_\bg)$ is concave in $\lambda_\bg$, the empirical average
\[
    \frac{1}{n^{\step{k}}}\sum_{i=1}^{n^{\step{k}}} \psi(D_i;\lambda_\bg)
\]
is also concave in $\lambda_\bg \in \mathbb{R}$ for any fixed sample $(D_1,D_2,\ldots,D_{n^{\step{k}}})$. In addition, by \eqref{eq:Psi-d-l},
\begin{align}
\psi(\delta; -\lambda) &= 0, \qquad \delta \geq 0, 
\label{eq:solution-neg-lmd}
\\
\psi\Bigl(\delta; \frac{1}{\E[H]} - \lambda\Bigr) &< 0, \qquad \delta \geq 0.
\label{eq:neg-at-right-boundary-lmd}
\end{align}
Since a concave function that is not affine can have at most two zeros, \eqref{eq:solution-neg-lmd} implies that \eqref{eq:lambda_bg_estimator} admits at most one solution. Furthermore, \eqref{eq:neg-at-right-boundary-lmd} yields
\[
    \frac{1}{n^{\step{k}}}\sum_{i=1}^{n^{\step{k}}} \psi\Bigl(D_i; \frac{1}{\E[H]} - \lambda\Bigr) < 0,
\]
so that if
\[
    \frac{1}{n^{\step{k}}}\sum_{i=1}^{n^{\step{k}}} \psi(D_i; 0) \geq 0,
\]
then \eqref{eq:lambda_bg_estimator} admits a unique solution, while otherwise no solution exists.
Since $\psi(\delta;0)$ is increasing in $\delta$, this implies that
\eqref{eq:lambda_bg_estimator} may have no solution when the observed
sample $(D_1,\ldots,D_{n^{\step{k}}})$ takes relatively small values due
to random fluctuations. In this case, we set
$\hat{\lambda}_{\bg}=0$. Therefore, writing $\lambda_\bg^*$ for the
(unique, when it exists) solution of \eqref{eq:lambda_bg_estimator},
we obtain
\begin{equation}\label{eq:lambda_bg_hat}
\hat{\lambda}_{\bg}^{\step{k}}
=
\lambda_\bg^*\,{\boldsymbol 1}
\left\{
\frac{1}{n^{\step{k}}}
\sum_{i=1}^{n^{\step{k}}}\psi(D_i;0)\geq 0
\right\}.
\end{equation}

In practice, the estimator may be inefficient when the system is
temporarily heavily loaded, since a long observation period may be
required before the delay process becomes representative of its
stationary regime. To mitigate this transient effect, we augment the
algorithm with a recovery step: if the last observed delay in an
iteration exceeds a threshold $w$, packets are subsequently generated
at rate $\lambda_{\min}$ until an observation with delay at most $w$
is obtained. This recovery step does not alter the SGD update itself;
it is used only to control the transient workload between iterations.
The resulting implementation is summarized in
Algorithm~\ref{alg:background}.

\begin{algorithm}[t]
\caption{Online optimization for Example~2}
\label{alg:background}
\begin{algorithmic}[1]
\STATE Choose $\lambda^{\step{0}}\in\mathcal L$, $\epsilon>0$, and $w>0$.
\FOR{$k=0,1,2,\ldots$}
\STATE Generate $n^{\step{k}}$ packets at rate $\lambda^{\step{k}}$.
\STATE Compute estimator $\hat\lambda_{\bg}^{\step{k}}$ via \eqref{eq:lambda_bg_hat}.
\STATE Truncate estimator to $\Theta_\epsilon$: set
$\hat\lambda_{\bg}^{\step{k}}
:=\min\!\left\{\hat\lambda_{\bg}^{\step{k}},
{1/\E[H]}\,-\,\epsilon\,-\,\lambda_{\max}\right\}$.
\STATE Update
$\lambda^{\step{k+1}}
:=\Pi_{\mathcal L}\!\left\{\lambda^{\step{k}}
-\gamma^{\step{k}}\!\left[
g^\circ\!\left(\lambda^{\step{k}};\hat\lambda_{\bg}^{\step{k}}\right)
+\kappa'\!\left(\lambda^{\step{k}}\right)\right]\right\}$.
\IF{$D_{n^{\step{k}}}>w$}
\STATE Generate tagged packets at rate $\lambda_{\min}$ until $D_j\leq w$.
\ENDIF
\ENDFOR
\end{algorithmic}
\end{algorithm}

The following two lemmas verify the conditions in Assumption~\ref{ass:regularity}(iv)--(v) are satisfied by Algorithm~\ref{alg:background}. The proofs are given in Appendix \ref{appendix:regularity}.
\begin{lemma}\label{lem:g_lip_bg}
With $\Theta=\Theta_\epsilon$, $g^\circ(\lambda;\,\lambda_\bg)$ fulfills \eqref{eq:Lips}. 
\end{lemma}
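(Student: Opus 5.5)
We need to show that $g^\circ(\lambda;\lambda_\bg)$ is Lipschitz in $\lambda_\bg$ on $\Theta_\epsilon=[0,1/\E[H]-\epsilon-\lambda_{\max}]$, uniformly in $\lambda\in\mathcal L$. The plan is to show that $g^\circ$ is continuously differentiable in $\lambda_\bg$ on an open neighbourhood of the compact set $\mathcal L\times\Theta_\epsilon$. Its partial derivative $\partial g^\circ/\partial\lambda_\bg$ is then bounded there by continuity, and the mean value theorem gives \eqref{eq:Lips} with $C_1:=\sup_{\mathcal L\times\Theta_\epsilon}|\partial g^\circ/\partial\lambda_\bg|$. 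Since $\Theta_\epsilon$ is an interval, hence convex, the mean value theorem applies directly.

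\textbf{Step 1: uniform positivity of the denominators.} For $(\lambda,\lambda_\bg)\in\mathcal L\times\Theta_\epsilon$ we have $1-(\lambda+\lambda_\bg)\E[H]\ge\epsilon\E[H]>0$ and $\lambda\ge\lambda_{\min}>0$.

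We also need $\zeta(\lambda,\lambda_\bg)$ bounded away from zero. By \eqref{eq:zeta-by-f_B_bg}, $\zeta=f^*_{B_\bg}(\lambda)\ge e^{-\lambda\E[B_\bg]}$ by Jensen. Here $\E[B_\bg]=\E[H]/(1-\lambda_\bg\E[H])$ is bounded on $\Theta_\epsilon$, so $\zeta\ge e^{-\lambda_{\max}\E[H]/(\epsilon\E[H])}>0$.

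Finally, consider the denominator $1+\lambda_\bg f_H'(s)$ in \eqref{eq:zeta_lambda_prime}, where $s=\lambda+\lambda_\bg(1-\zeta)$. Since $|f_H'(s)|=\E[He^{-sH}]\le\E[H]$, we get $1+\lambda_\bg f_H'(s)\ge1-\lambda_\bg\E[H]\ge(\lambda_{\max}+\epsilon)\E[H]>0$.

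All these bounds are uniform and persist on a slightly enlarged open set.

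\textbf{Step 2: smoothness of $\zeta$ in $\lambda_\bg$.} Define $F(x,\lambda_\bg)=x-f_H^*(\lambda+\lambda_\bg-\lambda_\bg x)$. Then $\partial_xF=1+\lambda_\bg f_H'(s)>0$ by Step 1. Because $s\ge\lambda\ge\lambda_{\min}>0$, $f_H^*$ is $C^\infty$ at $s$, and all derivatives $f_H^{(j)}(s)=(-1)^j\E[H^je^{-sH}]$ are bounded by $\sup_x x^je^{-\lambda_{\min}x}$. The implicit function theorem then gives that $\zeta$ is $C^1$ in $\lambda_\bg$, with
\[
\partial_{\lambda_\bg}\zeta=\frac{f_H'(s)(1-\zeta)}{1+\lambda_\bg f_H'(s)},
\]
which is bounded. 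Differentiating \eqref{eq:zeta_lambda_prime} in $\lambda_\bg$ involves only $f_H''(s)$, $\partial_{\lambda_\bg}\zeta$, and the denominator from Step 1. Hence $\zeta'$ is $C^1$ in $\lambda_\bg$ with bounded derivative.

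\textbf{Step 3: chain rule.} The function $g$ is a rational expression in $(\lambda,\lambda_\bg,\zeta,\zeta')$ whose denominators are $(1-(\lambda+\lambda_\bg)\E[H])^2$, $\lambda^2$, $\lambda\zeta$ and $\lambda^2\zeta^2$, all bounded below by Step 1. Applying the chain rule yields $\partial g^\circ/\partial\lambda_\bg$ as a continuous function on the compact set $\mathcal L\times\Theta_\epsilon$. It is therefore bounded, and the mean value theorem finishes the proof.

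The main obstacle is Step 1: obtaining uniform lower bounds for $\zeta$ and for $1+\lambda_\bg f_H'$. This is where the truncation $\epsilon$ in $\Theta_\epsilon$ is essential. The Jensen bound on the busy-period LST is the first route to try; if it fails, one can instead use the fixed-point equation directly, which gives $\zeta\ge f_H^*(\lambda+\lambda_\bg)\ge f_H^*(1/\E[H])>0$.
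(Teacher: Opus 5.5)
Your proposal is correct and follows essentially the same route as the paper: uniform lower bounds on $1-(\lambda+\lambda_\bg)\E[H]$, $\lambda$, $\zeta$ and the denominator $1+\lambda_\bg f_H'(s)$ in \eqref{eq:zeta_lambda_prime}; boundedness of $\partial\zeta/\partial\lambda_\bg$ and $\partial\zeta'/\partial\lambda_\bg$ obtained by differentiating the fixed-point equation, using that $f_H'$ and $f_H''$ are bounded since $s\ge\lambda_{\min}$; and then the mean value theorem. The only cosmetic difference is your primary Jensen bound $\zeta\ge e^{-\lambda\E[B_\bg]}$, whereas the paper uses exactly your fallback, $\zeta=f_H^*(s)\ge f_H^*(1/\E[H]-\epsilon)>0$, which follows from monotonicity of $f_H^*$.
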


\begin{lemma}\label{lem:theta_BG}
Suppose that there exists an $\epsilon>0$ such that $\lambda_{\bg}+\lambda_{\max}<1/\E[H]-\epsilon$ and that $\E[H^2]<\infty$. Then \eqref{eq:theta_mixing} applies.
\end{lemma}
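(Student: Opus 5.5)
The plan is to treat $\hat\lambda_\bg^{\step{k}}$ as a Z-estimator and control its mean squared error by a deterministic linearization plus a uniform mixing bound for the delay sequence. Write $\Psi(\lambda_\bg') := \E[\psi(D;\lambda_\bg')]$ for the stationary delay $D$ at the current $\lambda=\lambda^{\step{k}}$. By the identity $1-(\lambda+\lambda_\bg)\E[H]=\E[e^{-(\lambda+\lambda_\bg)D}]$, the true $\lambda_\bg$ is a root of $\Psi$, and $\Psi(-\lambda)=0$ as well. Since $\Psi$ is concave and not affine, we get $\Psi>0$ on $(-\lambda,\lambda_\bg)$ and $\Psi<0$ beyond. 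The derivative at the root is
\[
\Psi'(\lambda_\bg)=-\E[H]+\E\bigl[De^{-(\lambda+\lambda_\bg)D}\bigr].
\]
We argue it is strictly negative, uniformly over $\lambda\in\mathcal L$, using strict concavity together with the chord from $-\lambda\le-\lambda_{\min}<0$. Concretely, $|\Psi'(\lambda_\bg)|\ge \Psi(\lambda_\bg/2)\cdot 2/(\lambda+\lambda_\bg)$-type bounds, made uniform by compactness of $\mathcal L$ and continuity in $\lambda$.

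Step 1 is a deterministic inversion bound. Concavity of $\Psi_n$ and of $\Psi$ gives the following. If $|\hat\lambda_\bg-\lambda_\bg|>\eta$, then $|\Psi_n-\Psi|\ge c\,\eta$ at one of the two points $\lambda_\bg\pm\eta$. Here $c>0$ is the uniform slope bound from above, and the argument uses the sign pattern of $\Psi_n$ (at most one root in $[0,\cdot)$). Combining this with the truncation to the bounded set $\Theta_\epsilon$ and the default $\hat\lambda_\bg=0$, we get
\[
|\hat\lambda_\bg-\lambda_\bg|^2\le C\Bigl(\sup_{\lambda'\in\Theta_\epsilon}|\Psi_n(\lambda')-\Psi(\lambda')|\Bigr)^2\wedge \bar C .
\]
To avoid a supremum, I would instead use a small grid or the two-point argument: bound $\Pr(|\hat\lambda_\bg-\lambda_\bg|>\eta)$ by Chebyshev at $\lambda_\bg\pm\eta$, then integrate over $\eta$. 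Alternatively, one could use a mean-value expansion $\hat\lambda_\bg-\lambda_\bg=-\Psi_n(\lambda_\bg)/\Psi_n'(\xi)$ on the good event where $\Psi_n'\le -c/2$. Either way, the quantity to bound becomes
\[
\E\bigl[(\Psi_n(\lambda')-\Psi(\lambda'))^2\mid \mathcal F^{\step{k-1}},\lambda^{\step{k}}\bigr]\le C/n
\]
uniformly in $\lambda'$ and $\lambda$, plus the same bound for $\Psi_n'$ to control the good event.

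Step 2 is the variance bound for dependent delays, and I expect it to be the main obstacle. The summands $e^{-(\lambda+\lambda')D_i}$ are bounded and $D_ie^{-(\lambda+\lambda')D_i}$ is also bounded, so only covariances matter. I would write
\[
\E\bigl[(\Psi_n-\Psi)^2\bigr]\le 2\Bigl(\frac1n\sum_i(\E\psi(D_i)-\Psi)\Bigr)^2+\frac{2}{n^2}\sum_{i,j}|\mathrm{Cov}(\psi(D_i),\psi(D_j))|.
\]
Both terms need the tagged-delay chain, i.e. the workload seen at Poisson arrival epochs of a stable M/GI/1 queue with rate $\lambda+\lambda_\bg\le 1/\E[H]-\epsilon$, to converge geometrically or summably to stationarity, uniformly in $\lambda$ and in the initial workload. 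This is where the recovery step of Algorithm~\ref{alg:background} enters: it guarantees the initial workload at the start of each iteration is at most $w$ (up to the transient at rate $\lambda_{\min}$). I would couple the chain started from workload $\le w$ with a stationary copy. Both copies coalesce once the queue empties, which happens after the busy period initiated by $w$. The time to empty has mean $\le w/(1-\rho)+O(\E[H^2])$ with $\rho$ bounded away from $1$ uniformly. Only $\E[H^2]<\infty$ is available, so the coupling time has a finite first moment but not necessarily geometric tails.

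That suffices for the bias term, since $\sum_i \Pr(\text{not coupled by } i)$ is finite, giving $O(1/n)$. For the covariances I would use the same regeneration structure. The delays split into cycles at arrivals finding an empty system, and a finite first moment of the cycle length (in number of tagged arrivals) follows from $\E[H^2]<\infty$. The boundedness of $\psi$ then yields
\[
\sum_j|\mathrm{Cov}(\psi(D_i),\psi(D_j))|\le C\,\E[\text{cycle length}]
\]
uniformly. Finally, I would check that the conditioning on $\mathcal F^{\step{k-1}},\lambda^{\step{k}}$ only affects the initial state and is therefore covered by this uniformity.
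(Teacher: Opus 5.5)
\textbf{Step 2: the covariance bound uses the wrong moment.} Your overall plan is the paper's: a concavity-based inversion of the truncated $Z$-estimator, plus an $O(1/n)$ bound for bounded functionals of the tagged delays from the regenerative structure started at workload at most $w$. The problem is in the step you yourself single out as the main obstacle. You claim that boundedness of $\psi$ and a finite \emph{first} moment of the cycle length $N$ give $\sum_j|\mathrm{Cov}(\psi(D_i),\psi(D_j))|\le C\,\E[N]$. That is false for bounded regenerative sequences. Let $X_i$ equal an independent fair coin $\xi_j\in\{0,1\}$ throughout cycle $C_j$. Conditioning on the cycle lengths gives $\Var(\sum_{i\le n}X_i)\ge\tfrac14\E\bigl[\sum_j|C_j\cap\{1,\dots,n\}|^2\bigr]$, which is not $O(n)$ when $\E[N]<\infty=\E[N^2]$. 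Summed covariances from a typical index are governed by the mean \emph{residual} cycle length, which is of order $\E[N^2]/\E[N]$, not by $\E[N]$. What you need is $m_w=\sup_{\lambda\in\mathcal L}\sup_{d_0\le w}\E_\lambda[N_1^2\mid D_0=d_0]<\infty$. The paper writes $\sum_{i\le n}(X_i-\E_\lambda[X])=\sum_{j\le T_n}U_j-R_n$ with $|U_j|\le N_j$. It then notes that the terms $U_j\mathbf 1\{T_n\ge j\}$, $j\ge2$, are orthogonal because $\{T_n\ge j\}$ is determined by earlier cycles, and obtains $6m_w/n$. This is exactly where $\E[H^2]<\infty$ enters: it gives a finite second moment of the busy period, hence of $N$, uniformly since $(\lambda+\lambda_\bg)\E[H]<1-\epsilon\E[H]$. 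A finite first moment of $N$ already follows from $\rho<1$, so you invoke the hypothesis for the wrong quantity.

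\textbf{Step 1: the inversion does not give the stated rate.} Chebyshev at the deterministic points $\lambda_\bg\pm\eta$ gives only $\Pr(|\hat\lambda_\bg-\lambda_\bg|>\eta)\le\min\{1,C/(n\eta^2)\}$. Integrating $2\eta$ times this up to the diameter of $\Theta_\epsilon$ yields $O(\log n/n)$, not the $C_2/n$ required in \eqref{eq:theta_mixing}, and a fixed finite grid does not repair this.

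The linearization route can be made to work, but not with the single constant $c$ from $\Psi'(\lambda_\bg)$. A concave function lies \emph{below} its tangents, so that slope controls only the side $\hat\lambda_\bg>\lambda_\bg$. On the side $\hat\lambda_\bg<\lambda_\bg$, the largest value of $\Psi_n'$ on $[\hat\lambda_\bg,\lambda_\bg]$ sits at the random point $\hat\lambda_\bg$. Moreover, $\Psi'$ can be positive near $0$: $\Psi'(0)=-\E[H]+\E[De^{-\lambda D}]$, and $\E[D]>\E[H]$. For this side, including the default $\hat\lambda_\bg=0$, you need a lower bound on $\Psi$ from strong concavity and the zero at $-\lambda\le-\lambda_{\min}$. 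That is the paper's $c_0=\tfrac12\lambda_{\min}\inf_{\lambda}\E[D^2e^{-(\lambda+\lambda_\bg)D}]$, which gives $\Psi(\ell)\ge c_0(\lambda_\bg-\ell)$ on $[0,\lambda_\bg]$ and $\Pr(\Psi_n(0)<0)=O(1/n)$.

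You also do not need to avoid the supremum. On the interior-root event the paper uses $|\hat\lambda_\bg-\lambda_\bg|\le c^{-1}|\Psi_n(\hat\lambda_\bg)-\Psi(\hat\lambda_\bg)|$. With $Z_n(\ell):=\Psi_n(\ell)-\Psi(\ell)$, it bounds $\E\sup_{\ell}|Z_n(\ell)|^2$ via $Z_n(\ell)=Z_n(\ell_1)+\int_{\ell_1}^{\ell}Z_n'(u)\,\dd u$ and Cauchy--Schwarz. This requires only the pointwise $O(1/n)$ bounds for $Z_n$ and $Z_n'$ that you were already planning to prove.
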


Applying the same arguments as in the proof of Proposition~\ref{prop:FD_convergence}, with Lemmas~\ref{lem:g_lip_bg} and~\ref{lem:theta_BG} in place of Lemmas~\ref{lem:g_lip} and~\ref{lem:theta_FD}, respectively, yields the following result.
\begin{proposition}\label{prop:BG_convergence}
Suppose that $\lambda_{\bg}+\lambda_{\max}<1/\E[H]-\epsilon$ and that $\E[H^2]<\infty$.  Then,
\[
    \lambda^{\step{k}} \xrightarrow{\text{\rm a.s.}} \lambda_* \quad \text{as } k \to \infty.
\]
\end{proposition}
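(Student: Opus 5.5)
The proof follows the template of Proposition~\ref{prop:FD_convergence}: verify parts (i)--(v) of Assumption~\ref{ass:regularity} for Algorithm~\ref{alg:background} and then invoke Theorem~\ref{thm:SGD_convergence_2}.

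\emph{Assumption~\ref{ass:regularity}(i) and (ii).} Since $\lambda_{\bg}+\lambda_{\max}<1/\E[H]-\epsilon$, the stability condition \eqref{eq:stability} holds uniformly for $\lambda\in\mathcal{L}$. Together with $\E[H^2]<\infty$, this makes the mean AoI \eqref{eq:EA_bg} finite on $\mathcal{L}$. For continuous differentiability, I would note that $\zeta(\lambda,\lambda_\bg)=f_{B_\bg}^*(\lambda)$ by \eqref{eq:zeta-by-f_B_bg}. This is a Laplace--Stieltjes transform evaluated at $\lambda\geq\lambda_{\min}>0$, so it is smooth there and strictly positive. It is also bounded below by $f_{B_\bg}^*(\lambda_{\max})>0$. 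The explicit derivative formula, with $\zeta'$ given by \eqref{eq:zeta_lambda_prime}, then supplies the representation in (ii). Here the denominator $1+\lambda_\bg f_H'(\cdot)$ must be nonzero. I would argue it is positive because $\lambda_\bg|f_H'(s)|\le\lambda_\bg\E[H]<1$.

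\emph{Assumption~\ref{ass:regularity}(iii).} Theorem~\ref{thm:FB_convex} gives strict convexity of $\E_\lambda[A]$, and $\kappa$ is convex, so the objective has a unique minimizer on $\mathcal{L}$. Exactly as in Proposition~\ref{prop:FD_convergence}, I would take $\lambda_*$ to be interior to $\mathcal{L}$. This is implicit in the statement; if it is not assumed, I would treat the boundary case through the projected ODE, whose unique stable point is the constrained minimizer, which Kushner--Yin's theorem also covers. An interior minimizer yields the zero-gradient characterization \eqref{eq:zero_gradient}.

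\emph{Assumption~\ref{ass:regularity}(iv) and (v).} Part (iv) is Lemma~\ref{lem:g_lip_bg}, which applies because the truncation step in Algorithm~\ref{alg:background} keeps $\hat\lambda_\bg^{\step{k}}\in\Theta_\epsilon$. Part (v) is Lemma~\ref{lem:theta_BG}, whose hypotheses are exactly those of the proposition. One point needs care: the recovery step in Algorithm~\ref{alg:background} inserts extra packets between iterations. I would check two consequences. First, the $\sigma$-algebra $\mathcal{F}^{\step{k-1}}$ now includes these packets, which is harmless because Lemma~\ref{lem:theta_BG} is uniform over initial states. Second, the decision times still satisfy $t^{\step{k}}\to\infty$. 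The $\lambda$-recursion itself is untouched by the recovery step. With all five parts verified, Theorem~\ref{thm:SGD_convergence_2} gives $\lambda^{\step{k}}\to\lambda_*$ almost surely.

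\emph{Expected main obstacle.} The hard step is the bound in (v), which is delegated to Lemma~\ref{lem:theta_BG}. Within the present argument, the only real work is making sure the algorithm's modifications fit the framework. The truncation of $\hat\lambda_\bg$ to $\Theta_\epsilon$ and the recovery step must not break the conditional bound \eqref{eq:theta_mixing}. One must also confirm that interiority of $\lambda_*$ is assumed or else handled by the projected version of the convergence argument.
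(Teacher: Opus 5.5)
Your proposal is correct and is essentially the paper's argument: the paper proves this proposition by rerunning the proof of Proposition~\ref{prop:FD_convergence}, using Theorem~\ref{thm:FB_convex}, Lemma~\ref{lem:g_lip_bg} and Lemma~\ref{lem:theta_BG} in place of their Example~1 counterparts, and then invoking Theorem~\ref{thm:SGD_convergence_2}. You are right to flag interiority, since the paper silently carries over $\lambda_{\min}<\lambda_*<\lambda_{\max}$ from Proposition~\ref{prop:FD_convergence}; and on the recovery step, Lemma~\ref{lem:D_mixing} is uniform only over initial delays $d_0\le w$, so the recovery step is not merely harmless but is exactly what secures this hypothesis at the start of each iteration.
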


\subsection{Example 3: Preemptive LCFS System with unknown service-time distribution}\label{sec:application_C}

Recall that the gradient is given by, with the parameter vector ${\boldsymbol \theta}_\lambda 
\equiv 
\left(f_H^*(\lambda), f_H'(\lambda)\right)$,
\begin{align*}
g(\lambda; {\boldsymbol \theta}_\lambda)
&=
-\frac{f_H^*(\lambda) + \lambda f_H'(\lambda)}{\lambda^2 \{f_H^*(\lambda)\}^2}.
\end{align*}
The packet sizes are not always observed, but whenever a message is received then the packet size is simply the time since the last generation. This means that for every generated message we observe the indicator $\mathbf{1}\{H_i<G_i^\dag\}$. This yields the following unbiased estimator:
\begin{align}
\hat{\boldsymbol \theta}^{\step{k}}=\left(\frac{1}{n^{\step{k}}}\sum_{i=1}^{n^{\step{k}}}\mathbf{1}\{H_i<G_i^\dag\}, \ -\frac{1}{n^{\step{k}}}\sum_{i=1}^{n^{\step{k}}}H_i\mathbf{1}\{H_i<G_i^\dag\}\right)\ .
\end{align}
Note that the second coordinate can be extracted from the data even though $H_i$ is not observed if $H_i>G_i^\dag$, in which case the indicator returns a zero.

As both parameters are constructed from i.i.d.\ observations, the
conditions in Assumption~\ref{ass:regularity}(iv)--(v) can be verified
using the same type of arguments as in Example~1. In particular, the
first component of $\hat{\boldsymbol\theta}^{\step{k}}$ is bounded, while
the second component is bounded in absolute value by $H_i$. Thus,
Assumption~\ref{ass:regularity}(v) holds under
$\E[H^2]<\infty$. Moreover, the Lipschitz condition in
Assumption~\ref{ass:regularity}(iv) follows directly from the form of
the gradient, since
\[
    g(\lambda;\boldsymbol\theta)
    =
    -\frac{\theta_1+\lambda\theta_2}{\lambda^2\theta_1^2}
\]
is continuously differentiable with respect to $\boldsymbol\theta$
on the relevant parameter space, provided $\theta_1$ is bounded away
from zero. Consequently, under the stated moment and parameter-space
conditions, the convergence result of
Theorem~\ref{thm:SGD_convergence_2} applies.

\subsection{Example 4: Generate-at-will system with threshold waiting}\label{sec:application_D}

The dispatcher observes packet arrivals and sends a new packet if more than $1/\lambda$ units have passed since the last generation, and otherwise waits for the remainder; this means that, by \eqref{eq:xi},
\begin{align*}
\xi_\lambda(H)=\mathbf{1}\left\lbrace H<\frac{1}{\lambda}\right\rbrace\left(\frac{1}{\lambda}-H\right).
\end{align*}
We assume $H$ has an absolutely continuous distribution. In this case the gradient is given by \eqref{eq:gEx4},
where 
\[
    G_n^{\dag} = H_{n-1} + \xi_{\lambda}(H_{n-1}),
    \qquad
    D_n^{\dag} = H_n.
\]
The parameter vector \eqref{eq:thetaEx4} can be estimated directly from the empirical means:
\begin{align*}
\hat{\boldsymbol \theta}^{\step{k}}=\left(\frac{1}{n^{\step{k}}}\sum_{i=1}^{n^{\step{k}}}(H_i+\xi_{\lambda^{\step{k}}}(H_{i})),\ \frac{1}{n^{\step{k}}}\sum_{i=1}^{n^{\step{k}}}(H_i+\xi_{\lambda^{\step{k}}}(H_{i}))^2, \ \frac{1}{n^{\step{k}}}\sum_{i=1}^{n^{\step{k}}}\mathbf{1}\left\lbrace H_i<\frac{1}{\lambda^{\step{k}}}\right\rbrace\right)\ .
\end{align*}
The regularity conditions required for convergence can be verified
using the same approach as in the preceding examples. In particular,
the components of $\hat{\boldsymbol\theta}^{\step{k}}$ are empirical
averages of functions of the i.i.d.\ service times. Under the
corresponding finite-moment conditions, their mean squared errors are
of order $O((n^{\step{k}})^{-1})$, yielding
Assumption~\ref{ass:regularity}(v). Moreover, on the compact parameter
space corresponding to $\lambda\in\mathcal L$, the gradient in
\eqref{eq:gEx4} is continuously differentiable with respect to the
components of $\boldsymbol\theta$. Hence it is Lipschitz continuous on
this parameter space, which gives
Assumption~\ref{ass:regularity}(iv). Consequently, provided the
objective has a unique minimizer in $\mathcal L$, the convergence
result of Theorem~\ref{thm:SGD_convergence_2} applies.


\section{Numerical Study}\label{sec:num}

This section evaluates the proposed structure-aware online optimization
framework on Examples~1--4. We first describe the common experimental setup
and performance measures, and then give an overview of the algorithm's
behavior across the four examples. We next examine several features in
greater detail, including the effect of service-time variability, stability
and its detection under incomplete information, and statistical uncertainty
and stopping. Finally, we compare the proposed structure-aware method with a
model-free Kiefer--Wolfowitz benchmark to quantify the value of exploiting
the analytical gradient representation.

\subsection{Experimental setup}
\label{sec:sim_setup}

Denote the optimal value of the objective function by
\begin{align*}
Q_* :=
\min_{\lambda \in \mathcal{L}}
\bigl(\E_{\lambda}[A]+\kappa(\lambda)\bigr)
=
\E_{\lambda_*}[A]+\kappa(\lambda_*).
\end{align*}
Throughout, we assess performance using four quantities. First, we track the
trajectory of $\lambda^{\step{k}}$ and its mean squared error
$\E[(\lambda^{\step{k}}-\lambda_*)^2]$. Second, we consider the objective
gap
$
\E_{\lambda^{\step{k}}}[A]
+
\kappa(\lambda^{\step{k}})
-
Q_*,
$
which is non-negative and vanishes at an optimum. Third, we report the
realized time-average cost
\[
    \widehat{Q}_\tau
    :=
    \frac{1}{\tau}
    \sum_{k=0}^{M_{\tau}-1}
    \left\{
    \overline{A}^{\step{k}}
    +
    (t^{\step{k+1}}-t^{\step{k}})
    \kappa(\lambda^{\step{k}})
    \right\},
\]
obtained from the sample path, cf.\ \eqref{eq:objective}. Finally, we report
the consumed budget, measured either by the cumulative number of informative
packets
\[
    N^{\step{k}}
    =
    \sum_{j=1}^k n^{\step{j}},
\]
or by elapsed system time. We report performance against budget rather than
iteration count, since the methods may consume different numbers of samples
per iteration.

Table~\ref{tab:H_families} summarizes the service-time distributions used
throughout the numerical study. All distributions are normalized to
$\E[H]=1$, so that the stability condition in Examples~1 and~2 becomes
$\lambda_{\mathrm{tot}}<1$, where $\lambda_{\mathrm{tot}}$ denotes the total
arrival rate. Within each distribution family, the {\it squared coefficient
                                                    of variation}
\[
    c^2
    :=
    \frac{\Var[H]}{\E[H]^2}
\]
determines the relevant shape parameter.

\begin{table}
\centering
\footnotesize
\begin{tabular}{llccl}
\hline
\rule{0pt}{2.4ex}
Label & Distribution & Parameters & $c^2$ & $f_H^*(s)$ \\[0.3ex]
\hline
\rule{0pt}{2.6ex}
$\mathrm{D}$
& Deterministic
& $H\equiv 1$
& $0$
& $e^{-s}$ \\[0.6ex]
$\mathrm{E}_4$
& Erlang-$4$
& shape $4$, rate $4$
& $1/4$
& $\left(\dfrac{4}{4+s}\right)^{\!4}$ \\[1.4ex]
$\mathrm{M}$
& Exponential
& rate $1$
& $1$
& $\dfrac{1}{1+s}$ \\[1.4ex]
$\mathrm{H}_2^{(4)}$
& Hyperexponential
& $p_{1,2}=\frac{1}{2}\bigl(1\pm\sqrt{3/5}\bigr)$,
$\mu_i=2p_i$
& $4$
& $\displaystyle\sum_{i=1,2}\frac{p_i\mu_i}{\mu_i+s}$ \\[1.6ex]
$\mathrm{H}_2^{(9)}$
& Hyperexponential
& $p_{1,2}=\frac{1}{2}\bigl(1\pm\sqrt{4/5}\bigr)$,
$\mu_i=2p_i$
& $9$
& $\displaystyle\sum_{i=1,2}\frac{p_i\mu_i}{\mu_i+s}$ \\[1.6ex]
$\mathrm{LN}$
& Lognormal
& $\sigma^2=\ln 5$, $\mu=-\sigma^2/2$
& $4$
& no closed form \\[0.6ex]
$\mathrm{P}_{4.5}$
& Pareto
& $\alpha=4.5$, $x_m=7/9$
& $4/45$
& $\alpha (x_m s)^{\alpha}\,\Gamma(-\alpha,x_m s)$ \\[0.6ex]
$\mathrm{P}_{2.5}$
& Pareto
& $\alpha=2.5$, $x_m=3/5$
& $4/5$
& $\alpha (x_m s)^{\alpha}\,\Gamma(-\alpha,x_m s)$ \\[0.6ex]
\hline
\end{tabular}
\caption{Service-time families used in the numerical study, normalized to
         $\E[H]=1$. For the Pareto family,
         $x_m=1-1/\alpha$ and $c^2=1/\{\alpha(\alpha-2)\}$;
         $\Gamma(\cdot,\cdot)$ denotes the upper incomplete gamma function.}
\label{tab:H_families}
\end{table}

For the proposed method, unless otherwise stated, we use the projection
interval $\mathcal{L}=[0.3,0.8]$, step size
$\gamma^{\step{k}}=c_1k^{-1}$. The step-size constant is scaled by the local curvature of the objective,
\begin{equation}
c_1
=
\frac{2}{\mu},
\qquad
\mu
:=
g'(\lambda_*;\bm{\theta}_{\lambda_*})
+
\kappa''(\lambda_*),
\label{eq:curvature-def}
\end{equation}
so that the examples are placed on a comparable optimization scale. This
normalization is used only in the numerical experiments; the convergence
result in Theorem~\ref{thm:SGD_convergence_2} holds for any admissible choice
of the step-size constant. Since $\mu$ depends on the unknown distributional
parameters at the optimum, it is not available to the algorithm and is used
here only for calibration.

As the generation cost function is known, it does not play a role in the performance of the learning algorithm in Examples 1, 2, and 4. We set $\kappa(\lambda)=0$ for all $\lambda>0$ for these examples. In Example 3 the AoI is monotone decreasing with $\lambda$ (as long as the support of $H$ starts at zero), hence the optimization problem is degenerate without generation cost. The chosen cost function for Example 3 is set to $\kappa(\lambda)=0.25\,\lambda$.

The projection interval is chosen to contain the optima considered while
avoiding the steep gradient behavior near the stability boundaries. A wider
interval such as $\mathcal{L}=[0.05,0.95]$ produces a longer initial
transient and can substantially increase the realized time-average cost,
since low generation rates consume disproportionate amounts of elapsed
time. For Example~1 we set $\mathcal{L}=[0.3,0.8]$, and for Example~2, $\mathcal{L}=[0.15,0.65]$ keeps $\lambda_*$ interior while respecting the joint stability constraint. Examples~3 and 4 admit larger optimal rates and we therefore set $\mathcal{L}=[0.2,5]$ and $\mathcal{L}=[0.3,1.2]$, respectively.

In Example~2, the algorithm is Algorithm~\ref{alg:background}, including its stabilizing burn step, and the burn packets are charged to the budget: they contribute to both $N^{\step{k}}$ and the realized cost in exactly the same way as the informative packets of the iteration. We set $w=100$, that is, $w=100\,\E[H]$. Since Lemma~\ref{lem:D_mixing} holds for any finite threshold, $w$ is not a tuning parameter; choosing it sufficiently large makes the burn inactive in all iterations. Thus, the threshold is relevant for the theoretical analysis but does not affect the numerical performance.

\subsection{Convergence across the four examples}
\label{sec:sim_overview}

\begin{figure}
\centering
\begin{subfigure}[t]{0.48\linewidth}
\centering
\begin{tikzpicture}
\begin{axis}[
  width=6.2cm,
  height=4.4cm,
  xlabel={iteration $k$},
  ylabel={$\lambda^{\step{k}}/\lambda_*$},
  ylabel style={rotate=-90, anchor=east, xshift=-2pt, font=\small},
  xlabel style={font=\small},
  tick label style={font=\footnotesize},
  legend style={font=\scriptsize},
  grid=major,
  grid style={line width=0.25pt, draw=gray!22},
  axis lines=left,
  every axis plot/.append style={line width=0.7pt},
  xmin=0, xmax=100,
  ymin=0, ymax=2.6
]
  \addplot[red, very thick, dashed, no marks] coordinates {(0,1) (100,1)};
  \addplot[black, line width=0.6pt, no marks] table[x=k, y=rel] {xex_ex1.dat};
\end{axis}
\end{tikzpicture}
\caption{Example 1, FCFS}
\end{subfigure}\hfill
\begin{subfigure}[t]{0.48\linewidth}
\centering
\begin{tikzpicture}
\begin{axis}[
  width=6.2cm,
  height=4.4cm,
  xlabel={iteration $k$},
  ylabel={$\lambda^{\step{k}}/\lambda_*$},
  ylabel style={rotate=-90, anchor=east, xshift=-2pt, font=\small},
  xlabel style={font=\small},
  tick label style={font=\footnotesize},
  legend style={font=\scriptsize},
  grid=major,
  grid style={line width=0.25pt, draw=gray!22},
  axis lines=left,
  every axis plot/.append style={line width=0.7pt},
  xmin=0, xmax=100,
  ymin=0, ymax=2.6
]
  \addplot[red, very thick, dashed, no marks] coordinates {(0,1) (100,1)};
  \addplot[black, line width=0.6pt, no marks] table[x=k, y=rel] {xex_ex2.dat};
\end{axis}
\end{tikzpicture}
\caption{Example 2, background stream}
\end{subfigure}\\[1ex]
\begin{subfigure}[t]{0.48\linewidth}
\centering
\begin{tikzpicture}
\begin{axis}[
  width=6.2cm,
  height=4.4cm,
  xlabel={iteration $k$},
  ylabel={$\lambda^{\step{k}}/\lambda_*$},
  ylabel style={rotate=-90, anchor=east, xshift=-2pt, font=\small},
  xlabel style={font=\small},
  tick label style={font=\footnotesize},
  legend style={font=\scriptsize},
  grid=major,
  grid style={line width=0.25pt, draw=gray!22},
  axis lines=left,
  every axis plot/.append style={line width=0.7pt},
  xmin=0, xmax=100,
  ymin=0, ymax=2.6
]
  \addplot[red, very thick, dashed, no marks] coordinates {(0,1) (100,1)};
  \addplot[black, line width=0.6pt, no marks] table[x=k, y=rel] {xex_ex3.dat};
\end{axis}
\end{tikzpicture}
\caption{Example 3, preemptive LCFS}
\end{subfigure}\hfill
\begin{subfigure}[t]{0.48\linewidth}
\centering
\begin{tikzpicture}
\begin{axis}[
  width=6.2cm,
  height=4.4cm,
  xlabel={iteration $k$},
  ylabel={$\lambda^{\step{k}}/\lambda_*$},
  ylabel style={rotate=-90, anchor=east, xshift=-2pt, font=\small},
  xlabel style={font=\small},
  tick label style={font=\footnotesize},
  legend style={font=\scriptsize},
  grid=major,
  grid style={line width=0.25pt, draw=gray!22},
  axis lines=left,
  every axis plot/.append style={line width=0.7pt},
  xmin=0, xmax=100,
  ymin=0, ymax=2.6
]
  \addplot[red, very thick, dashed, no marks] coordinates {(0,1) (100,1)};
  \addplot[black, line width=0.6pt, no marks] table[x=k, y=rel] {xex_ex4.dat};
\end{axis}
\end{tikzpicture}
\caption{Example 4, generate-at-will}
\end{subfigure}
\caption{The estimated rate in all four examples, normalized by $\lambda_*$, on a common axis. All four panels use lognormal service with $\E[H]=1$ and $c^2=4$, so that they differ in the model rather than in the input; the Example~2 panel has $\lambda_{\mathrm{bg}}=0.3$ and the Example~3 panel $\kappa(\lambda)=0.25\,\lambda$. Each run uses $K=100$ iterations with $n^{\step{k}}=25k$. The four objectives differ substantially in curvature, yet the normalized trajectories have comparable shapes under the curvature scaling in \eqref{eq:curvature-def}.}
\label{fig:xexn}
\end{figure}

\begin{figure}
\centering
\begin{tikzpicture}
\begin{axis}[
  width=12.0cm,
  height=7cm,
  xlabel={iteration $k$},
  ylabel={},
  ylabel style={rotate=-90, anchor=east, xshift=-2pt},
  grid=major,
  grid style={line width=0.25pt, draw=gray!22},
  axis lines=left,
  every axis plot/.append style={line width=0.7pt},
  xmin=0, xmax=200,
  legend pos=north east,
  legend cell align=left
]
  \addplot[black, no marks] table[x=k, y=lambda] {ex2_traj.dat};
  \addlegendentry{$\lambda^{\step{k}}$}
  \addplot[red, very thick, dashed, no marks] coordinates {(0,0.3644) (200,0.3644)};
  \addlegendentry{$\lambda_*$}
  \addplot[blue, no marks] table[x=k, y=lbg_hat] {ex2_traj.dat};
  \addlegendentry{$\hat\lambda_{\mathrm{bg}}^{\step{k}}$}
  \addplot[blue, thick, dashed, no marks] coordinates {(0,0.300) (200,0.300)};
  \addlegendentry{$\lambda_{\mathrm{bg}}$}
\end{axis}
\end{tikzpicture}
\caption{Example 2: iterates of Algorithm~\ref{alg:background} and the
                         $Z$-estimator of the background rate, with $\E[H]=1$ and
                         $\lambda_{\mathrm{bg}}=0.3$, so the joint stability boundary is
                         $\lambda<0.7$. The monitor observes neither background arrivals nor their
                         delays. Exponential service, $\mathcal{L}=[0.15,0.65]$,
                         $\lambda^{\step{0}}=0.6$ and $n^{\step{k}}=100k$ over $200$ iterations; the
                         burn step was not triggered.}
\label{fig:ex2_traj}
\end{figure}

Figure~\ref{fig:xexn} gives an overview of the proposed algorithm across all
four examples using a common service-time distribution. The same tuning rule
is used throughout, with the starting point set to $85\%$ of the projection
interval. The terminal iterates are within $1.9\%$
of $\lambda_*$ in all four cases. The optimal rates range from approximately
$0.28$ in Example~2 to $2.07$ in Example~3, reflecting differences between
the underlying models rather than differences in the optimization procedure.
After an initial descent, the iterates settle near their respective optima.

Overall, these results show that the proposed structure-aware algorithm
successfully converges across all four examples, despite the substantial
differences in their queueing models, information structures, and objective
scales. In particular, the common tuning rule is sufficient to obtain
accurate solutions without example-specific adjustment of the optimization
procedure. The figure therefore provides a first indication that the
analytical gradient representation is not tied to a particular model
structure, but can support effective online optimization across the range of
incomplete-information settings considered here.

For Example~2, Figure~\ref{fig:ex2_traj} additionally shows the estimate $\hat\lambda_{\bg}^{\step{k}}$ of the unobserved background rate. Despite being inferred solely from the monitor's own observations, the estimate converges to the true value while the generation rate simultaneously
approaches $\lambda_*$. The feasible upper limit
$1/\E[H]-\lambda_{\bg}$ is therefore learned implicitly. The rate estimate
can remain noisy after the optimization has essentially converged,
indicating that accurate pointwise estimation of $\lambda_{\bg}$ is not
necessary for effective optimization.

\subsection{Effect of service-time variability}
\label{sec:sim_A}

\begin{figure}
\centering
\begin{subfigure}[t]{0.48\linewidth}
\centering
\begin{tikzpicture}
\begin{axis}[
  width=6.2cm,
  height=4.4cm,
  xlabel={iteration $k$},
  ylabel={$\lambda^{\step{k}}$},
  ylabel style={rotate=-90, anchor=east, xshift=-2pt, font=\small},
  xlabel style={font=\small},
  tick label style={font=\footnotesize},
  legend style={font=\scriptsize},
  grid=major,
  grid style={line width=0.25pt, draw=gray!22},
  axis lines=left,
  every axis plot/.append style={line width=0.7pt},
  xmin=0, xmax=200,
  ymin=0.28, ymax=0.82
]
  \addplot[red, very thick, dashed, no marks] coordinates {(0,0.6291) (200,0.6291)};
  \addplot[black, line width=0.6pt, no marks] table[x=k, y=lambda] {ex1_traj_D.dat};
\end{axis}
\end{tikzpicture}
\caption{$\mathrm{D}$, $c^2=0$, $\lambda_*\approx0.629$}
\end{subfigure}\hfill
\begin{subfigure}[t]{0.48\linewidth}
\centering
\begin{tikzpicture}
\begin{axis}[
  width=6.2cm,
  height=4.4cm,
  xlabel={iteration $k$},
  ylabel={$\lambda^{\step{k}}$},
  ylabel style={rotate=-90, anchor=east, xshift=-2pt, font=\small},
  xlabel style={font=\small},
  tick label style={font=\footnotesize},
  legend style={font=\scriptsize},
  grid=major,
  grid style={line width=0.25pt, draw=gray!22},
  axis lines=left,
  every axis plot/.append style={line width=0.7pt},
  xmin=0, xmax=200,
  ymin=0.28, ymax=0.82
]
  \addplot[red, very thick, dashed, no marks] coordinates {(0,0.5310) (200,0.5310)};
  \addplot[black, line width=0.6pt, no marks] table[x=k, y=lambda] {ex1_traj_M.dat};
\end{axis}
\end{tikzpicture}
\caption{$\mathrm{M}$, $c^2=1$, $\lambda_*\approx0.531$}
\end{subfigure}\\[1ex]
\begin{subfigure}[t]{0.48\linewidth}
\centering
\begin{tikzpicture}
\begin{axis}[
  width=6.2cm,
  height=4.4cm,
  xlabel={iteration $k$},
  ylabel={$\lambda^{\step{k}}$},
  ylabel style={rotate=-90, anchor=east, xshift=-2pt, font=\small},
  xlabel style={font=\small},
  tick label style={font=\footnotesize},
  legend style={font=\scriptsize},
  grid=major,
  grid style={line width=0.25pt, draw=gray!22},
  axis lines=left,
  every axis plot/.append style={line width=0.7pt},
  xmin=0, xmax=200,
  ymin=0.28, ymax=0.82
]
  \addplot[red, very thick, dashed, no marks] coordinates {(0,0.4038) (200,0.4038)};
  \addplot[black, line width=0.6pt, no marks] table[x=k, y=lambda] {ex1_traj_H2_c4.dat};
\end{axis}
\end{tikzpicture}
\caption{$\mathrm{H}_2^{(4)}$, $c^2=4$, $\lambda_*\approx0.404$}
\end{subfigure}\hfill
\begin{subfigure}[t]{0.48\linewidth}
\centering
\begin{tikzpicture}
\begin{axis}[
  width=6.2cm,
  height=4.4cm,
  xlabel={iteration $k$},
  ylabel={$\lambda^{\step{k}}$},
  ylabel style={rotate=-90, anchor=east, xshift=-2pt, font=\small},
  xlabel style={font=\small},
  tick label style={font=\footnotesize},
  legend style={font=\scriptsize},
  grid=major,
  grid style={line width=0.25pt, draw=gray!22},
  axis lines=left,
  every axis plot/.append style={line width=0.7pt},
  xmin=0, xmax=200,
  ymin=0.28, ymax=0.82
]
  \addplot[red, very thick, dashed, no marks] coordinates {(0,0.4043) (200,0.4043)};
  \addplot[black, line width=0.6pt, no marks] table[x=k, y=lambda] {ex1_traj_LN.dat};
\end{axis}
\end{tikzpicture}
\caption{$\mathrm{LN}$, $c^2=4$, $\lambda_*\approx0.404$}
\end{subfigure}
\caption{Example 1: iterates of the proposed algorithm against the optimal rate $\lambda_*$ (dashed), for four of the service-time families of Table~\ref{tab:H_families} with $\E[H]=1$ and $\kappa(\lambda)=0$ for all $\lambda>0$.}
\label{fig:ex1_traj}
\end{figure}

We next examine how service-time variability affects the optimal operating
rate and the performance of the proposed algorithm. Example~1 provides a
natural setting for this study because the service-time distribution is
unknown, while the relevant service times can be recovered from the observed
timestamps. We therefore vary the service-time distribution in Example~1
while keeping the remaining setup fixed. The cross-example results in
Figure~\ref{fig:xexn} show that the resulting convergence behavior is not
specific to Example~1.

Figure~\ref{fig:ex1_traj} shows the convergence behavior for four
service-time families. Starting from $\lambda^{\step{0}}=0.75$, the iterates
reach the vicinity of $\lambda_*$ within a few iterations in all four cases.
Although the first update reaches the lower projection boundary, the
subsequent iterates quickly approach and remain near their respective
optima, with $\widehat{Q}_\tau$ agreeing with $Q_*$ to three decimal places.
Thus, the proposed method remains effective across different levels and
shapes of service-time variability.

The main effect of service-time variability is on the optimal operating
rate, which decreases from $\lambda_*\approx0.629$ for deterministic service
to $\lambda_*\approx0.531$ for exponential service and
$\lambda_*\approx0.404$ for the two distributions with $c^2=4$. Despite
their different shapes, the two $c^2=4$ distributions have nearly identical
optimal rates. More generally, across six service-time distributions with
$\E[H]=1$ and $c^2=4$, the optimal rate varies by less than $1\%$, whereas
the optimal objective value varies by about $7\%$. Thus, the first two
moments of $H$ provide a good indication of the optimal operating rate, but
are substantially less informative about the optimal objective value.

The deterministic case provides a useful benchmark: the service times are
recovered exactly from the timestamps, so $\hat{\bm{\theta}}$ is error-free
for every sample size and the proposed method reduces to exact projected
gradient descent. For the other service-time families, the observed
differences therefore reflect both service-time variability and the
statistical error in estimating the unknown distribution.

\subsection{Stability and incomplete-information safeguards}
\label{sec:sim_stability}

We next examine behavior near the stability boundary, focusing on the
consequences of violating the theoretical stability condition and on the
effectiveness of the data-driven safeguards when the stability region is
unknown. 
We implement two safeguards that replace the fixed truncation set
$\Theta_\epsilon$ with the data-driven cap $1/\lambda^{\step{k}}-\epsilon$.
The safeguards are activated when an iteration is flagged as approaching
the stability boundary. For Examples~1, 3, and~4, an iteration is flagged
when the estimated load reaches one to within a margin
$\epsilon_{\mathrm{stab}}$, that is, when
\begin{equation}
\lambda^{\step{k}}\,\hat h_1^{\step{k}}
\;\geq\;
1-\epsilon_{\mathrm{stab}},
\label{eq:stab-trigger}
\end{equation}
where $\hat h_1^{\step{k}}$ denotes the estimate of $\E[H]$ from the current
decision interval. In Example~2, where $\E[H]$ is known but
$\lambda_{\bg}$ is not, an iteration is instead flagged when the
$Z$-estimator \eqref{eq:lambda_bg_hat} saturates at the endpoint of its
parameter space. On a flagged iteration, the two safeguards act as follows.

\begin{itemize}
\item \textit{Reset.} The gradient step is discarded and the iterate is
returned to the smallest feasible rate,
$\lambda^{\step{k+1}}=\lambda_{\min}$.
\item \textit{Adaptive projection.} The gradient step is taken as usual, but
projected onto the contracted interval
$[\lambda_{\min},\,\min\{\lambda_{\max},(1-\epsilon_{\mathrm{stab}})/\hat
  h_1^{\step{k}}\}]$ instead of onto $\mathcal{L}$; in Example~2 the upper
endpoint is $1/\E[H]-\hat\lambda_{\bg}^{\step{k}}-\epsilon_{\mathrm{stab}}$.
If the contracted interval is empty it is collapsed to $\lambda_{\min}$.
\end{itemize}
Throughout we take $\epsilon=10^{-3}$ for the truncation margin and
$\epsilon_{\mathrm{stab}}=10^{-2}$ for the stability margin; the latter is the larger of the two because it guards a quantity that must be estimated along the run rather than a fixed modelling constant.

Proposition~\ref{prop:FD_convergence} assumes
$\E[H]<(1-\epsilon)/\lambda_{\max}$. When this condition fails, the fixed
truncation set $\Theta_\epsilon$ caps $\hat h_1$ below $\E[H]$, introducing
asymptotic bias. In the left panel of Figure~\ref{fig:stability}, we set
$\E[H]=1$ and $\lambda_{\max}=1.5$, so that
$\lambda_{\max}\E[H]=1.5>1$. The fixed-truncation algorithm consequently
converges to a perturbed stationary point, whereas the data-driven
truncation $1/\lambda^{\step{k}}-\epsilon$, combined with either the reset
or adaptive-projection safeguard, restores convergence to $\lambda_*$. Both
safeguards are triggered only a small number of times. These results are
illustrative; we do not claim a convergence proof for the safeguarded
variants.

Example~2 considers the same issue when the stability boundary depends on
the unknown background rate. Proposition~\ref{prop:BG_convergence} requires
$\lambda_{\bg}+\lambda_{\max}<1/\E[H]-\epsilon$, which cannot be verified
a priori because $\lambda_{\bg}$ is unobserved. The weaker condition
$\lambda_{\bg}+\lambda_{\min}<1/\E[H]-\epsilon$ only ensures that a stable
operating rate exists. 
The $Z$-estimator nevertheless provides an intrinsic instability diagnostic: when $\lambda+\lambda_{\bg}>1/\E[H]$, the system is unstable and the observed delays grow without a stationary distribution. Consequently, the empirical transform tends toward zero over increasingly long observation periods, causing the estimated root to approach the boundary of its parameter space.
As shown in the right panel of
Figure~\ref{fig:stability}, saturation is largely absent in the stable
region but becomes pronounced beyond the stability boundary. Thus, the
estimator can signal instability even when $\lambda_{\bg}$ is unobserved.
The choice of $\epsilon$ involves a tradeoff: a larger margin detects
instability earlier but may also trigger unnecessarily close to the
stability boundary.

\begin{figure}
\centering
\begin{subfigure}[t]{0.48\linewidth}
\centering
\begin{tikzpicture}
\begin{axis}[
  width=6.2cm,
  height=4.4cm,
  xlabel={iteration $k$},
  ylabel={$\lambda^{\step{k}}$},
  ylabel style={rotate=-90, anchor=east, xshift=-2pt, font=\small},
  xlabel style={font=\small},
  tick label style={font=\footnotesize},
  legend style={font=\scriptsize},
  grid=major,
  grid style={line width=0.25pt, draw=gray!22},
  axis lines=left,
  every axis plot/.append style={line width=0.7pt},
  xmin=0, xmax=400,
  ymin=0.45, ymax=0.80,
  legend cell align=left,
  legend style={at={(0.5,-0.46)}, anchor=north, legend columns=2, font=\tiny, draw=none, /tikz/every even column/.append style={column sep=1ex}}
]
  \addplot[red, very thick, dashed, no marks] coordinates {(0,0.5310) (400,0.5310)};
  \addlegendentry{$\lambda_*$}
  \addplot[gray, thick, dotted, no marks] coordinates {(0,0.6449) (400,0.6449)};
  \addlegendentry{perturbed target}
  \addplot[black, no marks] table[x=k, y=fixed_none] {ex1_unstable.dat};
  \addlegendentry{fixed cap}
  \addplot[blue, no marks] table[x=k, y=adaptive_reset] {ex1_unstable.dat};
  \addlegendentry{adaptive, reset}
  \addplot[teal, no marks] table[x=k, y=adaptive_proj] {ex1_unstable.dat};
  \addlegendentry{adaptive, projection}
\end{axis}
\end{tikzpicture}
\caption{Example 1 outside its stability hypothesis}
\end{subfigure}\hfill
\begin{subfigure}[t]{0.48\linewidth}
\centering
\begin{tikzpicture}
\begin{axis}[
  width=6.2cm,
  height=4.4cm,
  xlabel={$\lambda_{\mathrm{bg}}$},
  ylabel={},
  ylabel style={rotate=-90, anchor=east, xshift=-2pt, font=\small},
  xlabel style={font=\small},
  tick label style={font=\footnotesize},
  legend style={font=\scriptsize},
  grid=major,
  grid style={line width=0.25pt, draw=gray!22},
  axis lines=left,
  every axis plot/.append style={line width=0.7pt},
  legend cell align=left,
  legend style={at={(0.5,-0.46)}, anchor=north, legend columns=2, font=\tiny, draw=none, /tikz/every even column/.append style={column sep=1ex}}
]
  \addplot[black, mark=*] table[x=lbg, y=mean_est] {ex2_detect.dat};
  \addlegendentry{mean $\hat\lambda_{\mathrm{bg}}$}
  \addplot[blue, mark=square*] table[x=lbg, y=sat_freq] {ex2_detect.dat};
  \addlegendentry{saturation frequency}
  \addplot[red, dashed, no marks] coordinates {(0.60,0) (0.60,1)};
  \addlegendentry{stability boundary}
\end{axis}
\end{tikzpicture}
\caption{Example 2: saturation as a detector}
\end{subfigure}
\caption{Behavior near the stability boundary. Left, Example~1 with
         $\lambda_{\max}\E[H]=1.5>1$: adaptive truncation restores convergence to
         $\lambda_*$. Right, Example~2: $Z$-estimator saturation detects instability
         despite the unobserved background rate.}
\label{fig:stability}
\end{figure}
\subsection{Statistical uncertainty and stopping}
\label{sec:sim_ci}

\begin{figure}[t]
\centering
\begin{subfigure}[t]{0.48\linewidth}
\centering
\begin{tikzpicture}
\begin{axis}[
  width=6.2cm,
  height=4.4cm,
  xlabel={iteration $k$},
  ylabel={$\lambda^{\step{k}}$},
  ylabel style={rotate=-90, anchor=east, xshift=-2pt, font=\small},
  xlabel style={font=\small},
  tick label style={font=\footnotesize},
  legend style={font=\scriptsize},
  grid=major,
  grid style={line width=0.25pt, draw=gray!22},
  axis lines=left,
  every axis plot/.append style={line width=0.7pt},
  xmin=0, xmax=120,
  ymin=0.28, ymax=0.82
]
  \addplot[name path=lo, draw=none, forget plot] table[x=k, y=lo] {ex1_ci_M.dat};
  \addplot[name path=hi, draw=none, forget plot] table[x=k, y=hi] {ex1_ci_M.dat};
  \addplot[blue!12] fill between[of=lo and hi];
  \addplot[black, line width=0.6pt, no marks] table[x=k, y=lambda] {ex1_ci_M.dat};
  \addplot[red, very thick, dashed, no marks] coordinates {(0,0.5310) (120,0.5310)};
\end{axis}
\end{tikzpicture}
\caption{$\mathrm{M}$, $\E[H^4]<\infty$}
\end{subfigure}\hfill
\begin{subfigure}[t]{0.48\linewidth}
\centering
\begin{tikzpicture}
\begin{axis}[
  width=6.2cm,
  height=4.4cm,
  xlabel={iteration $k$},
  ylabel={$\lambda^{\step{k}}$},
  ylabel style={rotate=-90, anchor=east, xshift=-2pt, font=\small},
  xlabel style={font=\small},
  tick label style={font=\footnotesize},
  legend style={font=\scriptsize},
  grid=major,
  grid style={line width=0.25pt, draw=gray!22},
  axis lines=left,
  every axis plot/.append style={line width=0.7pt},
  xmin=0, xmax=120,
  ymin=0.28, ymax=0.82
]
  \addplot[name path=lo, draw=none, forget plot] table[x=k, y=lo] {ex1_ci_H2_c4.dat};
  \addplot[name path=hi, draw=none, forget plot] table[x=k, y=hi] {ex1_ci_H2_c4.dat};
  \addplot[blue!12] fill between[of=lo and hi];
  \addplot[black, line width=0.6pt, no marks] table[x=k, y=lambda] {ex1_ci_H2_c4.dat};
  \addplot[red, very thick, dashed, no marks] coordinates {(0,0.4038) (120,0.4038)};
\end{axis}
\end{tikzpicture}
\caption{$\mathrm{H}_2^{(4)}$, $\E[H^4]<\infty$}
\end{subfigure}\\[1ex]
\begin{subfigure}[t]{0.48\linewidth}
\centering
\begin{tikzpicture}
\begin{axis}[
  width=6.2cm,
  height=4.4cm,
  xlabel={iteration $k$},
  ylabel={$\lambda^{\step{k}}$},
  ylabel style={rotate=-90, anchor=east, xshift=-2pt, font=\small},
  xlabel style={font=\small},
  tick label style={font=\footnotesize},
  legend style={font=\scriptsize},
  grid=major,
  grid style={line width=0.25pt, draw=gray!22},
  axis lines=left,
  every axis plot/.append style={line width=0.7pt},
  xmin=0, xmax=120,
  ymin=0.28, ymax=0.82
]
  \addplot[name path=lo, draw=none, forget plot] table[x=k, y=lo] {ex1_ci_P4.5.dat};
  \addplot[name path=hi, draw=none, forget plot] table[x=k, y=hi] {ex1_ci_P4.5.dat};
  \addplot[blue!12] fill between[of=lo and hi];
  \addplot[black, line width=0.6pt, no marks] table[x=k, y=lambda] {ex1_ci_P4.5.dat};
  \addplot[red, very thick, dashed, no marks] coordinates {(0,0.6171) (120,0.6171)};
\end{axis}
\end{tikzpicture}
\caption{$\mathrm{P}_{4.5}$, $\E[H^4]<\infty$}
\end{subfigure}\hfill
\begin{subfigure}[t]{0.48\linewidth}
\centering
\begin{tikzpicture}
\begin{axis}[
  width=6.2cm,
  height=4.4cm,
  xlabel={iteration $k$},
  ylabel={$\lambda^{\step{k}}$},
  ylabel style={rotate=-90, anchor=east, xshift=-2pt, font=\small},
  xlabel style={font=\small},
  tick label style={font=\footnotesize},
  legend style={font=\scriptsize},
  grid=major,
  grid style={line width=0.25pt, draw=gray!22},
  axis lines=left,
  every axis plot/.append style={line width=0.7pt},
  xmin=0, xmax=120,
  ymin=0.28, ymax=0.82
]
  \addplot[name path=lo, draw=none, forget plot] table[x=k, y=lo] {ex1_ci_P2.5.dat};
  \addplot[name path=hi, draw=none, forget plot] table[x=k, y=hi] {ex1_ci_P2.5.dat};
  \addplot[blue!12] fill between[of=lo and hi];
  \addplot[black, line width=0.6pt, no marks] table[x=k, y=lambda] {ex1_ci_P2.5.dat};
  \addplot[red, very thick, dashed, no marks] coordinates {(0,0.5453) (120,0.5453)};
\end{axis}
\end{tikzpicture}
\caption{$\mathrm{P}_{2.5}$, $\E[H^4]=\infty$}
\end{subfigure}\\[0.5ex]
\input{ci_legend.tex}
\caption{Example 1: delta-method confidence intervals expressed as
         approximate intervals for $\lambda_*$. The bands contract at the
         parametric rate, but coverage deteriorates for $\mathrm{P}_{2.5}$, where
         the fourth-moment condition fails.}
\label{fig:ex1_ci}
\end{figure}

We next examine the statistical uncertainty of the gradient estimator and
its use for assessing approximate optimality. Since the service times are
recovered from the observed timestamps, $\hat{\bm{\theta}}^{\step{k}}$ is
based on i.i.d.\ observations, so the central limit theorem and delta method
yield the confidence interval \eqref{eq:CI-A}. The construction, which
requires $\E[H^4]<\infty$, is given in Appendix~\ref{app:delta}.

Figure~\ref{fig:ex1_ci} shows the resulting intervals along a representative
run. Dividing their half-width by the local curvature $\mu$ in
\eqref{eq:curvature-def} gives an approximate confidence region for
$\lambda_*$. The bands contract at the parametric rate as the sample size
increases, and the iterates settle within them. The fourth-moment condition
affects coverage rather than the shrinkage rate: coverage is close to the
nominal $95\%$ level for the families with $\E[H^4]<\infty$, but deteriorates
substantially for $\mathrm{P}_{2.5}$, where $\E[H^4]=\infty$. Thus, for heavy-tailed service times, the nominal $n^{-1/2}$ confidence-band width can substantially understate the actual uncertainty when the moment conditions required for the delta-method approximation fail.

\begin{table}[t]
\centering
\caption{Service times $\mathrm{P}_{2.5}$, for which
         $\E[H^3]=\infty$. Scaled variances $n\Var$ of the estimator components.
         The inputs used in Example~3 have stable scaled variances, whereas the
         sample second moment used in Example~1 has infinite variance.}
\label{tab:ex3_moments}
\footnotesize
\begin{tabular}{rrrr}
\hline
$n$ & $n\Var[\hat f_H^*]$ & $n\Var[\hat f_H']$ & $n\Var[\widehat{\E[H^2]}]$ \\
\hline
500 & 0.2480 & 0.2194 & 685.5394 \\
2000 & 0.2318 & 0.1947 & 510.9883 \\
8000 & 0.2518 & 0.2215 & 808.5836 \\
\hline
\end{tabular}

\end{table}

\begin{table}[t]
\centering
\caption{Example 1: cost of certified stopping. The algorithm stops when
         the delta-method interval for $g+\kappa'$ lies inside
         $[-\eta\mu,\eta\mu]$, approximately certifying
         $|\lambda^{\step{k}}-\lambda_*|\lesssim\eta$.}
\label{tab:ex1_stopping}
\footnotesize
\begin{tabular}{lrrrr}
\hline
$\eta$ & $k$ at stop & packets & AoI gap & fraction stopped \\
\hline
0.02 & -- & -- & -- & 0 \\
0.05 & 54.5000 & 75625 & $1.35\times10^{-3}$ & 1 \\
0.10 & 6 & 1050 & $1.84\times10^{-2}$ & 1 \\
full horizon & 250 & 1568750 & $1.15\times10^{-4}$ & 1 \\
\hline
\end{tabular}

\end{table}

The contrast with Example~3 is particularly clear for heavy-tailed service
times. There, the estimator inputs are bounded by exponential
inter-generation gaps and therefore have finite moments, whereas the
Example~1 estimator involves the sample second moment of $H$. Under
$\mathrm{P}_{2.5}$, Table~\ref{tab:ex3_moments} shows the resulting
difference in scaled variances, illustrating the robustness gained through
this censoring mechanism.

The same intervals can be used for statistical stopping. We stop when the
interval for $g+\kappa'$ lies inside
$[-\eta\mu,\eta\mu]$, which under the local curvature approximation
certifies $|\lambda^{\step{k}}-\lambda_*|\lesssim\eta$. Table~\ref{tab:ex1_stopping}
shows the resulting sample cost. Since the interval width is
$O(n^{-1/2})$, achieving accuracy $\eta$ requires $O(\eta^{-2})$ samples at
a single iteration; with $n^{\step{k}}$ increasing linearly in $k$, the
cumulative cost grows substantially faster. Thus, very precise statistical
certification can be much more expensive than obtaining a moderately
accurate solution.

\subsection{Comparison with a model-free reference method}
\label{sec:sim_bench}

We finally compare the proposed method with a generic model-free
optimization procedure. The goal is to assess whether exploiting the
analytical gradient representation in \eqref{eq:g(lambda,theta)-def} yields
a practical advantage over estimating the gradient through perturbations of
the operating rate.

As a model-free reference, we use the {\it Kiefer--Wolfowitz algorithm} \cite{KW}. At
iteration $k$, the system is run at $\lambda^{\step{k}}\pm c^{\step{k}}$
using $n^{\step{k}}/2$ informative packets at each rate, and the iterate is
updated according to, using the standard Kiefer--Wolfowitz scaling
$c^{\step{k}}=O(k^{-1/6})$,
\[
    \lambda^{\step{k+1}}
    =
    \Pi_{\mathcal L}\!\left[
    \lambda^{\step{k}}
    -\gamma^{\step{k}}
    \frac{
        \widehat{Q}_{+}^{\step{k}}
        -
        \widehat{Q}_{-}^{\step{k}}
    }{
        2c^{\step{k}}
    }
    \right],
    \qquad
    c^{\step{k}}=c_3 k^{-1/6}.
\]
The constant $c_3$ is tuned separately for each
comparison over a fixed grid. Both methods use the same step-size and
sample-size sequences and are compared at the same cumulative packet
budget. Thus, the model-free method uses the same number of informative
packets, but splits them between two perturbed operating rates.

\begin{figure}[t]
\centering
{\footnotesize$\displaystyle
 \mathrm{MSE}^{\step{k}}
 $}\\[0.6ex]
\begin{subfigure}[t]{0.32\linewidth}
\centering
\begin{tikzpicture}
\begin{axis}[
  width=6.2cm,
  height=4.4cm,
  xlabel={packets consumed $N$},
  ylabel={},
  ylabel style={rotate=-90, anchor=east, xshift=-2pt, font=\small},
  xlabel style={font=\small},
  tick label style={font=\footnotesize},
  legend style={font=\scriptsize},
  grid=major,
  grid style={line width=0.25pt, draw=gray!22},
  axis lines=left,
  every axis plot/.append style={line width=0.7pt},
  xmode=log,
  ymode=log,
  width=0.97\linewidth,
  height=0.86\linewidth,
  tick label style={font=\scriptsize},
  xlabel style={font=\footnotesize}
]
  \addplot[black, no marks] table[x=N, y=mse_sgd] {ex1_bench_M.dat};
  \addplot[blue, dashed, no marks] table[x=N, y=mse_kw] {ex1_bench_M.dat};
\end{axis}
\end{tikzpicture}
\caption{$\mathrm{M}$, $c^2=1$}
\end{subfigure}\hfill
\begin{subfigure}[t]{0.32\linewidth}
\centering
\begin{tikzpicture}
\begin{axis}[
  width=6.2cm,
  height=4.4cm,
  xlabel={packets consumed $N$},
  ylabel={},
  ylabel style={rotate=-90, anchor=east, xshift=-2pt, font=\small},
  xlabel style={font=\small},
  tick label style={font=\footnotesize},
  legend style={font=\scriptsize},
  grid=major,
  grid style={line width=0.25pt, draw=gray!22},
  axis lines=left,
  every axis plot/.append style={line width=0.7pt},
  xmode=log,
  ymode=log,
  width=0.97\linewidth,
  height=0.86\linewidth,
  tick label style={font=\scriptsize},
  xlabel style={font=\footnotesize}
]
  \addplot[black, no marks] table[x=N, y=mse_sgd] {ex1_bench_H2_c4.dat};
  \addplot[blue, dashed, no marks] table[x=N, y=mse_kw] {ex1_bench_H2_c4.dat};
\end{axis}
\end{tikzpicture}
\caption{$\mathrm{H}_2^{(4)}$, $c^2=4$}
\end{subfigure}\hfill
\begin{subfigure}[t]{0.32\linewidth}
\centering
\begin{tikzpicture}
\begin{axis}[
  width=6.2cm,
  height=4.4cm,
  xlabel={packets consumed $N$},
  ylabel={},
  ylabel style={rotate=-90, anchor=east, xshift=-2pt, font=\small},
  xlabel style={font=\small},
  tick label style={font=\footnotesize},
  legend style={font=\scriptsize},
  grid=major,
  grid style={line width=0.25pt, draw=gray!22},
  axis lines=left,
  every axis plot/.append style={line width=0.7pt},
  xmode=log,
  ymode=log,
  width=0.97\linewidth,
  height=0.86\linewidth,
  tick label style={font=\scriptsize},
  xlabel style={font=\footnotesize}
]
  \addplot[black, no marks] table[x=N, y=mse_sgd] {ex1_bench_LN.dat};
  \addplot[blue, dashed, no marks] table[x=N, y=mse_kw] {ex1_bench_LN.dat};
\end{axis}
\end{tikzpicture}
\caption{$\mathrm{LN}$, $c^2=4$}
\end{subfigure}\\[0.5ex]
\input{bench_legend.tex}
\caption{Example 1: mean squared error of the iterate against the packet
         budget, over $R=60$ replications, for three service-time families with
         $\E[H]=1$. Both methods use identical step-size and sample-size sequences,
         and the model-free method is shown at its best perturbation width on a
         grid.}
\label{fig:ex1_bench}
{\footnotesize$\displaystyle
 \mathrm{MSE}^{\step{k}}
 $}\\[0.6ex]
\begin{subfigure}[t]{0.32\linewidth}
\centering
\begin{tikzpicture}
\begin{axis}[
  width=6.2cm,
  height=4.4cm,
  xlabel={packets consumed $N$},
  ylabel={},
  ylabel style={rotate=-90, anchor=east, xshift=-2pt, font=\small},
  xlabel style={font=\small},
  tick label style={font=\footnotesize},
  legend style={font=\scriptsize},
  grid=major,
  grid style={line width=0.25pt, draw=gray!22},
  axis lines=left,
  every axis plot/.append style={line width=0.7pt},
  xmode=log,
  ymode=log,
  width=0.97\linewidth,
  height=0.86\linewidth,
  tick label style={font=\scriptsize},
  xlabel style={font=\footnotesize}
]
  \addplot[black, no marks] table[x=N, y=mse_sgd] {ex2_bench.dat};
  \addplot[blue, dashed, no marks] table[x=N, y=mse_kw] {ex2_bench.dat};
\end{axis}
\end{tikzpicture}
\caption{Example 2, $\lambda_{\mathrm{bg}}=0.3$}
\end{subfigure}\hfill
\begin{subfigure}[t]{0.32\linewidth}
\centering
\begin{tikzpicture}
\begin{axis}[
  width=6.2cm,
  height=4.4cm,
  xlabel={packets consumed $N$},
  ylabel={},
  ylabel style={rotate=-90, anchor=east, xshift=-2pt, font=\small},
  xlabel style={font=\small},
  tick label style={font=\footnotesize},
  legend style={font=\scriptsize},
  grid=major,
  grid style={line width=0.25pt, draw=gray!22},
  axis lines=left,
  every axis plot/.append style={line width=0.7pt},
  xmode=log,
  ymode=log,
  width=0.97\linewidth,
  height=0.86\linewidth,
  tick label style={font=\scriptsize},
  xlabel style={font=\footnotesize}
]
  \addplot[black, no marks] table[x=N, y=mse_sgd] {ex3_bench.dat};
  \addplot[blue, dashed, no marks] table[x=N, y=mse_kw] {ex3_bench.dat};
\end{axis}
\end{tikzpicture}
\caption{Example 3, $\kappa(\lambda)=0.25\lambda$}
\end{subfigure}\hfill
\begin{subfigure}[t]{0.32\linewidth}
\centering
\begin{tikzpicture}
\begin{axis}[
  width=6.2cm,
  height=4.4cm,
  xlabel={packets consumed $N$},
  ylabel={},
  ylabel style={rotate=-90, anchor=east, xshift=-2pt, font=\small},
  xlabel style={font=\small},
  tick label style={font=\footnotesize},
  legend style={font=\scriptsize},
  grid=major,
  grid style={line width=0.25pt, draw=gray!22},
  axis lines=left,
  every axis plot/.append style={line width=0.7pt},
  xmode=log,
  ymode=log,
  width=0.97\linewidth,
  height=0.86\linewidth,
  tick label style={font=\scriptsize},
  xlabel style={font=\footnotesize}
]
  \addplot[black, no marks] table[x=N, y=mse_sgd] {ex4_bench.dat};
  \addplot[blue, dashed, no marks] table[x=N, y=mse_kw] {ex4_bench.dat};
\end{axis}
\end{tikzpicture}
\caption{Example 4, Erlang-4 service}
\end{subfigure}\\[0.5ex]
\input{bench_legend.tex}
\caption{Examples 2--4: mean squared error of the iterate against the packet
         budget, over $R=50$ replications, at equal budget and with the model-free
         method at its best perturbation width on a grid.}
\label{fig:bench_234}
\end{figure}

Figures~\ref{fig:ex1_bench} and~\ref{fig:bench_234} show the resulting
iterate errors across all four examples in terms of the mean squared error, $\mathrm{MSE}^{\step{k}}=\frac{1}{R}\sum_{r=1}^{R}\bigl(\lambda_r^{\step{k}}-\lambda_*\bigr)^2$. The proposed method achieves
substantially higher terminal accuracy in every example. The magnitude of
the improvement depends on the problem structure: in Example~2, the nearby
stability boundary makes finite-difference estimation particularly noisy;
in Example~4, the objective is very flat near the optimum; and in Example~3,
where no stability constraint limits the perturbation, the model-free method
performs better but remains less accurate. In Example~1, the advantage of
the proposed method becomes more pronounced as service-time variability
increases. Deterministic service is excluded from the benchmark because the
structural gradient estimator is then exact, making the comparison
essentially one between exact projected gradient descent and a noisy
finite-difference method.

\section{Discussion and concluding remarks}
\label{sec:conclusion}

This paper developed a structure-aware approach to finding the optimal update
generation rate for online AoI optimization when the
underlying queueing and service statistics are unknown. The key idea is to
exploit an analytical representation of the AoI gradient and estimate only
the finite-dimensional quantities appearing in that representation. This
enables direct online optimization of the update generation rate from
operational observations, without requiring prior knowledge of the system
parameters or learning the entire objective as a black box.

The numerical results demonstrate the practical value of this approach.
Across four qualitatively different incomplete-information settings, the
proposed algorithm converges to the optimal operating point and remains
effective under service-time variability, proximity to stability boundaries,
hidden traffic, and partial observations. Moreover, at the same observation
budget, it achieves substantially smaller optimization error than the
Kiefer--Wolfowitz benchmark in the considered numerical settings, illustrating
how exploiting queueing structure can improve the efficiency of online
optimization.

Several directions remain open, including extensions to time-varying system
parameters, interacting queues, and approximate gradient representations.
The statistical stopping results also highlight the gap between finding a
good operating point and certifying near-optimality, motivating efficient
certification procedures under weaker moment assumptions.

More broadly, the results suggest that data-driven optimization need not be
strictly model-free. When analytical structure provides a tractable
representation of the optimization problem, combining that structure with
online statistical estimation can retain the adaptability of data-driven
methods while exploiting information that black-box approaches discard.
This points toward a broader class of model-informed, data-driven
optimization methods that use system knowledge not to eliminate uncertainty,
but to extract useful structure from it.
\appendices

\section{Proof of Theorem \ref{thm:FD_convex}}
\label{appendix:FD_convex}

The proof of Theorem \ref{thm:FD_convex} uses the following inequalities.
\begin{lemma} \label{lem:ineq}
For any non-negative random variable $Y$ with $\E[Y] \leq 1$, we have
\begin{align}
\E[Ye^{-Y}] &\leq \E[e^{-Y}],
\label{eq:YexpY-bound}
\\
\E[Y^2e^{-Y}] &\leq \E[e^{-Y}].
\label{eq:Y2expY-bound}
\end{align}
\end{lemma}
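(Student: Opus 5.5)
Both inequalities can be written as $\E[\phi(Y)]\ge 0$ with $\phi(1)=0$, namely $\phi(y)=(1-y)e^{-y}$ for \eqref{eq:YexpY-bound} and $\phi(y)=(1-y^2)e^{-y}$ for \eqref{eq:Y2expY-bound}. The plan is to bound each $\phi$ from below on $[0,\infty)$ by an affine function $c(1-y)$ with $c\ge 0$. Taking expectations then gives
\[
\E[\phi(Y)]\ \ge\ c\,(1-\E[Y])\ \ge\ 0,
\]
by linearity and the hypothesis $\E[Y]\le 1$. Nonnegativity of $Y$ is only used to restrict attention to $y\ge 0$.

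For \eqref{eq:YexpY-bound} I take $c=e^{-1}$. Then
\[
(1-y)e^{-y}-e^{-1}(1-y)=(1-y)\bigl(e^{-y}-e^{-1}\bigr).
\]
Both factors are positive for $0\le y<1$, both are negative for $y>1$, and the product vanishes at $y=1$. Hence it is nonnegative on $[0,\infty)$. Taking expectations gives $\E[(1-Y)e^{-Y}]\ge e^{-1}(1-\E[Y])\ge 0$, which is \eqref{eq:YexpY-bound}. No integrability issue arises, since $Ye^{-Y}$ and $e^{-Y}$ are bounded.

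For \eqref{eq:Y2expY-bound} the constant $c$ must make the line tangent to $\phi(y)=(1-y^2)e^{-y}$ at $y=1$, where both vanish. Since $\phi'(1)=-2e^{-1}$, this forces $c=2e^{-1}$. The difference then factors as
\[
(1-y^2)e^{-y}-2e^{-1}(1-y)=(1-y)\bigl((1+y)e^{-y}-2e^{-1}\bigr).
\]
The map $y\mapsto(1+y)e^{-y}$ has derivative $-ye^{-y}\le 0$, so it is nonincreasing on $[0,\infty)$. It equals $2e^{-1}$ at $y=1$. Thus the second factor is positive for $y<1$ and negative for $y>1$, with the same sign as $1-y$, so the product is nonnegative. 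Taking expectations, which is again legitimate because $Y^2e^{-Y}$ is bounded, gives $\E[(1-Y^2)e^{-Y}]\ge 2e^{-1}(1-\E[Y])\ge 0$, which is \eqref{eq:Y2expY-bound}.

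The only real step is choosing $c$. For the second inequality a non-tangent line would cross $\phi$ near $y=1$, so $c$ has to be the tangent slope $2e^{-1}$. Once that choice is made, both checks reduce to a sign comparison of two factors.
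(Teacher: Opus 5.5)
Your proof is correct. For \eqref{eq:Y2expY-bound} it is the paper's argument: the same tangent line $2e^{-1}(y-1)$ at $y=1$, the same factorization through $(1+y)e^{-y}$, and the same monotonicity check.

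You differ on \eqref{eq:YexpY-bound}. The paper does not use a tangent line there. It uses the fact that $Y$ and the decreasing function $e^{-Y}$ have non-positive covariance, which gives $\E[Ye^{-Y}]\le\E[Y]\,\E[e^{-Y}]\le\E[e^{-Y}]$.

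Your route proves both inequalities by one elementary mechanism and needs no external covariance result. The paper's route gives the slightly sharper intermediate bound $\E[(1-Y)e^{-Y}]\ge(1-\E[Y])\,\E[e^{-Y}]$. This dominates your $e^{-1}(1-\E[Y])$, since Jensen's inequality gives $\E[e^{-Y}]\ge e^{-\E[Y]}\ge e^{-1}$. That extra strength is not needed for the lemma or for its later use in the convexity proofs.

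One small wording point. ``Nonincreasing'' only gives the weak signs ($\ge 0$ for $y<1$, $\le 0$ for $y>1$) for the second factor, not strict positivity and negativity. The weak signs are all the product argument needs, and in fact $(1+y)e^{-y}$ is strictly decreasing on $(0,\infty)$.
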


\begin{proof}
Since $e^{-Y}$ is a decreasing function of $Y$, the random variables
$Y$ and $e^{-Y}$ have non-positive covariance (see, e.g., \cite{Behboodian1994}). Hence,
\[
    \E[Ye^{-Y}]
    \leq
    \E[Y]\E[e^{-Y}]
    \leq
    \E[e^{-Y}],
\]
where the last inequality follows from $\E[Y]\leq1$. This proves
\eqref{eq:YexpY-bound}.

To prove \eqref{eq:Y2expY-bound}, it suffices to show that
$
\E[(Y^2-1)e^{-Y}] \leq 0.$
Observe that
\begin{equation}
(y^2-1)e^{-y}\leq 2e^{-1}(y-1),
\qquad y\ge0.
\label{eq:linear-UB}
\end{equation}
Indeed,
\[
    \frac{(y^2-1)e^{-y}}{y-1}=(y+1)e^{-y},
\]
and the function $\gamma(y):=(y+1)e^{-y}$ is decreasing on $[0,\infty)$ since
$
\gamma'(y)=-ye^{-y}\le0.
$
Therefore,
\[
    (y+1)e^{-y}
    \begin{cases}
    \ge 2e^{-1}, & 0\le y\le1,\\
    \le 2e^{-1}, & y>1,
    \end{cases}
\]
which is equivalent to \eqref{eq:linear-UB}. Consequently,
\[
    \E[(Y^2-1)e^{-Y}]
    \le
    2e^{-1}\E[Y-1]
    =
    2e^{-1}(\E[Y]-1)
    \le0,
\]
thus establishing \eqref{eq:Y2expY-bound}.
\end{proof}

We then provide the proof of Theorem \ref{thm:FD_convex}.
We rewrite the right-hand side of \eqref{eq:EA} as
$
\E_{\lambda}[A] = \psi(\lambda) + \phi(\lambda),
$
where
\begin{equation}
\psi(\lambda):=\frac{\lambda\E[H^{2}]}{2\left(1-\lambda\E[H]\right)}+\E[H],
\quad
\phi(\lambda):=\frac{1-\lambda\E[H]}{\lambda f^{*}_{H}(\lambda)}.
\label{eq:psi-phi-def}
\end{equation}
It is easy to see that $\psi(\lambda)$ is strictly convex on $\left(0,1/\E[H]\right)$: the first derivative of $\psi(\lambda)$ is given by
\begin{align*}
\psi^\prime(\lambda) 
&=
\frac{\E[H^2]}{2(1-\lambda\,\E[H])^2},
\end{align*}
which is strictly increasing for $\lambda\in\left(0,{1}/{\E[H]}\right)$.
Therefore, we prove the convexity of $\phi(\lambda)$ below.

The first derivative is
\begin{align*}
\phi'(\lambda)
&=
\frac{
    -f_H^*(\lambda)-\lambda(1-\lambda\E[H])f_H'(\lambda)
}{
    \lambda^2\{f_H^*(\lambda)\}^2
}
=
-\phi(\lambda)
\left(
\frac{1}{\lambda(1-\lambda\E[H])}
+
\frac{f_H'(\lambda)}{f_H^*(\lambda)}
\right).
\end{align*}
Differentiating once more and simplifying yields
\begin{align*}
\phi''(\lambda)
&=
\phi(\lambda)
\left[
\left(
\frac{1}{\lambda(1-\lambda\E[H])}
+
\frac{f_H'(\lambda)}{f_H^*(\lambda)}
\right)^2
+
\frac{1-2\lambda\E[H]}
{\lambda^2(1-\lambda\E[H])^2}
-
\frac{f_H''(\lambda)}{f_H^*(\lambda)}
+
\left(
\frac{f_H'(\lambda)}{f_H^*(\lambda)}
\right)^2
\right]
\\
&=
\phi(\lambda)
\left[
\frac{2\E[H]}{1-\lambda\E[H]}
\left(
\frac{1}{\lambda}
+
\frac{f_H'(\lambda)}{f_H^*(\lambda)}
\right)
+
\left(
\frac{1}{\lambda}
+
\frac{f_H'(\lambda)}{f_H^*(\lambda)}
\right)^2
+
\frac{1}{\lambda^2}
-
\frac{f_H''(\lambda)}{f_H^*(\lambda)}
+
\left(
\frac{f_H'(\lambda)}{f_H^*(\lambda)}
\right)^2
\right].
\end{align*}
Therefore, it suffices to prove
\[
    \frac{f_H''(\lambda)}{f_H^*(\lambda)}
    \le
    \frac{1}{\lambda^2},
    \qquad
    -\frac{f_H'(\lambda)}{f_H^*(\lambda)}
    \le
    \frac{1}{\lambda}.
\]
Since
\[
    f_H^*(\lambda)=\E[e^{-\lambda H}],\qquad
    f_H'(\lambda)=-\E[He^{-\lambda H}],\qquad
    f_H''(\lambda)=\E[H^2e^{-\lambda H}],
\]
these inequalities are equivalent to
\[
    \E[(\lambda H)^2e^{-\lambda H}]
    \le
    \E[e^{-\lambda H}],
    \qquad
    \E[\lambda He^{-\lambda H}]
    \le
    \E[e^{-\lambda H}].
\]
Setting $Y=\lambda H$, both inequalities follow immediately from Lemma~\ref{lem:ineq}.
\qed

\section{Proof of Theorem \ref{thm:FB_convex}}
\label{appendix:FB_convex}

We use the following corollary to prove Theorem \ref{thm:FB_convex}.
\begin{corollary}
Under the stability condition (\ref{eq:stability}), we have
\begin{align}
\zeta(\lambda,\lambda_\bg)+\lambda \zeta'(\lambda,\lambda_\bg) &\geq 0,
\label{eq:zeta-ineq-1}\\
\zeta(\lambda,\lambda_\bg) - \lambda^2 \zeta''(\lambda) &\geq 0.
\label{eq:zeta-ineq-2}
\end{align}
\end{corollary}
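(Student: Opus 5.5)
The plan is to use the busy-period representation \eqref{eq:zeta-by-f_B_bg}: under \eqref{eq:stability} we have $\zeta(\lambda,\lambda_\bg)=f_{B_\bg}^*(\lambda)=\E[e^{-\lambda B_\bg}]$, where $B_\bg$ is the (a.s.\ finite) busy period of the M/GI/1 queue fed only by background traffic. Differentiating under the expectation, which is justified by dominated convergence since $y^k e^{-\lambda y}$ is bounded for $\lambda>0$, gives
\[
\zeta'(\lambda,\lambda_\bg)=-\E[B_\bg e^{-\lambda B_\bg}],
\qquad
\zeta''(\lambda,\lambda_\bg)=\E[B_\bg^2 e^{-\lambda B_\bg}].
\]
With $Y:=\lambda B_\bg\ge 0$, the two inequalities become
\[
\E[e^{-Y}]-\E[Ye^{-Y}]\ge 0,
\qquad
\E[e^{-Y}]-\E[Y^2e^{-Y}]\ge 0 .
\]
These are exactly \eqref{eq:YexpY-bound} and \eqref{eq:Y2expY-bound} of Lemma~\ref{lem:ineq}, provided $\E[Y]\le 1$.

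The key step is therefore to verify $\lambda\E[B_\bg]\le 1$. The mean busy period is $\E[B_\bg]=\E[H]/(1-\lambda_\bg\E[H])$, which is finite because $\lambda_\bg\E[H]<1$ by \eqref{eq:stability}. Then
\[
\lambda\E[B_\bg]=\frac{\lambda\E[H]}{1-\lambda_\bg\E[H]}<1
\iff (\lambda+\lambda_\bg)\E[H]<1,
\]
which is precisely the stability condition \eqref{eq:stability}. Hence $\E[Y]<1$, Lemma~\ref{lem:ineq} applies with this $Y$, and both \eqref{eq:zeta-ineq-1} and \eqref{eq:zeta-ineq-2} follow.

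The main obstacle I expect is the identification itself. One must check that the fixed-point root $\zeta$ of \eqref{eq:Ex2FP} is the busy-period LST evaluated at $s=\lambda$, and that the $\lambda$-derivatives of $\zeta$ coincide with the derivatives of $f_{B_\bg}^*$ at $\lambda$. This works because $\lambda$ enters \eqref{eq:Ex2FP} only through the argument $s=\lambda$ (with $\lambda_\bg$ held fixed), so $\zeta(\cdot,\lambda_\bg)=f^*_{B_\bg}(\cdot)$ as functions, and no implicit differentiation via \eqref{eq:zeta_lambda_prime} is needed. I would also note that $\zeta''$ in \eqref{eq:zeta-ineq-2} means $\partial^2\zeta/\partial\lambda^2$, and that the standard mean busy-period formula can be quoted directly or recovered by differentiating the fixed-point equation at $s=0$.
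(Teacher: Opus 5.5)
Your proposal is correct and follows essentially the same route as the paper. Both arguments identify $\zeta(\cdot,\lambda_\bg)$ with the background busy-period transform $f^*_{B_\bg}$, check that $\lambda\E[B_\bg]=\lambda\E[H]/(1-\lambda_\bg\E[H])<1$ by the stability condition, and apply Lemma~\ref{lem:ineq} with $Y=\lambda B_\bg$; your extra justifications (dominated convergence for differentiating under the expectation, and that $\lambda$ enters the fixed-point equation only through $s=\lambda$) are details the paper leaves implicit.
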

\begin{proof}
Under the stability condition \eqref{eq:stability},
\[
    \lambda \E[B_{\bg}]
    =
    \lambda\frac{\E[H]}{1-\lambda_{\bg}\E[H]}
    <1.
\]
Using \eqref{eq:zeta-by-f_B_bg}, we have 
$
\zeta(\lambda,\lambda_{\bg})
=
f_{B_{\bg}}^*(\lambda).$
Together with
\[
    f_{B_{\bg}}'(\lambda)
    =
    -\E[B_{\bg}e^{-\lambda B_{\bg}}],
    \qquad
    f_{B_{\bg}}''(\lambda)
    =
    \E[B_{\bg}^2e^{-\lambda B_{\bg}}],
\]
we obtain
\begin{align*}
\zeta(\lambda,\lambda_{\bg})
+\lambda\zeta'(\lambda,\lambda_{\bg})
&=
\E[e^{-\lambda B_{\bg}}]
-
\E[\lambda B_{\bg}e^{-\lambda B_{\bg}}]
\ge 0,
\\
\zeta(\lambda,\lambda_{\bg})
-\lambda^2\zeta''(\lambda,\lambda_{\bg})
&=
\E[e^{-\lambda B_{\bg}}]
-
\E[(\lambda B_{\bg})^2e^{-\lambda B_{\bg}}]
\ge 0.
\end{align*}
where both inequalities follow from Lemma~\ref{lem:ineq}.
\end{proof}

To prove the convexity of $\E_{\lambda}[A]$, 
we rewrite the right-hand side of \eqref{eq:EA_bg} as
\[
    \E_{\lambda}[A] = \psi(\lambda,\lambda_{\bg}) + 
    \frac{\lambda_{\bg}\E[H]}{\lambda}
    +
    \phi(\lambda,\lambda_{\bg}),
\]
where
\[
    \psi(\lambda,\lambda_{\bg}):=\frac{(\lambda+\lambda_{\bg})\E[H^{2}]}{2\left\{1-(\lambda+\lambda_{\bg})\E[H]\right\}}+\E[H],
    \quad
    \phi(\lambda, \lambda_{\bg}):=\frac{1-(\lambda+\lambda_{\bg})\E[H]}{\lambda \zeta(\lambda,\lambda_{\bg})}.
\]

Note that we have $\psi(\lambda,\lambda_{\bg})=\psi(\lambda+\lambda_{\bg})$ (cf.\ \eqref{eq:psi-phi-def}),
so that it is a strictly convex function of $\lambda$ as shown in Appendix \ref{appendix:FD_convex}.
Also, $\lambda_{\bg}\E[H]/\lambda$ is obviously convex with respect to $\lambda$.
Therefore, to prove the convexity of $\E_{\lambda}[A]$, it is sufficient to
show the convexity of
$\phi(\lambda,\lambda_{\bg})$

Recall that $\zeta'(\lambda,\lambda_{\bg})$ and 
$\zeta''(\lambda,\lambda_{\bg})$ denote the first and second partial derivatives of $\zeta(\lambda,\lambda_{\bg})$ with respect to $\lambda$. Similarly, let 
$\phi'(\lambda,\lambda_{\bg})$ and $\phi''(\lambda,\lambda_{\bg})$ denote 
the first and second partial derivatives of $\phi(\lambda,\lambda_{\bg})$ with respect to $\lambda$.
With similar calculations as in Appendix \ref{appendix:FD_convex}, we can show that
\begin{align*}
\phi'(\lambda,\lambda_{\bg}) 
&= 
-\phi(\lambda,\lambda_{\bg})
\left[
\frac{1-\lambda_{\bg}\E[H]}
{\lambda\{1-(\lambda+\lambda_{\bg})\E[H]\}}
+
\frac{\zeta'(\lambda,\lambda_{\bg})}
{\zeta(\lambda,\lambda_{\bg})}
\right],
\\
\phi''(\lambda,\lambda_{\bg})
&=
\phi(\lambda,\lambda_{\bg})
\Biggl[
\frac{2\E[H]}
{1-(\lambda+\lambda_{\bg})\E[H]}
\left(
\frac{1}{\lambda}
+
\frac{\zeta'(\lambda,\lambda_{\bg})}
{\zeta(\lambda,\lambda_{\bg})}
\right)
\\
&
\quad\qquad\qquad
{}+
\left(
\frac{1}{\lambda}
+
\frac{\zeta'(\lambda,\lambda_{\bg})}
{\zeta(\lambda,\lambda_{\bg})}
\right)^2
+
\frac{1}{\lambda^2}
-
\frac{\zeta''(\lambda,\lambda_{\bg})}
{\zeta(\lambda,\lambda_{\bg})}
+
\left(
\frac{\zeta'(\lambda,\lambda_{\bg})}
{\zeta(\lambda,\lambda_{\bg})}
\right)^2
\Biggr].
\end{align*}
We can then verify 
$\phi''(\lambda) \geq 0$ from (\ref{eq:zeta-ineq-1}) and (\ref{eq:zeta-ineq-2}).
\qed

\section{Proof of Lemmas in Section~\ref{sec:application}}
\label{appendix:regularity}

\begin{proof}[Proof of Lemma~\ref{lem:g_lip}]
{Recalling 
 $
 {\boldsymbol \theta}_\lambda\equiv (\E[H],\E[H^2],f_H^*(\lambda),f_H'(\lambda))$,
 \[
     g(\lambda;\boldsymbol{\theta})
     =
     \frac{\theta_2}{2(1-\lambda\theta_1)^2}
     -\frac{\theta_4}{\lambda\theta_3^2}
     -\frac{1}{\lambda^2\theta_3}
     +\frac{\theta_1\theta_4}{\theta_3^2}.
 \]
 Since $\Theta_\epsilon$ is compact and $\mathcal L=[\lambda_{\min},\lambda_{\max}]$
 with $\lambda_{\min}>0$, there exists a constant $M_\Theta<\infty$ such that
 $
 |\theta_i|\le M_\Theta$ for all $i=1,\ldots,4$ and 
 for all $\boldsymbol{\theta}\in\Theta_\epsilon$. Moreover, by construction of
 $\Theta_\epsilon$,
 \[
     1-\lambda\theta_1\ge\lambda\epsilon\ge\lambda_{\min}\epsilon,
     \qquad
     \theta_3\ge\epsilon,
 \]
 for every $\lambda\in\mathcal L$. Hence all denominators appearing in
 $g(\lambda;\boldsymbol{\theta})$ are uniformly bounded away from zero.

 The partial derivatives of $g$ with respect to ${\boldsymbol\theta}$ are
 \[
     \begin{aligned}
     \frac{\partial g}{\partial\theta_1}
     &=
     \frac{\lambda\theta_2}{(1-\lambda\theta_1)^3}
     +\frac{\theta_4}{\theta_3^2},
     &\qquad
     \frac{\partial g}{\partial\theta_2}
     &=
     \frac{1}{2(1-\lambda\theta_1)^2},
     \\[1ex]
     \frac{\partial g}{\partial\theta_3}
     &=
     \frac{2\theta_4}{\lambda\theta_3^3}
     +\frac{1}{\lambda^2\theta_3^2}
     -\frac{2\theta_1\theta_4}{\theta_3^3},
     &\qquad
     \frac{\partial g}{\partial\theta_4}
     &=
     -\frac{1}{\lambda\theta_3^2}
     +\frac{\theta_1}{\theta_3^2}.
     \end{aligned}
 \]
 Since $\lambda\in\mathcal L$, the variables $\theta_i$ are uniformly bounded on
 $\Theta_\epsilon$, and all denominators are uniformly bounded away from zero,
 each partial derivative is uniformly bounded. Therefore there exists a constant
 $C_1<\infty$ such that
 \[
     \sup_{\lambda\in\mathcal L}
     \sup_{\boldsymbol{\theta}\in\Theta_\epsilon}
     \|\nabla_{\boldsymbol{\theta}}g(\lambda;\boldsymbol{\theta})\|
     \le C_1.
 \]

 Finally, by the multivariate mean value theorem,
 \[
     |g(\lambda;\boldsymbol{\theta})
     -g(\lambda;\boldsymbol{\theta}')|
     \le
     C_1\|\boldsymbol{\theta}-\boldsymbol{\theta}'\|,
 \]
 for all $\boldsymbol{\theta},\boldsymbol{\theta}'\in\Theta_\epsilon$ and all
 $\lambda\in\mathcal L$, proving \eqref{eq:Lips}.
}
\end{proof}

\begin{proof}[Proof of Lemma~\ref{lem:theta_FD}]
By the triangle inequality, bearing in mind $
{\boldsymbol \theta}\equiv (\E[H],\E[H^2],f_H^*(\lambda),f_H'(\lambda))$,
\begin{align}
\mathbb{E}_{\lambda^{\step{k}}}\!\left[ \|\bm{\theta}_{\lambda^{\step{k}}} - \hat{\bm{\theta}}^{\step{k}}\|^2 \right]
\leq\,&
\E\!\left[\left|\hat{h}_{1,n^{\step{k}}}-\E[H]\right|^2\right]
+\E\!\left[\left|\hat{h}_{2,n^{\step{k}}}-\E[H^2]\right|^2\right] \notag \\
&
{}+
\E\!\left[\left|\hat{h}_{3,n^{\step{k}}}(\lambda)-\E[e^{-\lambda H}]\right|^2\right]
+\E\!\left[\left|\hat{h}_{4,n^{\step{k}}}(\lambda)-\E[-He^{-\lambda H}]\right|^2\right]. \label{eq:trian}
\end{align}
Therefore, it suffices to verify the required condition for each term on the right-hand side separately. Since the observations $(H_i)_{i\geq 1}$ are i.i.d., we have for any $n\in{\mathbb N}$,
\begin{equation}\label{eq:sigma_H}
\E\left[\left|\frac{1}{n}\sum_{i=1}^n \varphi(H_i)-\E[\varphi(H)]\right|^2\right]
=\frac{\Var\!\left(\varphi(H)\right)}{n},
\end{equation}
for any square-integrable function $\varphi(\cdot)$. Moreover, if $\E[H^4]<\infty$, then 
\begin{align}
\tilde{C}_2
:=
\sup_{\lambda\in[\lambda_{\min},\lambda_{\max}]}
\max\left\{
\Var(H),\,
\Var(H^2),\,
\Var(e^{-\lambda H}),\,
\Var(-He^{-\lambda H})
\right\}
<\infty.
\end{align}
Define the event
\begin{align*}
A_\epsilon
:={}&
\left\{
\frac{1}{n^{\step{k}}}\sum_{i=1}^{n^{\step{k}}} H_i
>
\frac{1}{\lambda_{\max}}-\epsilon
\right\}
\cup
\left\{
\frac{1}{n^{\step{k}}}\sum_{i=1}^{n^{\step{k}}} H_i^2
>
K-\epsilon
\right\} \cup
\left\{
\frac{1}{n^{\step{k}}}\sum_{i=1}^{n^{\step{k}}}
e^{-\lambda_{\max}H_i}
<
\epsilon
\right\},
\end{align*}
and observe that the estimator is truncated only on the event \(A_\epsilon\). By Boole's inequality,
\begin{align*}
\E[\mathbf{1}\{A_\epsilon\}]
&\leq 
\mathbb{P}\left(
\frac{1}{n^{\step{k}}}\sum_{i=1}^{n^{\step{k}}}H_i>\frac{1}{\lambda_{\max}}-\epsilon
\right)
+\mathbb{P}\left(
\frac{1}{n^{\step{k}}}\sum_{i=1}^{n^{\step{k}}}H_i^2>K-\epsilon
\right)\\
&\quad+
\mathbb{P}\left(
\frac{1}{n^{\step{k}}}\sum_{i=1}^{n^{\step{k}}}e^{-\lambda_{\max}H_i}<\epsilon
\right).
\end{align*}
We proceed by identifying upper bound on the three probabilities in the right-hand side of the previous display, thus yielding an upper bound on $\E[\mathbf{1}\{A_\epsilon\}]$. 
\begin{itemize}
\item By \eqref{eq:epsbound} we have $2\epsilon<\lambda_{\max}^{-1}-\E[H]$, which is equivalent to $\E[H]+2\epsilon<\lambda_{\max}^{-1}$. Therefore,
\begin{align*}
\mathbb{P}\left(\frac{1}{n^{\step{k}}}\sum_{i=1}^{n^{\step{k}}}H_i>\frac{1}{\lambda_{\max}}-\epsilon\right)\leq 
\mathbb{P}\left(\left| \frac{1}{n^{\step{k}}}\sum_{i=1}^{n^{\step{k}}}H_i-\E[H]\right|> \epsilon\right).
\end{align*}
Applying Chebyshev's inequality and \eqref{eq:sigma_H} yields
\begin{align*}
\mathbb{P}\left(\frac{1}{n^{\step{k}}}\sum_{i=1}^{n^{\step{k}}}H_i>\frac{1}{\lambda_{\max}}-\epsilon\right)\leq \frac{\Var(H)}{\epsilon^2 n^{\step{k}}}\leq \frac{\tilde{C}_2}{\epsilon^2 n^{\step{k}}}.
\end{align*}
\item Similarly, \eqref{eq:epsbound} yields $\E[H^2]<K-2\epsilon$, so that
\begin{align*}
\mathbb{P}\left(\frac{1}{n^{\step{k}}}\sum_{i=1}^{n^{\step{k}}}H_i^2>K-\epsilon\right)\leq \frac{\Var[H^2]}{\epsilon^2 n^{\step{k}}}\leq \frac{\tilde{C}_2}{\epsilon^2 n^{\step{k}}},
\end{align*}
\item Finally, $\epsilon<{f_H^*(\lambda_{\max})}/{2}$ implies $\epsilon<f_H^*(\lambda_{\max})-\epsilon$, and
\begin{align*}
\mathbb{P}\left(\frac{1}{n^{\step{k}}}\sum_{i=1}^{n^{\step{k}}}e^{-\lambda_{\max} H_i}<\epsilon\right)&\leq \mathbb{P}\left(\left|\frac{1}{n^{\step{k}}}\sum_{i=1}^{n^{\step{k}}}e^{-\lambda_{\max} H_i}-f_H^*(\lambda_{\max})\right|>\epsilon\right)\\&\leq \frac{\Var[e^{-\lambda_{\max} H}]}{\epsilon^2 n^{\step{k}}}\leq \frac{\tilde{C}_2}{\epsilon^2 n^{\step{k}}}.
\end{align*}
\end{itemize}

Observe that for $j=1,2$,
with ${\mathscr C}_j$ being an upper bound on both $\hat{h}_{j,n^{\step{k}}}^2$ and $\theta_j^2$, the standard inequality $(x-y)^2\le 2(x^2+y^2)$ yields
\begin{align}
\E\left[|\hat{h}_{j,n^{\step{k}}}-\theta_j|^2\right]&=\E[|\hat{h}_{j,n^{\step{k}}}-\theta_j|^2\mathbf{1}\{A_\epsilon\}]+\E[|\hat{h}_{j,n^{\step{k}}}-\theta_j|^2\mathbf{1}\{{A}^{\rm c}_\epsilon\}] \notag\\
&\leq 4\,{\mathscr C}_j\,\E[\mathbf{1}\{A_\epsilon\}]+\E\left[\left|\frac{1}{n^{\step{k}}}\sum_{i=1}^{n^{\step{k}}}H_i^j-\E[H^j]\right|^2\right].\label{eq:BND}
\end{align}

We now identify an upper bound on the first term in the right hand side of \eqref{eq:trian}. 
By \eqref{eq:sigma_H},
\[\E\left[\left|\frac{1}{n^{\step{k}}}\sum_{i=1}^{n^{\step{k}}}H_i-\E[H]\right|^2\right]\le \frac{\Var(H)}{n^{\step{k}}}\le\frac{\tilde {C}_2}{n^{\step{k}}}.\]
Because $\smash{\hat{h}_{1,n^{\step{k}}}\le \lambda_{\max}^{-1} -\epsilon \le \lambda_{\max}^{-1}}$ and $\theta_1=\smash{\E[H]\le \lambda_{\max}^{-1} -2\epsilon \le \lambda_{\max}^{-1}}$, we obtain from \eqref{eq:BND} that
\begin{align*}
\E\left[|\hat{h}_{1,n^{\step{k}}}-\E[H]|^2\right]
&\leq {\frac{4}{\lambda^2_{\max}}}\E[\mathbf{1}\{A_\epsilon\}]+\E\left[\left|\frac{1}{n^{\step{k}}}\sum_{i=1}^nH_i-\E[H]\right|^2\right],
\end{align*}
so that
\[\E\left[|\hat{h}_{1,n^{\step{k}}}-\E[H]|^2\right]\le \frac{\tilde{C}_2}{n^{\step{k}}}\left(1+\frac{6}{\lambda_{\max}^2\epsilon^2}\right).\]
We proceed by analyzing the second term in the right hand side of \eqref{eq:trian}. Following the exact same procedure, noting that both $\smash{\hat{h}_{2,n^{\step{k}}}}$ and $\theta_2=\E[H^2]$ are majorized by $K$, again by \eqref{eq:BND},
\[ \E\left[|\hat{h}_{2,n^{\step{k}}}-\E[H^2]|^2\right]\le \frac{\tilde{C}_2}{n^{\step{k}}}\left(1+\frac{6K^2}{\epsilon^2}\right).\]

Then we consider the third term in the right-hand side of \eqref{eq:trian}. Now note that both $\smash{\hat{h}_{3,n^{\step{k}}}(\lambda)}$ and $\E[e^{-\lambda H}]$ are bounded above by 1. Hence,
\begin{align*}
\E\left[|\hat{h}_{3,n^{\step{k}}}(\lambda)-\E[e^{-\lambda H}]|^2\right]&\leq  \frac{\tilde{C}_2}{n^{\step{k}}}\left(1+\frac{6}{\epsilon^2}\right).
\end{align*}
For the remaining term, directly from \eqref{eq:sigma_H}, we obtain
\begin{align*}
\E\left[|\hat{h}_{4,n^{\step{k}}}(\lambda)-\E[-He^{-\lambda H}]|^2\right]&\leq \frac{\tilde{C}_2}{n^{\step{k}}}.
\end{align*}
We conclude that with 
\begin{align*}
C_2:={\tilde{C}_2}\left(4+\frac{6}{\epsilon^2}\left(\lambda_{\max}^{-2}+K^2+1\right)\right)
\end{align*}
the stated assertion \eqref{eq:theta_mixing} holds. 
\end{proof}

\begin{proof}[Proof of Lemma~\ref{lem:g_lip_bg}]
Since the service-time distribution is known, the functions
$\zeta(\lambda,\lambda_{\bg})$ and
$\zeta'(\lambda,\lambda_{\bg})$ are uniquely determined by
$\lambda_{\bg}$ through the fixed-point equation
\[
    \zeta(\lambda,\lambda_{\bg})
    =
    f_H^*(\lambda+\lambda_{\bg}-\lambda_{\bg}\zeta(\lambda,\lambda_{\bg}))
\]
and Equation \eqref{eq:zeta_lambda_prime}. Hence
$g^\circ(\lambda;\lambda_{\bg})$ is a continuously differentiable
function of the single parameter $\lambda_{\bg}$.

We first establish that all denominators appearing in
$g^\circ$ are uniformly bounded away from zero. Since
$\lambda\leq\lambda_{\max}$ and
$\lambda_{\bg}\in\Theta_\epsilon$,
\[
    \lambda+\lambda_{\bg}
    \leq
    \lambda_{\max}
    +
    \left(\frac{1}{\E[H]}-\epsilon-\lambda_{\max}\right)
    =
    \frac{1}{\E[H]}-\epsilon.
\]
Therefore,
\[
    1-(\lambda+\lambda_{\bg})\E[H]
    \geq
    \epsilon\E[H]>0.
\]
Moreover, since $\lambda\geq\lambda_{\min}>0$, the factors
$\lambda$ and $\lambda^2$ are uniformly bounded away from zero.
It thus remains to consider $\zeta(\lambda,\lambda_{\bg})$.  
With 
$a(\lambda,\lambda_{\bg})
:=
\lambda+\lambda_{\bg}-\lambda_{\bg}\zeta(\lambda,\lambda_{\bg})$, we have that $0\leq\zeta\leq1$ implies that
\[
    0\leq a(\lambda,\lambda_{\bg})
    \leq
    \lambda+\lambda_{\bg}
    \leq
    \frac{1}{\E[H]}-\epsilon .
\]
The Laplace transform $f_H^*$ is decreasing and strictly positive on
bounded intervals, and hence
\[
    \zeta(\lambda,\lambda_{\bg})
    =
    f_H^*(a(\lambda,\lambda_{\bg}))
    \geq
    f_H^*\left(\frac{1}{\E[H]}-\epsilon\right)
    =:c_\zeta>0 .
\]
Thus $1/\zeta$ and $1/\zeta^2$ are uniformly bounded.
Furthermore, by the stability condition, the denominator in Equation \eqref{eq:zeta_lambda_prime} is uniformly
bounded away from zero, while
\[
    |f_H'(x)|
    =
    \E[He^{-xH}]
    \leq
    \E[H].
\]
Hence $\zeta'$ is uniformly bounded on
$\mathcal L\times\Theta_\epsilon$.
Similarly, by differentiating the fixed-point equation and (29) with respect to
$\lambda_{\mathrm{bg}}$ and noting that
\[
    \left|f_H''(x)\right|
    = \E[H^2 e^{-xH}]
    \le \sup_{y \ge 0} y^2 e^{-\lambda_{\min}y}
    < \infty,
\]
we have that $\partial \zeta/\partial \lambda_{\mathrm{bg}}$
and $\partial \zeta'/\partial \lambda_{\mathrm{bg}}$
are also uniformly bounded on $\mathcal L\times\Theta_\epsilon$.
Consequently,
\[
    \sup_{\lambda\in\mathcal L}
    \sup_{\lambda_{\bg}\in\Theta_\epsilon}
    \left|
    \frac{\partial}{\partial\lambda_{\bg}}
    g^\circ(\lambda;\lambda_{\bg})
    \right|
    <\infty .
\]
Applying the mean value theorem yields a constant $C_1>0$ such that
\[
    \left|
    g^\circ(\lambda;\lambda_{\bg})
    -
    g^\circ(\lambda;\lambda_{\bg}')
    \right|
    \leq
    C_1|\lambda_{\bg}-\lambda_{\bg}'|,
\]
for all $\lambda\in\mathcal L$ and
$\lambda_{\bg},\lambda_{\bg}'\in\Theta_\epsilon$.
\end{proof}

    Prior to the proof of Lemma~\ref{lem:theta_BG} we establish an intermediate result on the mixing properties of the sequence $\{D_i\}_{i\in{\mathbb N}}$. This is a standard result but we did not find a direct reference and therefore verify the necessary steps here.
    \begin{lemma}\label{lem:D_mixing}
    Suppose $\E[H^2]<\infty$ and that $D_0=d_0\leq w$. 
    \begin{itemize}
    \item[(i)] For any measurable function $f:[0,\infty)\times\mathcal{L}\to[0,1]$ there exists a positive constant $K_1<\infty$ such that, for all $n\in{\mathbb N}$,
    \begin{equation}   \label{eq:D_mixing}     
    \sup_{\lambda\in\mathcal{L}}\sup_{d_0\leq w}
    \E_{\lambda}\left[
    \left|\frac{1}{n}\sum_{i=1}^n
    \left(f(D_i,\lambda)-\E_\lambda[f(D,\lambda)]\right)\right|^2
    \, \Big|\,  D_0=d_0
    \right]
    \leq \frac{K_1}{n} .
    \end{equation}
    \item[(ii)] Consider the parameterized class of functions $\mathcal{F}=\{f_\ell:[0,\infty)\times\mathcal{L}\to[0,1]\ : \ell\in[\ell_1,\ell_2]\}$. If for every $d\geq0$, $f_\ell(d,\lambda)$ is absolutely continuous (with respect to $\ell)$ and 
    \begin{align*} 
    \bar{m}:= \sup_{\ell\in[\ell_1,\ell_2],\ d\geq 0,\ \lambda\in\mathcal{L}}|\frac{\dd }{\dd \ell}f_\ell(d,\lambda)|<\infty , 
    \end{align*}
    then there exists a constant $K_2$ such that
    \begin{equation}   \label{eq:D_mixing_2}     
    \sup_{\lambda\in\mathcal{L}}\sup_{d_0\leq w}
    \E_\lambda\left[
    \sup_{\ell\in[\ell_1,\ell_2]}
    \left|
    \frac{1}{n}\sum_{i=1}^n
    \left(f_\ell(D_i,\lambda)-\E_\lambda[f_\ell(D,\lambda)]\right)
    \right|^2
    \, \Big|\,  D_0=d_0
    \right]
    \leq \frac{K_2}{n}. 
    \end{equation}
    \end{itemize}
    \end{lemma}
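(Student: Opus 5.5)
The plan is to treat the tagged-delay sequence $(D_i)_{i\ge0}$ as a Markov chain and bound the variance of the ergodic average through a covariance summation. The state will be the workload $V_i$ seen just after the $i$-th tagged arrival, so $D_i=V_i$ in FCFS. This process is Markov: $V_{i+1}=\max\{V_i-G_{i+1},0\}+W_{i+1}+H_{i+1}$, where $W_{i+1}$ is the background work arriving during $G_{i+1}$. Writing $\bar f=f-\E_\lambda[f(D,\lambda)]$ with $|\bar f|\le1$, we have
\[
\E\Bigl[\Bigl|\tfrac1n\textstyle\sum_{i=1}^n\bar f(D_i)\Bigr|^2\Bigr]\le\tfrac{1}{n^2}\sum_{i,j}\bigl|\E[\bar f(D_i)\bar f(D_j)]\bigr|.
\]
It therefore suffices to show $\sum_{j\ge i}|\E[\bar f(D_i)\bar f(D_j)]|\le K$ uniformly in $i$, in $\lambda\in\mathcal L$ and in $d_0\le w$.

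The tool is coupling via regeneration: the workload empties during $G$ with positive probability. Run a copy of the chain from stationarity alongside the original, driven by the same arrival and service sequences. The two delay processes coincide after the first time both workloads have hit zero. Because both copies see the same inputs and the Lindley map is monotone, this happens no later than the first time the copy started from the larger initial value empties. Conditionally on $\mathcal F_i$, this gives
\[
|\E[\bar f(D_j)\mid \mathcal F_i]|\le 2\,\mathbb P(\tau(D_i)>j-i\mid D_i)+2\,\mathbb P(\tau(D^{\rm st})>j-i).
\]
Here $\tau(x)$ is the emptying (coupling) time from workload $x$. Summing over $j$ bounds the covariance sum by $2\E[\tau(D_i)]+2\E[\tau(D^{\rm st})]$, measured in tagged-arrival counts.

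The main step is to show $\sup_{i,\lambda,d_0\le w}\E[\tau(D_i)]<\infty$, and this is where I expect the difficulty. I would use a Foster–Lyapunov drift argument with the linear function $V(x)=x$. Under the uniform stability margin $(\lambda+\lambda_{\bg})\E[H]\le1-\epsilon\E[H]$, the workload satisfies $\E[V_{i+1}-V_i\mid V_i=x]\le-\delta$ for $x$ beyond some level $x_0$, with $\delta>0$ uniform in $\lambda\in\mathcal L$. From below $x_0$, the probability of emptying within a bounded number of steps is also uniformly positive. Dynkin's formula then gives $\E[\tau(x)]\le c_1x+c_2$.

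It remains to bound $\E[D_i]$ uniformly in $i$ when starting from $d_0\le w$, and $\E[D^{\rm st}]$. Both follow from $\E[H^2]<\infty$. The stationary mean is Pollaczek–Khinchine, uniformly bounded on $\mathcal L$ thanks to the margin. The transient mean is dominated by the stationary chain plus the initial value, by monotone coupling: $D_i\le\max\{$stationary-started copy, copy started at $w\}$, and the latter is at most $w$ plus the stationary copy. This yields $K_1$ in (i). The recovery step in Algorithm~\ref{alg:background} is exactly what guarantees the starting workload satisfies $d_0\le w$.

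For (ii), I would discretize $[\ell_1,\ell_2]$ on a grid of mesh $n^{-1/2}$, using the Lipschitz bound $\bar m$. This gives
\[
\sup_\ell|\cdot|\le\max_{\text{grid}}|\cdot|+2\bar m n^{-1/2}.
\]
A plain union bound over $\sim n^{1/2}$ grid points loses a factor, so instead I would use a chaining or Sobolev-type bound. Write $S_n(\ell)=S_n(\ell_1)+\int_{\ell_1}^{\ell}\partial_u S_n(u)\,\dd u$. By Cauchy–Schwarz, $\sup_\ell|S_n(\ell)|^2\le2|S_n(\ell_1)|^2+2(\ell_2-\ell_1)\int|\partial_uS_n(u)|^2\dd u$. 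The derivative $\partial_u f_u$ is bounded by $\bar m$, so applying (i) to it (after rescaling into $[0,1]$; the argument of (i) only used boundedness) bounds each integrand's expectation by $K_1'/n$. Integrating gives $K_2/n$.
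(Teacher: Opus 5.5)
Your part (ii) is the paper's argument: the identity $Z_n(\ell)=Z_n(\ell_1)+\int_{\ell_1}^{\ell}Z_n'(u)\,\dd u$, Cauchy--Schwarz, and (i) applied to the derivative rescaled into $[0,1]$. That last step is legitimate because in both versions of (i) the constant does not depend on $f$.

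For (i) you take a genuinely different route, and it is correct.

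The paper works with regeneration cycles. It splits the sum into cycles delimited by tagged arrivals that find the system empty. The centred cycle sums $U_j$, $j\ge 2$, are i.i.d.\ with mean zero, and $\{T_n\ge j\}$ is determined by earlier cycles, so the cross terms vanish. The variance is then at most $(n-1)\E[N^2]$ plus first-cycle and residual terms, which gives $K_1=6m_w$. The one input it needs is the uniform finiteness of the second moment $m_w$ of the cycle length started from $d_0\le w$, and the paper asserts this rather than derives it.

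You instead work with the Markov chain $(D_i)$. You sum its covariances and control them by the expected coupling time with a stationary copy driven by common inputs. You bound that coupling time through three ingredients:
\begin{itemize}
\item a linear Foster--Lyapunov drift, giving $\E[\tau(x)]\le c_1x+c_2$ with constants uniform over $\mathcal L$ thanks to the load margin;
\item a geometric-trials argument for emptying from $[0,x_0]$;
\item the bound $\E[D_i]\le w+\E[D^{\mathrm{st}}]$ obtained by monotone coupling, together with the Pollaczek--Khinchine mean.
\end{itemize}
This costs more machinery. In exchange, it makes explicit exactly where $\E[H^2]<\infty$ and the uniformity in $\lambda$ enter, and it effectively supplies the justification the paper omits. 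Morally both proofs hinge on the same fact: a finite mean residual time to emptying from stationarity corresponds to a finite second moment of the cycle length. One point to watch in writing it up: the emptying happens in continuous time between tagged arrivals, so $\tau$ is not a hitting time of $(D_i)$ itself. Your geometric-trials step handles this, but you should say so.

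One small slip. With background traffic, $V_{i+1}=\max\{V_i-G_{i+1},0\}+W_{i+1}+H_{i+1}$ is not an identity, because background work arriving early in $G_{i+1}$ can be cleared before the next tagged arrival. It is only an upper bound on the reflected workload map. That upper bound is all your drift computation uses, and the Markov property holds for the true dynamics, so the argument is unaffected.
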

    \begin{proof}
    (i)~Let $W_i$ denote the waiting time in the queue of tagged packet $i$. Then
    $D_i=W_i+H_i$. The sequence $(H_i)_{i\in\mathbb N}$ is i.i.d., while
    $(W_i)_{i\in\mathbb N}$ is regenerative, with regeneration epochs given
    by arrivals that find the queue empty. Hence, defining $X_i:=f(D_i,\lambda)$,
    the sequence $(X_i)_{i\in\mathbb N}$ is also regenerative, with the same
    regeneration epochs. Let $N_j$ denote the length of regeneration cycle $j$,
    and define
    \begin{align*}
    C_1&:=\left\{1, 2, \ldots, N_1\right\},\\
    C_j&:=\left\{\sum_{k=1}^{j-1}N_k+1,
    \sum_{k=1}^{j-1}N_k+2, \ldots, 
    \sum_{k=1}^{j}N_k\right\},
    \qquad j=2,3,\ldots,\\
    U_j&:=\sum_{i\in C_j}\left(X_i-\E_\lambda[X]\right).
    \end{align*}
    If $D_0\leq w$, then the initial workload is also bounded, since
    $W_0\leq D_0\leq w$. The distribution of the initial cycle
    $(N_1,U_1,C_1)$ may therefore depend on $D_0$, whereas the subsequent
    cycles $(N_2,U_2,C_2), (N_3,U_3,C_3),\ldots$  are i.i.d.\ and independent of the
    initial state.

    Since we assume $0 \leq f(d,\lambda) \leq1$ ($d \geq 0$), we have $0\leq X_i\leq1$ and
    $0\leq\E_\lambda[X]\leq1$. It then follows that
    $|U_j|\leq N_j$ and $\E_\lambda[U_j^2]\leq\E_\lambda[N_j^2]$.
    Moreover, $\E_\lambda[U_j]=0$ for $j=2,3,\ldots$.

    For the $n$-th observation, write $T_n$ for the index of the regeneration
    cycle containing it. Letting
    \[
        R_n:=\sum_{\substack{i\in C_{T_n}\\i>n}}
        \left(X_i-\E_\lambda[X]\right)
    \]
    denote the centered reward remaining until the end of that cycle, we have
    \[
        \sum_{i=1}^n\left(X_i-\E_\lambda[X]\right)
        =\sum_{j=1}^{T_n}U_j-R_n,
        \qquad |R_n|\leq N_{T_n}.
    \]
    For $j\geq2$, the cycle lengths $N_j$ are i.i.d.; write $N$ for their
    common distribution. Since $T_n\leq n$,
    \[
        \sum_{j=2}^{T_n}U_j
        =\sum_{j=2}^nU_j\mathbf{1}_{\{T_n\geq j\}},
        \qquad
        \{T_n\geq j\}
        =\left\{\sum_{r=1}^{j-1}N_r<n\right\}.
    \]
    Hence, $\mathbf{1}_{\{T_n\geq j\}}$ is determined by the regeneration
    cycles preceding cycle $j$. Thus, for $2\leq j<k\leq n$,
    \begin{align*}
    \E_\lambda\!\left[
    U_jU_k\mathbf{1}_{\{T_n\geq j\}}
    \mathbf{1}_{\{T_n\geq k\}}
    \,\middle|\,D_0=d_0\right]
    &=
    \E_\lambda\!\left[
    U_jU_k\mathbf{1}_{\{T_n\geq k\}}
    \,\middle|\,D_0=d_0\right]=0,
    \end{align*}
    where the first equality uses
    $\{T_n\geq k\}\subseteq\{T_n\geq j\}$. For the second equality, note that
    $\{T_n\geq k\}$ is determined by the cycles preceding cycle $k$, whereas
    $U_k$ is independent of these cycles and satisfies
    $\E_\lambda[U_k]=0$. Consequently,
    \[
        \E_\lambda\!\left[
        \left(\sum_{j=2}^{T_n}U_j\right)^2
        \,\bigg|\,D_0=d_0\right]
        =
        \sum_{j=2}^n
        \E_\lambda\!\left[
        U_j^2\mathbf{1}_{\{T_n\geq j\}}
        \,\middle|\,D_0=d_0\right]
        \leq(n-1)\E_\lambda[N^2].
    \]

    Let
    \begin{align*}
    m_w:=
    \sup_{\lambda\in\mathcal{L}}\sup_{d_0\leq w}
    \E_\lambda\left[N_1^2\, \Big|\, D_0=d_0\right].
    \end{align*}
    By a coupling argument, the length of an ordinary regeneration cycle is
    stochastically dominated by that of the first cycle when $W_0=w$. Hence, as $W_0\leq D_0\leq w$,
    \[
        \sup_{\lambda\in\mathcal L}\E_\lambda[N^2]\leq m_w.
    \]
    Furthermore, since $|R_n|\leq N_{T_n}$ and $T_n\leq n$,
    \begin{align*}
    \E_\lambda\!\left[R_n^2\,\middle|\,D_0=d_0\right]
    &\leq
    \E_\lambda\!\left[N_{T_n}^2\,\middle|\,D_0=d_0\right]\leq
    \E_\lambda\!\left[N_1^2\,\middle|\,D_0=d_0\right]
    +(n-1)\E_\lambda[N^2].
    \end{align*}
    Combining the above bounds with
    $(a-b)^2\leq2a^2+2b^2$ yields, uniformly over $\lambda\in\mathcal L$ and
    $d_0\leq w$,
    \begin{align*}
    \E_\lambda\!\left[
    \left|
    \frac{1}{n}\sum_{i=1}^n
    \left(X_i-\E_\lambda[X]\right)
    \right|^2
    \,\middle|\,D_0=d_0\right]
    &\leq \frac{6m_w}{n}.
    \end{align*}
    Since $m_w<\infty$, \eqref{eq:D_mixing} follows with $K_1=6m_w$.

    \smallskip
    \noindent (ii)~Define
    \[
        Z_n(\ell):=\frac{1}{n}\sum_{i=1}^n f_\ell(D_i,\lambda)-\E_\lambda[f_\ell(D,\lambda)].
    \]
    Due to $Z_n(\ell)=Z_n(\ell_1)+\int_{\ell_1}^{\ell}Z_n'(u)\,\dd u$, Cauchy--Schwarz gives
    \[
        \sup_{\ell\in[\ell_1,\ell_2]}|Z_n(\ell)|^2
        \leq
        2|Z_n(\ell_1)|^2+
        2(\ell_2-\ell_1)\int_{\ell_1}^{\ell_2}|Z_n'(u)|^2\,\dd u.
    \]
    Since, by assumption, $\left|\frac{\dd}{\dd\ell}f_\ell(d,\lambda)\right|\leq\bar m$, dominated
    convergence gives
    \[
        Z_n'(u)
        =
        \frac{1}{n}\sum_{i=1}^n\frac{\dd}{\dd u}f_u(D_i,\lambda)
        -\E_\lambda\!\left[\frac{\dd}{\dd u}f_u(D,\lambda)\right].
    \]
    Applying part~(i) to
    $g_u(d):=(\frac{\dd}{\dd u}f_u(d,\lambda)+\bar m)/(2\bar m)\in[0,1]$
    therefore yields, uniformly in $u\in[\ell_1,\ell_2]$ and
    $\lambda\in\mathcal L$,
    \[
        \E_\lambda\!\left[|Z_n'(u)|^2\,\middle|\,D_0=d_0\right]
        \leq\frac{4\bar m^2K_1}{n}.
    \]
    Together with part~(i) applied to $f_{\ell_1}$, this gives
    \[
        \E_\lambda\!\left[
        \sup_{\ell\in[\ell_1,\ell_2]}|Z_n(\ell)|^2
        \,\middle|\,D_0=d_0\right]
        \leq
        \frac{2K_1}{n}
        +\frac{8(\ell_2-\ell_1)^2\bar m^2K_1}{n},
    \]
    which completes the proof.
    \end{proof}

\begin{proof}[Proof of Lemma~\ref{lem:theta_BG}]
\noindent \noindent It is observed that the estimator $\hat{\lambda}_{\bg}$ is a
     $Z$-estimator truncated at zero. Define $A_0:=\{\Psi_n(0)<0\}$. We use the decomposition
     \begin{align*}
     \E|\hat{\lambda}_{\bg}-\lambda_{\bg}|^2
     &=\E\!\left[|\hat{\lambda}_{\bg}-\lambda_{\bg}|^2\mathbf{1}_{A_0}\right]
     +\E\!\left[|\hat{\lambda}_{\bg}-\lambda_{\bg}|^2\mathbf{1}_{A_0^c}\right].
     \end{align*}
     On $A_0$, $\hat{\lambda}_{\bg}=0$. Thus the first term vanishes when
     $\lambda_{\bg}=0$, while for $\lambda_{\bg}>0$ it equals
     $\lambda_{\bg}^2\mathbb{P}(A_0)$, where
     \[
         \mathbb{P}(A_0)
         =
         \mathbb{P}\!\left(
         \frac{1}{n}\sum_{i=1}^n e^{-\lambda D_i}
         >1-\lambda\E[H]\right).
     \]
     Define $w(x):=1-(\lambda+x)\E[H]-\E[e^{-(\lambda+x)D}]$ on
     $[-\lambda,\lambda_{\bg}]$. Then $w(-\lambda)=w(\lambda_{\bg})=0$,
     $w''(x)=-\E[D^2e^{-(\lambda+x)D}]<0$, and
     $w(0)=1-\lambda\E[H]-\E[e^{-\lambda D}]>0$. Moreover, with
     $c_0:=\tfrac{1}{2}\,{\lambda_{\min}}\inf_{\lambda\in\mathcal L}
     \E[D^2e^{-(\lambda+\lambda_{\bg})D}]$, the concavity argument gives
     $w(0)\geq c_0\lambda_{\bg}$. Consequently,
     \[
         \mathbb{P}(A_0)
         \leq
         \mathbb{P}\!\left(
         \Bigg|
         \frac{1}{n}\sum_{i=1}^n e^{-\lambda D_i}
         -\E[e^{-\lambda D}]
         \Bigg|>c_0\lambda_{\bg}\right).
     \]
     Since $\E[H^2]<\infty$, part~(i) of Lemma~\ref{lem:D_mixing} and Chebyshev's inequality give
     \begin{equation}\label{eq:truncation_error_bound}
     \E\!\left[|\hat{\lambda}_{\bg}-\lambda_{\bg}|^2\mathbf{1}_{A_0}\right]
     \leq \frac{K_1}{c_0^2n}.
     \end{equation}

     On $A_0^c$, $\hat{\lambda}_{\bg}$ is an interior solution that satisfies
     $\Psi_n(\hat{\lambda}_{\bg})=0$. For $\ell\leq\lambda_{\bg}$, the
     concavity argument gives $|\lambda_{\bg}-\ell|
     \leq |w(\ell;\lambda,\lambda_{\bg})|/c_0$, while for
     $\ell\geq\lambda_{\bg}$, monotonicity of $w'$ gives the same bound with
     $c_1:=\inf_{\lambda\in\mathcal L}
     \{\E[H]-\E[De^{-(\lambda+\lambda_{\bg})D}]\}$. Thus, with
     $c:=\min\{c_0,c_1\}$,
     $|\lambda_{\bg}-\ell|\leq c^{-1}|w(\ell;\lambda,\lambda_{\bg})|$
     for $\ell\in\Theta_\epsilon$. Taking $\ell=\hat{\lambda}_{\bg}$ and
     using $\Psi_n(\hat{\lambda}_{\bg})=0$ yields
     \[
         |\hat{\lambda}_{\bg}-\lambda_{\bg}|
         \leq
         \frac{1}{c}
         \left|
         \frac{1}{n}\sum_{i=1}^n e^{-(\lambda+\hat{\lambda}_{\bg})D_i}
         -\E[e^{-(\lambda+\hat{\lambda}_{\bg})D}]
         \right|.
     \]
     Let $f_\ell(d,\lambda)=e^{-\ell d}\in[0,1]$. By part~(ii)  of Lemma~\ref{lem:D_mixing},
     \begin{align*}
     \E\!\left[
     |\hat{\lambda}_{\bg}-\lambda_{\bg}|^2\mathbf{1}_{A_0^c}
     \right]
     &\leq
     \frac{1}{c^2}
     \E\!\left[
     \sup_{\ell\in\Theta_\epsilon}
     \left|
     \frac{1}{n}\sum_{i=1}^n e^{-(\lambda+\ell)D_i}
     -\E[e^{-(\lambda+\ell)D}]
     \right|^2
     \right]\leq \frac{K_2}{c^2n}.
     \end{align*}
     Combining \eqref{eq:truncation_error_bound} and the preceding bound gives
     \[
         \E|\hat{\lambda}_{\bg}-\lambda_{\bg}|^2
         \leq
         \left(
         \frac{K_1}{c_0^2}
         +\frac{K_2}{\min\{c_0,c_1\}^2}
         \right)\frac{1}{n},
     \]
     which implies the stated claim.
\end{proof}

\section{Delta-method interval estimates for Example 1}
\label{app:delta}

This appendix derives the confidence intervals used in Subsection~\ref{sec:sim_A}. We omit the iteration index $k$ throughout; $\lambda$ is the fixed rate and $n$ denotes the number of informative packets observed. Because the service times are recovered exactly from the observed timestamps by inverting the Lindley recursion, the sample available at a decision epoch is i.i.d., and $\hat{\bm{\theta}}^{\step{k}}$ is a vector of sample means of
\begin{equation*}
{\boldsymbol Z}_i(\lambda) := \bigl(H_i,\; H_i^2,\; e^{-\lambda H_i},\; -H_i e^{-\lambda H_i}\bigr)^{\!\top}.
\label{eq:Z-A}
\end{equation*}
Provided $\E[H^4]<\infty$, the multivariate central limit theorem yields
\begin{equation*}
\sqrt{n}\,\bigl(\hat{\bm{\theta}}-{\bm{\theta}}_\lambda\bigr) \darrow \mathrm{N}\bigl(\mathbf{0},\,\Sigma(\lambda)\bigr),
\qquad
\Sigma_{jk}(\lambda) := \mathrm{Cov}\bigl(Z_{j,1}(\lambda),Z_{k,1}(\lambda)\bigr).
\label{eq:CLT-theta-A}
\end{equation*}
By the argument in the proof of Lemma~\ref{lem:theta_FD},
$\mathbb P(\hat{\bm\theta}\notin\Theta_\epsilon)\to0$, so the truncation is
asymptotically inactive.
Since ${\bm\theta}\mapsto g(\lambda;{\bm\theta})$ is smooth on $\Theta_\epsilon$, the delta method yields
\begin{equation*}
\sqrt{n}\,\bigl(g(\lambda;\hat{\bm{\theta}})-g(\lambda;{\bm{\theta}}_\lambda)\bigr) \darrow \mathrm{N}\bigl(0,\,\sigma_g^2(\lambda)\bigr),
\quad
\sigma_g^2(\lambda) := J(\lambda)^{\!\top}\Sigma(\lambda)J(\lambda),
\label{eq:CLT-g-A}
\end{equation*}
with $J\equiv J(\lambda):=\nabla_{\bm\theta}\,g(\lambda;{\bm\theta}_\lambda)$. Let ${\bm\theta}_\lambda:=(h_1,h_2,h_3,h_4)$, then differentiating \eqref{eq:psi_grad-0} yields
\begin{equation*}
\begin{aligned}
J_1 &= \frac{\lambda h_2}{(1-\lambda h_1)^{3}}+\frac{h_4}{h_3^{2}},
&\quad
J_2 &= \frac{1}{2(1-\lambda h_1)^{2}},
\\[2pt]
J_3 &= \frac{2h_4}{\lambda h_3^{3}}+\frac{1}{\lambda^2 h_3^{2}}-\frac{2h_1h_4}{h_3^{3}},
&\quad
J_4 &= \frac{\lambda h_1-1}{\lambda h_3^{2}}.
\end{aligned}
\label{eq:J-A}
\end{equation*}
Replacing $\Sigma$ by the sample covariance of the ${\boldsymbol Z}_i(\lambda)$ and $J$ by $\hat J = \nabla_{\bm\theta}\,g(\lambda;\hat{\bm\theta})$ yields a consistent estimator $\hat\sigma_g^2(\lambda)$ and the asymptotic nominal $95\%$ interval
\begin{equation}
g\bigl(\lambda;\hat{\bm\theta}\bigr) \;\pm\; z_{0.975}\,\frac{\hat\sigma_g(\lambda)}{\sqrt{n}} .
\label{eq:CI-A}
\end{equation}
The same construction applied to $\E_\lambda[A]$ in \eqref{eq:EA} gives an interval for the objective, with Jacobian
\begin{equation*}
J^{A} = \Biggl(\frac{\lambda^2 h_2}{2(1-\lambda h_1)^{2}}+1-\frac{1}{h_3},\;
               \frac{\lambda}{2(1-\lambda h_1)},\;
               -\frac{1-\lambda h_1}{\lambda h_3^{2}},\; 0\Biggr)^{\!\top},
\label{eq:JA-A}
\end{equation*}
whose last coordinate vanishes because $\E_\lambda[A]$ does not involve $f_H'(\lambda)$.

\end{document}